\newcommand{\xmark}{\ding{55}}%
\newtheorem{definition}{Definition}
\newtheorem{theorem}{Theorem}
\newtheorem{lemma}{Lemma}
\newtheorem{corollary}{Corollary}
\theoremstyle{definition}
\newtheorem{remark}{Remark}
\newtheorem{proposition}{Proposition}
\let\NAT@parse\undefined
\newtheorem{innerexample}{Example}[section]
\newcommand{\abs}[1]{\left\lvert#1\right\rvert}
\newcommand{\R}{\mathbb{R}}
\newcommand{\ddt}[1]{\frac{\mathrm{d}#1}{\mathrm{d}t}}
\popQED\end{innerexample}}
\renewenvironment{proof}[1][Proof]{\par\pushQED{\qed}\normalfont\textit{#1.}~}{\popQED\par}
\def\BibTeX{{\rm B\kern-.05em{\sc i\kern-.025em b}\kern-.08em
T\kern-.1667em\lower.7ex\hbox{E}\kern-.125emX}}
\begin{document}

\title{Nonholonomic Robot Parking by Feedback--- Part I: Modular Strict CLF Designs}

\author{Velimir Todorovski$^{1}$, 
        Kwang Hak Kim$^{1}$, Alessandro Astolfi$^{2}$, 
        and Miroslav Krsti\'{c}$^{1}$
\thanks{This work was supported in part by the Office of Naval Research under Grant No. N00014-23-1-2376, in part by the Air Force Office of Scientific Research under Grant No. FA9550-23-1-0535, and in
part by the National Science Foundation under Grant No. ECCS-2151525. The results and opinions in this paper are solely of the authors and do not reflect the position or the policy of the U.S. Government or the National Science
Foundation.}%
\thanks{$^{1}$ V. Todorovski, K. Kim, and M. Krsti\'{c} are with the Department of Mechanical and Aerospace Engineering, UC San Diego, 9500 Gilman Drive, La Jolla, CA, 92093-0411, {\tt\small \{vtodorovski,kwk001,krstic\}@ucsd.edu}.}%
\thanks{$^{2}$ A. Astolfi is with  Imperial College London, London,
United Kingdom and the University of Rome Tor Vergata, Rome, Italy,
{\tt\small a.astolfi@imperial.ac.uk}}%
}

\maketitle

\begin{abstract}
It has been known in the robotics literature since about 1995 that, in polar coordinates, the nonholonomic unicycle is asymptotically stabilizable by smooth feedback, even globally. 
We introduce a modular design framework that selects the forward velocity to decouple the radial coordinate, allowing the steering subsystem to be stabilized independently. Within this structure, we develop families of feedback laws using passivity, backstepping, and integrator forwarding. Each law is accompanied by a strict control Lyapunov function, including barrier variants that enforce angular constraints. These strict CLFs provide constructive class KL convergence estimates and enable eigenvalue assignment at the target equilibrium. The framework generalizes and extends prior modular and nonmodular approaches, while preparing the ground for inverse optimal and adaptive redesigns in the sequel paper.

\end{abstract}


\allowdisplaybreaks




\section{Introduction}
Unicycle---the canonical nonholonomic system---and in particular its parking task, serves as a benchmark for nonlinear stabilization.
While controllable, Brockett~\cite{brockett1983asymptotic} establishes that the unicycle is not stabilizable by continuous static feedback, and Ryan \cite{ryan1994brockett}, as well as Coron and Rosier~\cite{coron1994relation}, that stabilization is impossible even if such a feedback is permitted to be discontinuous. 

\subsection{\em Resuts preview}  Studying stabilization in polar coordinates, with control by steering and bidirectional velocity, 
we introduce a modular methodology for the design of globally asymptotically stabilizing feedback for the unicycle. For four different state spaces, we produce  three families of design methods---passivity-based, backstepping, and one applied here for the first time for the unicycle: integrator forwarding. The design module for stabilization of the two polar angles by steering is independent from the design module for stabilization of the distance-to-target by velocity. 
The control Lyapunov functions (CLFs) we design are strict, global on their (non-Euclidean) state spaces, and are in fact barrier CLFs, preventing angle windup and state constraint violation.  


\subsection{Literature}

\paragraph{Designs in polar coordinates} While of the most recent vintage, designs that circumvent Brockett's condition have resulted in arguably the simplest to design, analyze, and implement feedback laws. It is in this coordinates frame that we conduct our designs as well. 
Preceded by 
Badreddin and Mansour~\cite{badreddin1993fuzzy}, whose linear design in polar coordinates is rigorously characterized by  Astolfi~\cite{astolfi1999exponential}, a landmark global design with bidirectional forward velocity by Aicardi et al.~\cite{aicardi1995} 
employs a non-strict energy-like Lyapunov function and a non-quantitative Barbalat convergence argument, lacking $\mathcal{KL}$-estimates.
Recently, a Lyapunov augmentation with semiglobal strictness~\cite{wang24_force_controlled_safestable} has been produced for the controller in~\cite{aicardi1995}. 
A noteworthy result in the polar coordinates  by Restrepo et al.~\cite{restrepo2020leader}
achieves global exponential stabilization by backstepping  and with a strict global CLF. The steering and velocity feedback designs are interlaced---nonmodular and complex---and their lack of constraints on angles permits angle winding. 
Furthermore, its forward velocity is unidirectional. Human-like parallel parking maneuvers are typically bidirectional, to account for the vehicle dimensions not being negligible. 

\paragraph{Time-varying feedbacks}
Sontag and Sussmann~\cite{Sontag1980Remarks} first highlighted the role of time-varying feedback in achieving global asymptotic stabilization of one-dimensional systems with drift, showing that such stabilizers can be chosen to be periodic in time.
Since the Brockett–Ryan-Coron–Rosier conditions only hold for autonomous systems, asymptotic stabilization of the unicycle in the Cartesian state space can be achieved only by explicitly incorporating time into the feedback law. Such a control law is introduced for the nonholonomic integrator by Samson~\cite{samson1990velocity}. The seminal work by Coron~\cite{coron1992global_controllable} establishes the existence of smooth periodic globally stabilizing time-varying feedback laws for general driftless systems, with specific controllers of this class given in~\cite{coron1993smooth}. 
With the existence established, a number of notable works—such as Pomet~\cite{pomet1992explicit} and \cite{samson1993time}, \cite{samson1995control}, \cite{d2019small}, \cite{m2002exponential}, \cite{morin1999design} \cite{zuyev2016exponential}—have proposed various time-varying control laws applicable to a broad class of driftless nonholonomic systems including the unicycle. 
Time-dependent controllers may be 
non-robust to delays and timing synchronization and have oscillatory transients and complicated stability analysis. 



\paragraph{Discontinuous feedback (convergence sans stability)} 
Although the unicycle cannot be stabilized, discontinuous feedback can be designed to ensure exponential attractivity to the origin.
Some highlights here include the design by Karafyllis and Krstic~\cite{karafyllis2011nonlinear}, as well as the sliding-mode based feedbacks~\cite{bloch1996stabilization_slidingmode}, \cite{astolfi1998discontinuous}, \cite{khennouf1995construction} designed for the nonholonomic integrator or the chained form, which are equivalent representations of the unicycle (see e.g.~\cite{hespanha1999logic}, \cite{murray_chained_form}).
The absence of stability guarantees renders these designs sensitive to small perturbations. 
Furthermore, for unicycle control, these controllers have a much more restrictive regions of attraction in Cartesian coordinates than in polar coordinates.
Using coordinates transformations, such as the polar transformation, in which the unicycle exhibits a removable singularity, provides an alternative way to bypass the obstruction posed by Brockett. In these transformed coordinates, stability results can be achieved; however, in Cartesian coordinates the designed controllers are still discontinuous, and the stability results correspond only to attractivity.  De Wit and S{\o}rdalen~\cite{de1991exponential} were the first to use such a coordinate transformation that maps from the Cartesian space to a coordinate on a circle centered on the vertical axis that passes through the origin and the position of the unicycle and a coordinate that is equivalent to the tangent to this circle. While the proposed feedback law achieves exponential convergence to the origin, stability is absent even in the transformed coordinates. 
Along this line of work, Astolfi~\cite{astolfi1995exponential,astolfi1996discontinuous} proposed more general singular coordinates transformations referred to as $\sigma$-processes and designs a controller for local exponential stabilization of the unicycle in the transformed coordinates, with a region of attraction that excludes starting on the $x$-axis. The result from Astolfi~\cite{astolfi1996discontinuous} on general chained forms can also be applied to the unicycle, but the feedback one gets, although global in the transformed coordinates, incorporates a division by the heading angle, excluding  initial conditions where the vehicle points in the same direction as the target.

\paragraph{Hybrid feedback}
Since Brockett's obstruction does not hold for hybrid systems, the unicycle parking problem has motivated the development of several hybrid feedback strategies~\cite{pomet1992hybrid}. Hespanha et al.~\cite{hespanha1999_hybrid_stabilization,hespanha1999logic} proposed a hysteresis switching algorithm ensuring global exponential stability of the nonholonomic integrator under model uncertainties. However, the induced control jumps and resulting zig-zagging trajectories render such methods impractical, and their state-dependent logic is sensitive to measurement errors.
Prieur et al.~\cite{prieur2003robust,prieur2005robust} developed hybrid schemes combining local and global controllers for chained systems and the nonholonomic integrator, robust to small measurement errors and disturbances but not ISS, and requiring excessive actuation. More recently, Ballaben et al.~\cite{ballaben2025orchestrating} proposed polar-coordinate hybrid controllers for unicycles with camera sensors, proving global asymptotic stability via LaSalle-like arguments.

\paragraph{Nonholonomic tracking}
For nonholonomic systems, stabilization and tracking problem differ fundamentally  \cite{jiang2010controlling}. Tracking employs a persistently exciting reference trajectory, whose time-varying signal circumvents Brockett's obstruction.
Notable works on tracking control for the unicycle include Jiang et al. time-varying state feedback via backstepping~\cite{jiangdagger1997tracking,jiang1999recursive, jiang2001saturated}, showing local and global exponential stability with a nonstrict Lyapunov function, as well as time-varying controllers for parking and tracking that respect input constraints through suitable choice of gains.

\begin{table}[t]
\small
\renewcommand\arraystretch{1.2}
\begin{center}
\begin{tabular}{|l|c|c|l|} 
\hline
\textbf{Design Method} & $\delta\in\mathbb{R}$ & $|\delta| < \pi$ &  \\
\hline
\multirow{4}{*}{\textbf{PASSIVITY}} 
& Genova  & BoPa   & \multirow{2}{*}{ $\gamma \in  \R$}  \\
& Thm.~\ref{thm:unicycle_CLF_polynomial} & Thm.~\ref{thm:unicycle_CLF_BoPA} &
 \\
  \cline{2-4}
& BoLSA  & BAgAl   & \multirow{2}{*}{ $|\gamma| < \pi$}  \\
&Thm.~\ref{thm:unicycle_CLF_BoLSA} &Thm.~\ref{thm:unicycle_CLF_Bagal}  & \\
\hline
\multirow{4}{*}{\textbf{FORWARDING}} 
& GloFo  & \multirow{2}{*}{\xmark} & \multirow{2}{*}{$\gamma \in \mathbb{R}$}  \\ 
& Thm.~\ref{thm:CLF_GloFo} &  & \\
\cline{2-4}
& BoFo  & \multirow{2}{*}{\xmark} & \multirow{2}{*}{$|\gamma| < \pi$} \\ 
& Thm.~\ref{thm:CLF_BoFo} & & \\
\hline
\multirow{4}{*}{\textbf{BACKSTEPPING}} 
& GloBa  & BAR-Fli & \multirow{2}{*}{$\gamma \in \mathbb{R}$} \\
& Thm.~\ref{thm:CLF_Globa} & Thm.~\ref{thm:CLF_BARFLi} &
\\
\cline{2-4}
& \centering \xmark &  \xmark & $|\gamma| < \pi$ \\ 
\hline
\end{tabular}
\end{center}
\caption{Controller acronyms (defined in Sec.~\ref{sec:feedback_summary}) for three design methods applied to the unicycle system. 
The angles $\delta$ and $\gamma$ represent the polar and line-of-sight angles, respectively. }
\label{tab:controllers_summary}
\end{table}

\subsection{Contributions and Organization}

\paragraph{Modular approach} We present a modular approach to solving the unicycle parking problem in polar coordinates. This is achieved by choosing the forward velocity to allow both forward and backward motion, suitable for parking maneuvers. The state representing the Euclidean distance to the target is then decoupled, allowing the remaining dynamics to be controlled independently via the steering. The particular structure of this subsystem allows designing a multitude of steering control laws through passivity, backstepping, and integrator forwarding techniques (see Table~\ref{tab:controllers_summary} for a preview) applied to state-spaces with constraints ranging from unconstrained to those that prevent angular wind-up.

\paragraph{Strict CLFs} The steering laws are accompanied by families of global strict CLFs and barrier variants, ensuring global asymptotic stability on the respective state spaces with quantitative $\mathcal{KL}$-estimates of the convergence rates, while also enabling us to assign the eigenvalues at the unicycle’s target. While asymptotic stabilization is achieved in polar coordinates via smooth feedback, in Cartesian coordinates only attractivity holds, due to the discontinuity of the polar transformation. Through the $\mathcal{KL}$-estimates in polar coordinates, we provide an example of quantitative characterization of the convergence properties of the Cartesian states. 
Another advantage of the modular framework is that it enables the construction of composite Lyapunov functions, which greatly expand the range of possible CLF designs. We leverage this in the second part of the paper to develop closed form optimal controllers in the inverse sense based on the strict CLFs introduced here, as well as nonlinear adaptive controllers that account for wheel slippage. Portions of the results presented appear in \cite{todorovski2025_CLF} and \cite{Krstic2025_fwd}.

In Part II of this paper \cite{Part2_kim2025} we make advances upon~\cite{restrepo2020leader}. While the nonmodular backstepping does not allow the construction of barrier Lyapunov functions that constrain {\em simultaneously both} angular states,  we design barrier CLFs that constrain the polar angle alone and, additionally, achieve global exponential stabilization by bidirectional  velocity actuation. 

\paragraph{Organization} 
After the introduction of the polar coordinate transformation in Sec.~\ref{sec:unicycle_polar_and_cartesian}, we introduce our modular framework in Sec.~\ref{sec-preliminaries}, followed by a summary of our feedback and CLFs designs in Sec.~\ref{sec:feedback_summary}. The construction of composite CLFs is introduced in Sec.~\ref{sec:composite} and the main results of this paper are presented in Sections~\ref{sec-Genova}--\ref{sec:sims}.

\section{Unicycle in Cartesian and Polar Representations}
\label{sec:unicycle_polar_and_cartesian}



\subsection{Unicycle model}

\begin{figure}[t]
\centering
\includegraphics[width=0.65\linewidth]{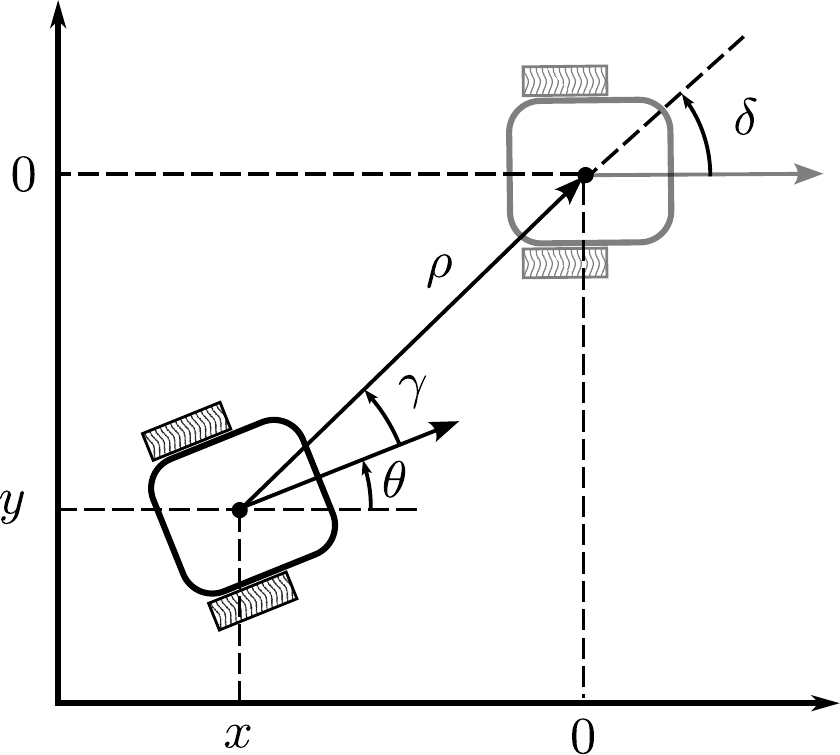}
\caption{Unicycle orientation $(x, y, \theta)$ relative to the goal state $(0, 0, 0)$, and the corresponding polar coordinates $(\rho, \delta, \gamma)$.}
\label{fig:unicycle_cord}
\end{figure}

We consider the unicycle model 
\begin{subequations}
\label{eq:unicycle_cartesian}
\begin{align}
\dot{x} &=  v \cos(\theta)\\
\dot{y} &= v \sin(\theta)  \\
\dot{\theta} &= \omega \,,\label{eq:dot_theta_omega}
\end{align}
\end{subequations}
where $(x,y) \in \R^2$ is the position of the unicycle in Cartesian coordinates, $\theta \in \mathbb{R}$ is the heading angle, $v$ is the forward velocity input, and $\omega$ is the angular velocity input. The unicycle can be represented in polar coordinates with the transformation defined in Table~\ref{tab:polar_coordinates} with the inverse transformation given in Table~\ref{tab:inverse_polar_coordinates}. The resulting polar representation of the unicycle  is given as 
\begin{subequations}\label{eq:unicycle_polar_closed_loop-Gv-1}
    \begin{align}
    \dot{\rho} &= 
    -v \cos\gamma\label{eq:unicycle_polar_rhodot}\,,\\
    \dot{\delta} &= \frac{v}{\rho}  \sin\gamma\label{eq:unicycle_polar_deltadot}\,,\\
    \dot{\gamma} &= \frac{v}{\rho} \sin\gamma -\omega \label{eq:unicycle_polar_gammadot} \,.
    \end{align} 
\end{subequations}


\begin{table*}[t]
\small
\renewcommand\arraystretch{1.2}
\centering
\begin{tabular}{|l|l|l|}
\hline
\textbf{Polar Coordinates}  & \textbf{Description} \\
\hline
$\rho = \sqrt{x^2 + 
y^2}$  & Distance to origin \\
\hline
$\delta = \text{{\rm atan2}}(y ,x )  + \pi= 2\arctan\dfrac{\sqrt{x^2+y^2}-x}{y}+\pi$ & Polar  angle
\\
\hline
$\gamma = \delta - \theta =\text{{\rm atan2}}(y, x) - \theta + \pi$  & Line-of-sight (LoS) angle   \\
\hline
\end{tabular}
\caption{Polar coordinates and their expressions in terms of Cartesian coordinates. The transformation $(x,y,\theta)\mapsto (\rho,\delta,\gamma)$ is discontinuous on $x<0,y=0$ and not defined at $x=y=0$. If the target is at $(x^*, y^*, \theta^*)\neq 0$, the transformation generalizes to $\rho=\sqrt{(x-x^*)^2+(y-y^*)^2}, \delta= \text{{\rm atan2}}(y-y^* ,x-x^* ) -\theta^* + \pi, \gamma= \delta-\theta+\theta^*$.}
\label{tab:polar_coordinates}
\end{table*}

\begin{table*}[t]
\small
\centering
\begin{tabular}{|l|l|l|}
\hline
\textbf{Cartesian Coordinates}  & \textbf{Description} \\
\hline
$x  = -\rho \cos \delta = -\rho(\cos\gamma\cos\theta-\sin\gamma\sin\theta) $  & Horizontal distance to target \\
\hline
$y  = -\rho \sin \delta = -\rho(\sin\gamma\cos\theta+\cos\gamma\sin\theta)$  & Vertical distance to target   \\
\hline
$ \theta  = \delta - \gamma $ & Heading error \\ 
\hline
\end{tabular}
\caption{Inverse transformation from polar to cartesian coordinates. 
The transformation $(\rho,\delta,\gamma)\mapsto  (x,y,\theta)$ is continuous and, when $(\rho,\delta,\gamma)=0$, it holds that $(x,y,\theta)=0$.}
\label{tab:inverse_polar_coordinates}
\end{table*}


\paragraph{Transformation to ``nonholonomic integrator''} The unicycle is transformable into 
$\dot \xi = v-\eta \omega$, 
$\dot \eta = \xi \omega$, 
$\dot\theta = \omega$,
using the 
transformations
$\xi = - \rho\cos\gamma = x \cos\theta +y\sin\theta$ 
and 
$\eta = \rho \sin\gamma= x \sin\theta -y\cos\theta 
$, 
whose inverse is $\rho=\sqrt{\xi^2+\eta^2}, \gamma = - \arctan(\eta/\xi)$, and for which $\cos\gamma= -\xi/\rho, \sin\gamma = \eta/\rho, \delta = \theta-\arctan(\eta/\xi), \tan(\gamma/2) = \eta/(\rho-\xi)$. 
Those of our control designs that are globally stabilizing (Genova, GloFo, GloBa, and their $L_gV$ counterparts) also provide stabilizing solutions for the nonholonomic integrator.

\subsection{Alternative angle definitions}

\paragraph{Stabilization ``in reverse''}
The polar angles in Table~\ref{tab:polar_coordinates} imply that $\delta=0$ places the unicycle behind the target, and $\gamma=0$ aligns it toward the target (Fig.~\ref{fig:unicycle_cord}). Hence, stabilizing controllers naturally guide the unicycle to approach the origin forward from behind the target and ``forward-park''. However, in some cases, it may be preferable for the unicycle to ``reverse-park'' instead.

Consider the alternative definition $\hat{\delta} := {\rm atan2}(y, x)$, $\hat{\gamma} := {\rm atan2}(y, x)$
noting that $\hat{\delta} = 0$ corresponds to being directly in front of the target, while $\hat{\gamma} = 0$ indicates that the unicycle is facing directly away from it. 
This definition is simply a phase shift of $-\pi$ from the original definition in Table~\ref{tab:polar_coordinates} and effectively modifies \eqref{eq:unicycle_polar_closed_loop-Gv-1} to
\begin{align}
\label{eq:unicycle_polar_closed_loop-Gv-1-hats}
\dot{\rho} = 
v \cos\hat\gamma\,,\quad
\dot{\delta} =- v  \frac{\sin\hat\gamma}{\rho} 
\,,\quad
\dot{\gamma} = -v \frac{\sin\hat\gamma}{\rho} -\omega  \,.
\end{align} 
Observe that if we define $\hat{v} = -v$, the system becomes identical to 
\eqref{eq:unicycle_polar_closed_loop-Gv-1}.
Thus, any stabilizing controls $(v,\omega)$ for 
\eqref{eq:unicycle_polar_closed_loop-Gv-1}
can be applied to 
\eqref{eq:unicycle_polar_closed_loop-Gv-1-hats}
by using $\hat{v} = -v$, for reverse-parking.

\paragraph{On the discontinuity of the transformed variables}


Despite smooth feedback being possible in polar coordinates, the discontinuity in the Cartesian coordinates still remains due to the transformation involving $\text{atan2}(y,x)$.
This can be mitigated by redefining $\delta$ and $\gamma$ as $\tilde{\delta} = \mbox{mod}(\hat{\delta} 
, 2\pi) - \pi$  and  $\tilde{\gamma} = \mbox{mod}(\hat{\gamma}, 2\pi) - \pi$
such that all $\delta, \gamma \in \mathbb{R}$ are wrapped to $ \tilde{\delta}, \tilde{\gamma}\in [-\pi, \pi)$, with discontinuities appearing 
when the vehicle points away from the target position (i.e., $|\gamma| = \pi$) or when the vehicle crosses the line $\{x>0,y=0\}$ (i.e., $|\delta| = \pi$). The inverse transform to $x$ and $y$ in rows 1 and 2 of Table~\ref{tab:inverse_polar_coordinates} remains unchanged, but it is important to note that the inverse transform for $\theta$ in row 3 of Table~\ref{tab:inverse_polar_coordinates} only holds in the sense of modulo $2\pi$, i.e., $\theta \;(\mbox{mod} \; 2\pi) = \tilde{\delta} - \tilde{\gamma}$. As will be shown in Section~\ref{sec:feedback_summary}, shifting the discontinuity is highly beneficial since many of the proposed feedback laws repel $\delta$ and $\gamma$ away from $\pm\pi$, unlike the original definition where it appears whenever the vehicle crosses ${x<0, y=0}$.


Additionally, regardless of which angle definition is used, for every controller \textit{there exist} initial conditions that result in a jump away from the origin when crossing either the positive half or the negative half of the $x$-axis, but not both. Such a discontinuity of control is not visually detectable in the trajectories $y(x)$, which have continuous ${\rm d}y/{\rm d}x=\tan(\theta)$, where $\theta(t)=\theta(0)+\int_0^t \omega(\tau){
\rm d}\tau$ remains continuous even at jumps of $\omega$, but have a possible discontinuity only in the curvature ${\rm d}y^2/{\rm d}x^2$, which cannot be visually perceived. 

\section{Preliminaries of the Feedback Design}
\label{sec-preliminaries}

\subsection[Design on rho=0: steering to turn unicycle to theta=0]%
{Design on $\rho=0$: steering to turn unicycle to $\theta=0$}

At $\rho=0$, setting $v=0$, the vehicle remains at the origin and the task is to design the feedback $\omega(\theta)$ to turn the vehicle to $\theta=0$. Additionally, the angles $\delta,\gamma$ are undefined, so there is no point of seeking feedback of the form $\omega(\delta,\gamma)$ when $\rho=0$. 
So, we turn our attention to the system \eqref{eq:dot_theta_omega}, i.e., 
$\dot{\theta} = \omega$. 
If the state space for $\theta$ is $\Sigma_\infty = \mathbb{R}$, the feedback 
$\omega = -k_0 \theta$, $k_0 > 0$
and the Lyapunov function $V= \theta^2$ lead to $\dot V = - 2k_0 V$. If, instead, the state space for $\theta$ is $\Sigma=(-\pi,\pi)$, 
the feedback
$\omega = -k_0\sin\theta$
and the Lyapunov function $V= 4 \tan^2 \frac{\theta}{2}$ lead to $\dot V = - 2k_0 V$, whereas
the feedback
$\omega = -k_0\tan\dfrac{\theta}{2}$
leads to $\dot V = - k_0\left(1+\frac{V}{4}\right) V$. 

\subsection{Forward velocity feedback}
\label{sec:forward_velocity_feedback}

For stabilization alone of the system \eqref{eq:unicycle_polar_closed_loop-Gv-1}, it is not evident that one can choose a feedback law that is either significantly better performing or simpler than the feedback
\begin{equation}
\label{eq-basic-v-control}
\fbox{$v =  k_1 \rho \cos(\gamma)$} = -k_1(x\cos\theta+y\sin\theta)\,,  \quad k_1>0\,.
\end{equation}
Note that \eqref{eq-basic-v-control} is never discontinuous away from the origin, which implies that any potential jump in the control input is caused by the steering feedback law.
The feedback \eqref{eq-basic-v-control} yields the system consisting of the closed-loop subsystem 
\begin{align}
\label{eq:unicycle_polar_closed_loop-Gv-2}
\dot{\rho} &=  
- k_1 \rho \cos^2(\gamma)\,,
\end{align} 
for which $\rho(t)$ exponentially converges to zero when $\cos \gamma\neq 0$, 
and, on the set $\rho>0$, in which the cancellation $\rho/\rho=1$ is valid, the subsystem 
\begin{subequations}
\label{eq:unicycle_polar_closed_loop-Gv-3}
\begin{align}
\dot{\delta} &=  \frac{k_1}{2} \sin(2\gamma) \label{eq:unicycle_polar_closed_loop-Gv-3n}\\
\dot{\gamma} &= \frac{k_1}{2} \sin(2\gamma) -\omega \label{eq:unicycle_polar_closed_loop-Gv-3a} \,,
\end{align} 
\end{subequations}
for which a feedback law depending only on $(\delta,\gamma)$, and applied by the steering input $\omega$, needs to be designed. 
By inspection of \eqref{eq:unicycle_polar_closed_loop-Gv-3a}, it is clear that, when the vehicle is not steered, namely, when $\omega=0$, the $\gamma$-subsystem is unstable at the equilibrium $\gamma=0$ and exponentially stable at the equilibrium $\gamma = \pm\pi/2$ (modulo $2\pi$). In other words, the vehicle, under the feedback \eqref{eq-basic-v-control}, moves along a straight line and settles at a position at which the target position is exactly to its left or to its right, at a distance that depends on the initial conditions $(\rho_0,\gamma_0)$ irrespective of the target heading. The consequence of these observations is that the term $\frac{k_1}{2} \sin(2\gamma)$ in \eqref{eq:unicycle_polar_closed_loop-Gv-3a} is destabilizing and any feedback design for $\omega$ has to either cancel $\frac{k_1}{2} \sin(2\gamma)$ or to appropriately dominate it. For simplicity, we cancel this term and introduce a control law
\begin{equation}
\label{eq-omega-general}
\fbox{$\omega = \dfrac{k_1}{2} \sin(2\gamma) +\tilde\omega$}
\end{equation}
where the control $\tilde\omega$ has to be designed for the resulting system
\begin{subequations}
\label{eq:unicycle_polar_closed_loop-Gv-3-pass}
\begin{align}
\label{eq:unicycle_polar_closed_loop-Gv-3n-pass}
\dot{\delta} &=  \frac{k_1}{2} \sin(2\gamma)\\ \label{eq:unicycle_polar_closed_loop-Gv-3a-pass}
\dot{\gamma} &= -\tilde\omega\,.
\end{align} 
\end{subequations}



\subsection{State spaces and stability} 

In what follows we consider four state spaces, that is 
\begin{align}
\mathcal{S}   &:= \left\{ \rho > 0 \right\} \times \mathcal{T},
& \quad \mathcal{T}   &:= \left\{ \delta \in \mathbb{R},\, \gamma \in \mathbb{R} \right\} \label{eq:ss_S}\\[2pt]
\mathcal{S}_1 &:= \left\{ \rho > 0 \right\} \times \mathcal{T}_1,
& \quad \mathcal{T}_1 &:= \left\{ \delta \in \mathbb{R},\, \abs{\gamma} < \pi \right\} \label{eq:ss_S1}\\[2pt]
\mathcal{S}_2 &:= \left\{ \rho > 0 \right\} \times \mathcal{T}_2,
& \quad \mathcal{T}_2 &:= \left\{ \abs{\delta} < \pi,\, \gamma \in \mathbb{R} \right\} \label{eq:ss_S2}\\[2pt]
\mathcal{S}_3 &:= \left\{ \rho > 0 \right\} \times \mathcal{T}_3,
& \quad \mathcal{T}_3 &:= \left\{ \abs{\delta} < \pi,\, \abs{\gamma} < \pi \right\} \label{eq:ss_S3}
\end{align}
and the following functions (on their respective state-space)
\begin{alignat}{3}
&\mathcal{T}:  &\quad  \Delta &= \delta  &\quad  \Gamma &= \gamma \label{eq:T_Delta_Gamma}\\
&\mathcal{T}_1: &\quad \Delta &= \delta  &\quad \Gamma &= 2\tan\frac{\gamma}{2} \label{eq:T1_Delta_Gamma} \\
&\mathcal{T}_2: &\quad \Delta &= 2\tan\tfrac{\delta}{2} &\quad  \Gamma &= \gamma   \label{eq:T2_Delta_Gamma} \\
&\mathcal{T}_3: &\quad \Delta &= 2\tan\tfrac{\delta}{2}  &\quad \Gamma &= 2\tan\tfrac{\gamma}{2}\,.\label{eq:T3_Delta_Gamma}
\end{alignat}
Then, the state-spaces \eqref{eq:ss_S}--\eqref{eq:ss_S3} are equipped with the metrics
\begin{equation}
|(\rho,\delta,\gamma)|_{\hat{\mathcal{S}}} \coloneqq \rho +\abs{ (\delta,\gamma)}_{\hat{\mathcal{T}}} :=  \rho + |\Delta| + |\Gamma| \label{eq:metric_S_hat}   
\end{equation}
where $\hat{\mathcal{S}} \in \{\mathcal{S}, \mathcal{S}_1, \mathcal{S}_2, \mathcal{S}_3\}$ and $\hat{\mathcal{T}} \in \{\mathcal{T}, \mathcal{T}_1, \mathcal{T}_2, \mathcal{T}_3\}$, and $(\Delta$, $\Gamma)$ are defined according to \eqref{eq:T_Delta_Gamma}--\eqref{eq:T3_Delta_Gamma} depending on $\hat{\mathcal{T}}$.

\begin{table}[t]


\small
\renewcommand\arraystretch{2.2}
\begin{tabular}{ |c|c| } 
\hline
\textbf{State-Space} $\hat{\mathcal{S}}$ & \textbf{CLF} $V(\rho,\delta,\gamma)$ \\
\hline
\hline
$\mathcal{S}$ 
& $\begin{aligned}
  \rho^2 +   \tfrac{1}{2}(\delta^2 + \gamma^2 + 2)^2 - 2 + (\delta + \gamma)^2
\end{aligned}$ \\
\hline
$\mathcal{S}_1$ 
& $\begin{aligned}
    \rho^2 + (\delta + \sin \gamma)^2 + 4\tan^2 \frac{\gamma}{2}
\end{aligned}$ \\
\hline
$\mathcal{S}_2$  
& $\begin{aligned}
   \rho^2 +  \delta^2 + \left(\gamma + \tfrac{1}{2}\arctan\!\left(4\tan \tfrac{\delta}{2}\right)\right)^2
\end{aligned}$ \\
\hline

$\mathcal{S}_3$  
& $\begin{aligned}
  \rho^2 +  &\left(4\tan^2 \tfrac{\delta}{2} + 4\tan^2 \tfrac{\gamma}{2} + 1\right)^3 \\
    &- 1 + \left(2\tan \tfrac{\delta}{2} + 2\tan \tfrac{\gamma}{2}\right)^2
\end{aligned}$ \\
\hline
\end{tabular}

    \caption{Representative CLFs for the unicycle~\eqref{eq:unicycle_polar_closed_loop-Gv-1} on each of the state-spaces $\hat{\mathcal{S}} \in \{\mathcal{S}, \mathcal{S}_1, \mathcal{S}_2, \mathcal{S}_3\}$ defined in \eqref{eq:ss_S}--\eqref{eq:ss_S3}.}
    \label{tab:CLF}
\end{table}

\begin{definition}
\label{def-our-GAS}
Consider the system \eqref{eq:unicycle_polar_closed_loop-Gv-2}, \eqref{eq:unicycle_polar_closed_loop-Gv-3-pass} with a feedback law $\tilde\omega(\gamma,\delta)$ that is continuous on a state space $\mathcal{Q}$ with respect to its metric. 
If there exists a class $\mathcal{KL}$ function $\beta$ such that, for all $t\geq 0$, it holds that $|(\rho(t),\delta(t), \gamma(t))|_{\mathcal{Q}}\leq \beta\left(|(\rho_0,\delta_0, \gamma_0)|_{\mathcal{Q}},t\right)$, we say that the 
point $\rho=\delta = \gamma=0$ is {\em globally asymptotically stable on $\mathcal{Q}$} (GAS on $\mathcal{Q}$). 
\end{definition}


\section{Summary of the Stabilizing Feedback Designs}\label{sec:feedback_summary}


We summarize the principal control designs developed in the paper: four passivity-based designs,  two integrator forwarding designs, and two backstepping designs.
%
%
Dependent on the state-space, all feedback laws employ the functions \eqref{eq:T_Delta_Gamma}--\eqref{eq:T3_Delta_Gamma} and representative CLFs are provided in Table~\ref{tab:CLF}.
Throughout this section, the term controllers  \textit{'region of attraction'} refers to the region of attraction of the system \eqref{eq:unicycle_polar_closed_loop-Gv-1} under the control laws \eqref{eq-basic-v-control}, \eqref{eq-omega-general}, and the specific steering controller $\tilde{\omega}$ introduced in this section.
In all cases, the gains $k_i$ are  positive constants. Additionally, we define the function $\mbox{sinc}$ as
\begin{equation}
\mbox{sinc}(a) \coloneqq \frac{\sin a}{a}\,, \quad \text{if} \quad a \ne 0, \quad \text{and} \quad \mbox{sinc}(0) = 1,
\end{equation}
which is bounded and continuous. 

\subsection{Passivity-inspired controllers}
Consider \eqref{eq:unicycle_polar_closed_loop-Gv-3-pass}, as well as the storage functions $\Delta^2$ and $\Gamma^2$ which, on our four state spaces, are
given by \eqref{eq:T_Delta_Gamma}--\eqref{eq:T3_Delta_Gamma}.
Regardless of the state space, the storage functions satisfy the `dissipation' relations 
\begin{eqnarray}
  \frac{{\rm d} \Delta^2}{ {\rm d} t} &=&   (\Delta^2)^{\prime} \frac{k_1\sin(2\gamma)}{2}  \\
  \frac{{\rm d} \Gamma^2}{ {\rm d} t} &=&  -(\Gamma^2)^{\prime}  \tilde{\omega}
\end{eqnarray}
The $\Delta$-system is passive from the input $\frac{\sin(2\gamma)}{2}$ to its output $(\Delta^2)'
$, which through $\tilde{\omega}$ can be made the input to the $\Gamma$-system. 
The goal is to design a feedback $\tilde\omega(\Delta,\Gamma)$ so that the $\Gamma$-system is (strictly) passive from its input $(\Delta^2)
'$ to its output $-\frac{\sin(2\gamma)}{2}$. 
To achieve this, $\tilde\omega$ is chosen as
\begin{eqnarray}
    \tilde{\omega} &=& \frac{1}{(\Gamma^2)'}
\left[ k_2 \Gamma^2 + k_3 \dfrac{\sin(2\gamma)}{2}(\Delta^2)'\right]\,. 
\label{eq:passivity_general_control}
\end{eqnarray}
In closed loop, one has
\begin{eqnarray}
  \frac{{\rm d} \Delta^2}{ {\rm d} t} &=&   (\Delta^2)^{\prime} \frac{k_1 \sin(2\gamma)}{2}, \label{eq:Delta_squared} \\
  \frac{{\rm d} \Gamma^2}{ {\rm d} t} &=&  -k_2\Gamma^2 - k_3(\Delta^2)'\dfrac{\sin(2\gamma)}{2}.\label{eq:Gamma_squared}
\end{eqnarray}
The first relation establishes passivity from $\frac{\sin(2\gamma)}{2}$ to $(\Delta^2)^{\prime} $, and the second relation {\em strict} passivity from $(\Delta^2)^{\prime} $ to $-\frac{\sin(2\gamma)}{2}$.
The sum of the storage functions of the subsystems, namely, the energy function
\begin{equation}
    U = \Delta^2(\delta) + q^2\Gamma^2(\gamma), \quad q = \sqrt{k_1/k_3} \label{eq:U_common_passivity}
\end{equation}
has the time derivative along \eqref{eq:Delta_squared} and \eqref{eq:Gamma_squared} as $\dot U =  - 2k_2q^2\Gamma^2 \le 0$.
The following two designs adopt this approach using the CLF
\begin{equation}
V = \rho^2 +k_3 \left(1+
\frac{2q^2+U}{2qk_2}\right)U
+ (\Delta+q\Gamma 
)^2\,,
\label{eq:V_CLF_polynomial_summary}
\end{equation}
where $U$ is defined in \eqref{eq:U_common_passivity}.
\paragraph{Genova controller ($\mathcal{T}$)} In this paper we first revisit the steering control law introduced in \cite{aicardi1995} which can be derived from the general passive control law  \eqref{eq:passivity_general_control} by setting $\Delta$ and $\Gamma$ as in \eqref{eq:T_Delta_Gamma} which results in the expression 
\begin{equation}
\fbox{$\tilde\omega =  k_2 \gamma + k_3 
\mbox{sinc}(2\gamma)\delta.$} 
\label{eq:angular_velocity_genova}
\end{equation}
Even though this controller is not our design, we design a strict CLF for this controller which is given by substituting \eqref{eq:T_Delta_Gamma} for $\Delta$ and $\Gamma$ in \eqref{eq:V_CLF_polynomial_summary}.

\paragraph{BoLSA controller ($\mathcal{T}_1$)} The next passivity-based controller is obtained from \eqref{eq:passivity_general_control} by substituting $\Delta$ and $\Gamma$ from \eqref{eq:T1_Delta_Gamma}, and it can be expressed as
\begin{equation}
\label{eq-control-bounded-in-gamma}
\fbox{$\tilde\omega =  k_2\sin\gamma +  \displaystyle
\frac{k_3\cos\gamma}{\left(1+\displaystyle\tan^2\frac{\gamma}{2}\right)^2}\delta$.} 
\end{equation}
This is bounded in the LoS angle $\gamma$ and we, consequently, refer to it as the Bounded-in-LoS Angle (BoLSA, pronounced `bolsa') controller. BoLSA's region of attraction is the state space 
$\mathcal{S}_1$ defined in \eqref{eq:ss_S1},
which includes all the initial headings except  exactly those away from the target. In other words, the algorithm achieves stable parking while, in the process, never ``turning its back'' against the target position. 
The strict CLF for this controller is obtained by substituting the expressions for $\Delta$ and $\Gamma$ from \eqref{eq:T1_Delta_Gamma} in \eqref{eq:V_CLF_polynomial_summary}.

Interestingly, the controllers for the state-spaces $\mathcal{T}_2$ and $\mathcal{T}_3$ share the strict CLF 
\begin{equation}
\label{eq:CLF_ABounD_summary}
V = \rho^2 + \frac{a}{3k_2 q^2} \left[ (1+U)^3 -1\right] + 
\left (\Delta + q \Gamma \right)^2,
\end{equation}
where $a = \max\{k_1 q,   \sqrt{k_1 k_2}\}$.
\paragraph{BoPA controller ($\mathcal{T}_2$)}
On the state-space $\mathcal{T}_2$, the passivity-based controller is obtained by substituting \eqref{eq:T2_Delta_Gamma} in \eqref{eq:passivity_general_control} which results in
\begin{equation}
\label{eq:Bopa_controller}
\fbox{$\tilde\omega = k_2\gamma + 2k_3 \mbox{sinc}(2\gamma) \left(1+\tan^2 \dfrac{\delta}{2} \right)\tan \dfrac{\delta}{2}$.}
\end{equation}
This is bounded in the polar angle $\delta$ and we refer to it as Bounded-in-Polar-Angle (BoPA) controller. BoPa's region of attraction is the state space \eqref{eq:ss_S2}
which entails all positions except the nonnegative half of the $x$-axis. In other words, the algorithm achieves stable parking while never crossing exactly in front of the parking target. 
The CLF for this controller is obtained by substituting \eqref{eq:T2_Delta_Gamma} in \eqref{eq:CLF_ABounD_summary}.

\paragraph{BAgAl controller ($\mathcal{T}_3$)} The last passivity-based controller is given by substituting \eqref{eq:T3_Delta_Gamma} in  \eqref{eq:passivity_general_control}, which results in
\begin{equation}
\label{eq-control-bounded-in-gamma-delta}
\fbox{$\tilde\omega = k_2\sin\gamma +  \displaystyle
\frac{ 2k_3\cos\gamma }{\left(1+\displaystyle\tan^2\frac{\gamma}{2}\right)^2} \left(1+\tan^2 \frac{\delta}{2} \right)\tan \frac{\delta}{2}$.}
\end{equation}
This is bounded with respect to the polar angle $\delta$ and the LoS angle $\gamma$, and we refer to it as the Bounding-Angles Algorithm Design (BAgAl, pronounced “bagel”) controller. BAgAl's region of attraction is the state space $\mathcal{S}_3$ defined in \eqref{eq:ss_S3}
which includes all initial headings except those facing exactly opposite to the target, and all positions except those on the nonnegative half of the $x$-axis. The algorithm achieves stable parking while never “turning its back” on the target or crossing directly in front of it.
The CLF for this controller is obtained by substituting \eqref{eq:T3_Delta_Gamma} in \eqref{eq:CLF_ABounD_summary}.

\subsection{Forwarding controllers}
Consider again the system \eqref{eq:unicycle_polar_closed_loop-Gv-3-pass}. Observe that \eqref{eq:unicycle_polar_closed_loop-Gv-3n-pass} depends solely on the LoS angle $\gamma$, and with \eqref{eq:unicycle_polar_closed_loop-Gv-3a-pass} involving only the input $\tilde{\omega}$, the overall system \eqref{eq:unicycle_polar_closed_loop-Gv-3-pass} exhibits a \textit{strict feedforward} structure, enabling the use of the integrator forwarding method~\cite{sepulchre1997forwarding,mazenc2002adding, krstic2004feedback}. The following two controllers utilize a forwarding transformation, with the corresponding CLFs for each case given by
\begin{equation}
\label{eq-fwd-CLF-common}
V = 
\rho^2+ \zeta^2 + q^2\Gamma^2\,, \quad  q = \sqrt{k_1/k_3} 
\,,
\end{equation}
where the forwarding transformation $\zeta$ is defined below.
\paragraph{GloFo controller ($\mathcal{T}$)} Much like Genova~\eqref{eq:angular_velocity_genova}, the globally asymptotically stabilizing on $\mathcal{T}$ forwarding controller (GloFo, pronounced 'glofo') is given by 
\begin{equation}
\label{eq:GloFo}
\fbox{$\tilde\omega = k_2\gamma + k_3\mbox{sinc}(2\gamma) \zeta
$}
\end{equation}
and has the associated CLF~\eqref{eq-fwd-CLF-common} where $\Gamma = \gamma$ is as in \eqref{eq:T_Delta_Gamma}
and the forwarding transformation reads as
\begin{equation}
    \zeta = \delta + \dfrac{k_1}{2k_2}\mbox{Si}(2\gamma)\,,
\end{equation}
where $\mbox{Si}(a)$ is the sine integral function defined as
\begin{align}
\mbox{Si}(a) = \int_0^{a}\mbox{sinc}(\sigma){\rm d} \sigma
\,. \label{eq:Si_function}
\end{align}

\paragraph{BoFo ($\mathcal{T}_1$)} Similar to BoLSA, the forwarding controller on $\mathcal{T}_1$ is given by 
\begin{equation}
\label{eq:BoFo}
\fbox{$\tilde\omega = k_2\sin\gamma + k_3\dfrac{\cos\gamma}{\left(1 + \tan^2 \dfrac{\gamma}{2}\right)^2}\zeta
$}
\end{equation}
and has the associated CLF~\eqref{eq-fwd-CLF-common}
where $\Gamma = \tan\frac{\gamma}{2}$ is as in \eqref{eq:T1_Delta_Gamma} and the forwarding transformation is
\begin{equation}
    \zeta = \delta + \dfrac{k_1}{k_2}\sin\gamma \,.
\end{equation}
Much like BoLSA, this controller is bounded in the LoS angle $\gamma$ and hence referred to as the Bounded LoS angle by Forwarding (BoFo, pronounced 'bofo'). BoFo's region of attraction is the state space \eqref{eq:ss_S1} and the unicycle never fully turns its back to the goal position while parking as long as the vehicle's initial heading is not exactly opposite to the target. 


\subsection{Backstepping controllers}
Consistent with the backstepping perspective, regard the system \eqref{eq:unicycle_polar_closed_loop-Gv-3-pass} as a double integrator chain, with the sinusoidal nonlinearity acting between the two integrators and limiting the magnitude of the input that acts on $\dot\delta$. Such a nonlinear integrator chain provides an opportunity for an unconventional application of the backstepping method. The next two controllers are designed with such an approach. The CLFs are given in both cases by
\begin{equation}
\label{eq-bkst-CLF-common}
V = 
\rho^2+ \Delta^2 + q^2z^2\,, \quad q = \sqrt{k_1/k_3}
\end{equation}
with the backstepping transformation
\begin{equation}
z = \gamma + \frac{1}{2} \arctan(2k_2 \Delta).
\end{equation}
 The backstepping controllers employ the continuous, bounded function
\begin{equation}
\begin{aligned}[b]
&\psi(z,\gamma) = \frac{\sin(2z-2\gamma) +\sin(2\gamma)}{2z}
\\
 &= \frac{1}{\sqrt{1 + 4k_2^2 \Delta^2}} \left(\,\mbox{sinc}(2z)
+ 2k_2 \Delta \frac{1 - \cos(2z)}{2z}\right)
 \label{eq:psi_definition}
\end{aligned}
\end{equation}
which has the property that $\psi(0,\gamma)  = \cos(2\gamma)$ and, in particular, that $\psi(0,0) = \max\psi(z,\gamma)= \psi(0,n\pi) = 1$, for all integer $n$. We will explain the role of this function in \eqref{eq-bkst-step1}. If, instead of the ``unconventional, bounded integrator chain'' \eqref{eq:unicycle_polar_closed_loop-Gv-3-pass} one had $\dot\delta = k_1 \gamma$, the function $\psi$ would be simply $\psi(z,\gamma)\equiv 1$, namely, the virtual input coefficient. 
Both backstepping controllers take the same general form, given by
\begin{equation}
\tilde\omega = k_4 z +  \left( \frac{k_1k_2 \sin(2\gamma)}{2(1+4k_2^2 \Delta^2)} + k_3 \psi(z,\gamma)\Delta \right)\Delta^{\prime}.
\label{eq:backstepping_controllers}
\end{equation}

\paragraph{GloBa controller ($\mathcal{T}$)} The globally asymptotically stabilizing backstepping controller on $\mathcal{T}$ (GloBa, pronounced 'globa') is obtained by substituting the expression for $\Delta$ from \eqref{eq:T_Delta_Gamma} into \eqref{eq:backstepping_controllers}, yielding
\begin{equation}
\label{eq-bkst-1}
\fbox{$\tilde\omega =  k_4 z + \displaystyle \frac{k_1}{2}\frac{k_2}{1+4k_2^2 \delta^2}\sin(2\gamma) + k_3 \psi( z, \gamma)\delta$}
\end{equation}
and has the associated strict CLF \eqref{eq-bkst-CLF-common} where $\Delta = \delta$ is as in \eqref{eq:T_Delta_Gamma} which is the same CLF reported by Restrepo et al.~\cite{restrepo2020leader}, though their controller differs substantially from \eqref{eq:backstepping_controllers}. 
\paragraph{BAR-FLi controller ($\mathcal{T}_2$)} Finally, we also present a backstepping controller on $\mathcal{T}_2$ which is obtained by substituting the expression for  \eqref{eq:T2_Delta_Gamma} in \eqref{eq:backstepping_controllers}, yielding
\begin{equation}
\label{eq-bkst-3}
\fbox{$
\begin{aligned}
\tilde\omega &= k_4 z + \frac{k_1}{2} \frac{k_2 \left(1 + \tan^2 \frac{\delta}{2}\right)}{
1 + 16k_2^2 \tan^2\displaystyle\frac{\delta}{2}
} \sin(2\gamma)\\[4pt]
&\quad + 2k_3\psi(z,\gamma)
\left(1 + \tan^2\frac{\delta}{2}\right) \tan\frac{\delta}{2}
\end{aligned}
$}
\end{equation}
and has the associated CLF \eqref{eq-bkst-CLF-common} where $\Delta = \tan \frac{\delta}{2}$ is as in \eqref{eq:T2_Delta_Gamma}.
This backstepping feedback's region of attraction is the same as BoPa~\eqref{eq:Bopa_controller}, i.e., the state space \eqref{eq:ss_S2}
and analogously maintains $|\delta(t)|<\pi$, implying that the unicycle never crosses the line in front of the target. Hence, we refer to the controller as the Backstepping to Avoid Running across Front Line (BAR-FLi, pronounced 'Bar Fly') controller.

\subsection{The barrier CLFs and ``nearly global'' feedbacks}\label{sec:barrier}

\paragraph{Barrier CLFs} The Lyapunov functions \eqref{eq:V_CLF_polynomial_summary} and \eqref{eq-fwd-CLF-common} with \eqref{eq:T1_Delta_Gamma} 
blow up at $\gamma=\pm \pi$ and the Lyapunov functions 
\eqref{eq:CLF_ABounD_summary} and
\eqref{eq-bkst-CLF-common} with \eqref{eq:T2_Delta_Gamma} blow up at $\delta=\pm \pi$.  Similarly, the Lyapunov function \eqref{eq:CLF_ABounD_summary} with \eqref{eq:T3_Delta_Gamma} blows up on the boundary of $\mathcal{T}_3$, the square $\{\delta=\pm\pi\}\cup \{\gamma=\pm\pi\}$. 
Such Lyapunov functions are called ``barrier Lyapunov functions'' and they ensure the invariance of the sub-level sets where $|\gamma|  < \pi$ and $|\delta| < \pi$, respectively, are bounded. 
It is convenient to note that, using the definitions in Table~\ref{tab:inverse_polar_coordinates} and 
$\xi = - \rho\cos\gamma = x \cos\theta +y\sin\theta$,  
$\eta = \rho \sin\gamma= x \sin\theta -y\cos\theta 
$,
we have that
\begin{align}
\tan\frac{\delta}{2} &= -\frac{y}{\sqrt{x^2+y^2}-x}=-\frac{\sqrt{x^2+y^2}+x}{y}\label{eq:tan_delta_cart}
\\
\tan\frac{\gamma}{2} &= \frac{x\sin\theta-y\cos\theta}{\sqrt{x^2+y^2}-x\cos\theta-y\sin\theta}
= \frac{\rho+\xi}{\eta}\label{eq:tan_gamma_cart}
\,,
\end{align}
and, hence, we can express the barrier  terms $\tan^2\frac{\delta}{2}$, $\tan^2\frac{\gamma}{2}$ in the Lyapunov functions, as well as the metrics \eqref{eq:metric_S_hat} with \eqref{eq:T1_Delta_Gamma}, \eqref{eq:T2_Delta_Gamma} and \eqref{eq:T3_Delta_Gamma}, respectively, in terms of $(x,y,\theta)$. 

\begin{figure}[t]
\centering
\includegraphics[width=0.7\linewidth]{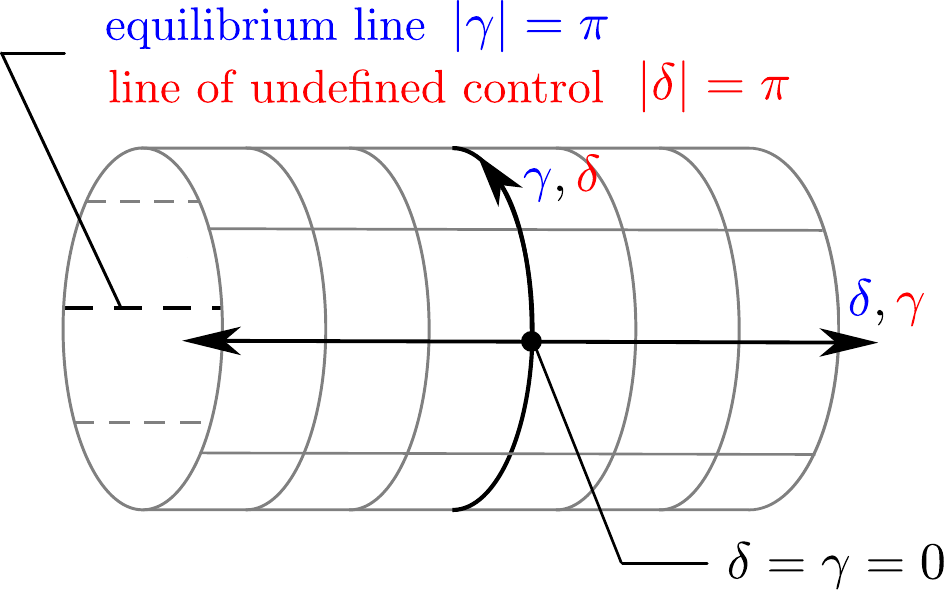}
\caption{Cylindrical state spaces $\mathcal{T}_1$ (blue axes) and $\mathcal{T}_2$ (red axes). For $\mathcal{T}_1$, the set $\{\delta \in \mathbb{R}\}\times\{|\gamma| = \pi\}$ is an equilibrium, while for $\mathcal{T}_2$, $\{|\delta| = \pi\}\times\{\gamma \in \mathbb{R}\}$ is the set where the control is undefined.}
\label{fig:cylinder}
\end{figure}

\paragraph{Feedback actions near and at the barriers} 
The controllers  \eqref{eq-control-bounded-in-gamma}, \eqref{eq:Bopa_controller}, \eqref{eq:BoFo}, \eqref{eq-bkst-3}, 
and \eqref{eq-control-bounded-in-gamma-delta} employing barrier CLFs exclude the measure zero sets $\gamma=\pm \pi$, $\delta=\pm \pi$, and $\{\delta=\pm\pi\}\cup \{\gamma=\pm\pi\}$, from their respective regions of attraction. In \eqref{eq-control-bounded-in-gamma} and \eqref{eq:BoFo} the bounded feedback term $\sin\gamma$ in relation to the {\em barrier} Lyapunov term $\tan^2\frac{\gamma}{2}$ creates unstable equilibria on the set $\gamma=\pm\pi$, while in \eqref{eq:Bopa_controller} and \eqref{eq-bkst-3}, the superlinear term $\left(1+\tan^2\frac{\delta}{2}\right)\tan\frac{\delta}{2}$ grows to infinity to prevent $\delta$ from reaching its boundary, precluding the existence of equilibria on the set $\delta = \pm \pi$. The combined law \eqref{eq-control-bounded-in-gamma-delta} incorporates both. The substantially different behavior among the feedback terms, all of which relate to barrier Lyapunov functions of the form $"\tan^2"$, is because the control input $\omega$ affects $\gamma$ directly but influences $\delta$ only through an integrator (a difference of relative degree of one with respect to $\omega$).

The feedback laws \eqref{eq-control-bounded-in-gamma} and \eqref{eq:BoFo}, at $\gamma = \pm \pi$ and $\delta = 0$ (when the vehicle points away from both the target position and the target heading), drives the system toward the target position $\rho = 0$ through the reverse velocity feedback $v = -k_1\rho$, but does not achieve the target heading. In contrast, for the feedback laws \eqref{eq:Bopa_controller},
\eqref{eq-bkst-3} and \eqref{eq-control-bounded-in-gamma-delta}, starting at $\delta=\pm \pi$, the steering input is $\omega = \pm \infty$, i.e., undefined. This is due to the topological impossibility of achieving global stabilization on 
the cylinder $ S^1\times \mathbb{R}$ for \eqref{eq:Bopa_controller} and \eqref{eq-bkst-3} and on 
the torus $S^1 \times S^1$ for \eqref{eq-control-bounded-in-gamma-delta}.

\paragraph{Topological behavior of controllers} The controllers \eqref{eq-control-bounded-in-gamma}, \eqref{eq:BoFo}, \eqref{eq:Bopa_controller}, and \eqref{eq-bkst-3} are defined on the cylindrical state-spaces $\mathcal{T}_1$ and $\mathcal{T}_2$, illustrated in Fig.~\ref{fig:cylinder} and the BAgAl controller \eqref{eq-control-bounded-in-gamma-delta} is defined on the toroidal state-space $\mathcal{T}_3$ illustrated in Fig. \ref{fig:torus}.  We focus our discussion on the toroidal state-space $\mathcal{T}_3$ associated with the BAgAl controller \eqref{eq-control-bounded-in-gamma-delta}, as this covers the essential features of both $\mathcal{T}_1$ and $\mathcal{T}_2$. The  value of BAgAl \eqref{eq-control-bounded-in-gamma-delta} is undefined on the set $\{\rho>0\}\times \{|\delta|=\pi\}\times \{|\gamma|\leq \pi\} \ = \ \{x>0\}\cap \{y=0\}$, the red circle in Figure \ref{fig:torus} which is a part of the boundary of the open state space $\mathcal{T}_3$ and a measure zero subset of the ``full configuration space'', i.e., $\mathcal{W} := \{\rho>0\}\times \{|\delta|\leq\pi\}\times \{|\gamma|\leq \pi\} \ = \ \{x^2+y^2 > 0\}$. The set $\{|\delta|<\pi\}\times \{|\gamma|= \pi\}$, the blue circle in Fig~\ref{fig:torus} which is also a part of the boundary of the open state space $\mathcal{T}_3$ and a measure zero subset of the torus $\{|\delta|\leq\pi\}\times \{|\gamma|\leq \pi\}$, is an equilibrium set of \eqref{eq:unicycle_polar_closed_loop-Gv-3-pass}, \eqref{eq-control-bounded-in-gamma-delta}, on which the vehicle faces away from the positional target but may be anywhere except $x>0,y=0$. The set $\{\rho>0\}\times \{|\delta|<\pi\}\times \{|\gamma|= \pi\} \ = \ \{x<0\}\cup \{y\neq 0\}$, which is of measure zero in $\mathcal{W}$, consists of straight-line trajectories of \eqref{eq:unicycle_polar_closed_loop-Gv-1}, \eqref{eq-basic-v-control}, \eqref{eq-omega-general}, \eqref{eq-control-bounded-in-gamma-delta} which ``back up'' to the target position with $\theta\neq 0$. From the entire configuration space $\mathcal{W}$, which is a Cartesian product of positive distance and the torus $\mathcal{T}_3$, only the measure zero subset $\{\rho>0\}\times (\{|\delta|=\pi\}\cup \{|\gamma|= \pi\})$, which are the intersecting blue and red circles on the torus in Fig.~\ref{fig:torus}, is excluded from the region of attraction of $\rho=\delta=\gamma=0$. 

\begin{figure}[t]
\centering
\includegraphics[width=0.75\linewidth]{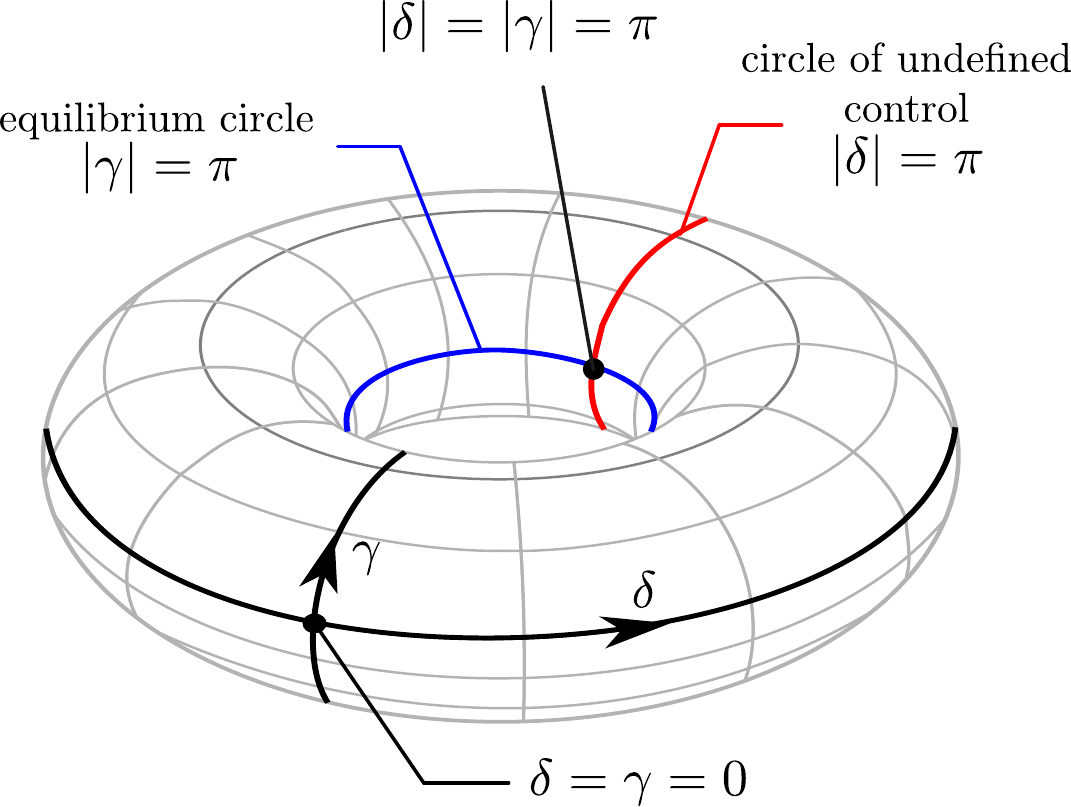}
\caption{Toroidal state-space $\mathcal{T}_3$ with the undefined control set $\{|\delta|=\pi\}\times \{|\gamma|\leq \pi\}$ (red) and the equilibrium set $\{|\delta|<\pi\}\times \{|\gamma|= \pi\}$ (blue).}
\label{fig:torus}
\end{figure}

\section{Composite Lyapunov Functions and CLFs}
\label{sec:composite}
We have generated a multitude of CLFs in the state $(\rho,\delta,\gamma)$. These CLFs are built ``modularly,'' since the system \eqref{eq:unicycle_polar_closed_loop-Gv-3-pass} with $\tilde\omega(\delta,\gamma)$ is independent of $\rho$ and, hence, the CLF for the $(\delta,\gamma)$-subsystem, denoted $V_{\delta\gamma}(\delta,\gamma)$, depends only on  $(\delta,\gamma)$, whereas the CLF for \eqref{eq:unicycle_polar_closed_loop-Gv-2}
naturally depends only on $\rho$, with a nonzero $\gamma$ regarded simply as a disturbance. 
The following proposition gives examples of modularly designed CLFs $V(\rho,\delta,\gamma) = \mathcal{V}(\rho^2,V_{\delta\gamma})$, where, due to the scalar nature of \eqref{eq:unicycle_polar_closed_loop-Gv-2}, its CLF is taken, without loss of generality, as $\rho^2$. 

\begin{proposition}[{Composite Lyapunov functions}]\em 
\label{prop:composite_Lyap_function}
Consider 
any continuously differentiable function $(\delta,\gamma) \mapsto V_{\delta\gamma}$ where $(\delta,\gamma)$ belong to any state space $\hat{\mathcal{T}}\in\{\mathcal{T},\mathcal{T}_1,\mathcal{T}_2,\mathcal{T}_3\}$
and  such that 
$ \alpha_1(|(\delta, \gamma)|_{\hat{\mathcal{T}}}) \le V_{\delta\gamma}(\delta,\gamma) \le \alpha_2(|(\delta, \gamma)|_{\hat{\mathcal{T}}})$, where $\alpha_1, \alpha_2$ are class $\mathcal{K}_{\infty}$ functions. 
Let $(\delta,\gamma) \mapsto\tilde\omega$ be a continuous function such that 
\begin{equation*}
\dot V_{\delta\gamma}(\delta,\gamma) := \dfrac{\partial V_{\delta\gamma}}{\partial\delta}
\frac{k_1\sin(2\gamma)}{2}  -\dfrac{\partial V_{\delta\gamma}}{\partial\gamma}\tilde\omega(\delta,\gamma) \leq -\alpha_{\delta\gamma}(V_{\delta\gamma}) \label{eq:V_dot_delta_gamma}
\end{equation*}
for some class $\mathcal{K}$ function $\alpha_{\delta\gamma}$. 
Then, for {\em any} function $(r,s) \mapsto \mathcal{V}$ such that 
\begin{enumerate}
\item $\mathcal{V}(0,0)=0$ and $\mathcal{V}(r, s)>0$ if $r > 0 $ or $s > 0$, \label{it:pos_def_composite_Lyap_function} 
\item $\lim_{r+s\rightarrow\infty} \mathcal{V}(r,s) = \infty$, \label{it:the_other_condition}
\item $\dfrac{\partial \mathcal{V} }{\partial r }(r,s)>0$ and $\dfrac{\partial \mathcal{V} }{\partial s}(r,s)>0$ if $r>0$ or $s > 0$, \label{it:positive_partials}
\end{enumerate}
for both `composite' Lyapunov functions $V(\rho,\delta,\gamma) = \mathcal{V}\left(\rho^2,V_{\delta\gamma}(\delta,\gamma)\right)$ and $V(\rho,\delta,\gamma) = \mathcal{V}\left(V_{\delta\gamma}(\delta,\gamma), \rho^2 \right)$ there exist respective (distinct) 
triplets of functions $\bar\alpha_1,\bar\alpha_2\in\mathcal{K}_\infty$, $\alpha\in\mathcal{K}$
such that,  for all $(\rho,\delta,\gamma)$ in the state space $\hat{\mathcal{S}}\in\{\mathcal{S},\mathcal{S}_1,\mathcal{S}_2,\mathcal{S}_3\}$, it holds that  
$ \bar\alpha_1(|(\rho,\delta, \gamma)|_{\hat{\mathcal{S}}}) \le V(\rho,\delta,\gamma) \le \bar\alpha_2(|(\rho,\delta, \gamma)|_{\hat{\mathcal{S}}})$ and 
\begin{equation}
\begin{aligned}
\dot{V}(\rho,\delta,\gamma) := 
-\dfrac{\partial V }{\partial \rho}
k_1\rho^2\cos^2\gamma   
&+ \dfrac{\partial V}{\partial \delta}
\frac{k_1 \sin(2\gamma)}{2} \\
&-\dfrac{\partial V}{\partial\gamma}\tilde\omega(\delta,\gamma) \nonumber
\leq -\alpha(V)\,. 
\end{aligned}
\label{eq:V_dot_rho_delta_gamma}
\end{equation}
\end{proposition}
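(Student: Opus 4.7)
The plan is to proceed in three steps: establish the $\mathcal{K}_\infty$ bounds on the composite $V$, compute $\dot V$ along the closed loop and show it is pointwise negative away from the target, and then upgrade this to $\dot V \le -\alpha(V)$ via a compactness argument on sublevel sets. I treat the first composite $V=\mathcal{V}(\rho^2,V_{\delta\gamma})$; the swapped form is handled identically with the two arguments' roles exchanged.

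\emph{Bounds.} Since $\mathcal{V}:[0,\infty)^2\to[0,\infty)$ is continuous, zero only at $(0,0)$ by condition~(1), and radially unbounded by condition~(2), a standard lemma supplies $\underline{\beta},\bar\beta\in\mathcal{K}_\infty$ with $\underline{\beta}(r+s)\le\mathcal{V}(r,s)\le\bar\beta(r+s)$ on $[0,\infty)^2$. Combining this with the hypothesized bounds on $V_{\delta\gamma}$ and the metric identity $|(\rho,\delta,\gamma)|_{\hat{\mathcal{S}}}=\rho+|(\delta,\gamma)|_{\hat{\mathcal{T}}}$, routine $\mathcal{K}_\infty$-bookkeeping---including composition with $r\mapsto r^2$---produces $\bar\alpha_1,\bar\alpha_2\in\mathcal{K}_\infty$ satisfying $\bar\alpha_1(|(\rho,\delta,\gamma)|_{\hat{\mathcal{S}}})\le V\le\bar\alpha_2(|(\rho,\delta,\gamma)|_{\hat{\mathcal{S}}})$.

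\emph{Pointwise decay.} The chain rule, together with $\dot\rho=-k_1\rho\cos^2\gamma$ and the assumed inequality $\dot V_{\delta\gamma}\le-\alpha_{\delta\gamma}(V_{\delta\gamma})$, yields
\begin{equation*}
\dot V \;\le\; -2k_1\rho^2\cos^2\gamma\,\frac{\partial\mathcal{V}}{\partial r} \;-\; \alpha_{\delta\gamma}(V_{\delta\gamma})\,\frac{\partial\mathcal{V}}{\partial s},
\end{equation*}
both partials being evaluated at $(r,s)=(\rho^2,V_{\delta\gamma})$. Both summands are non-positive by condition~(3). At every $(\rho,\delta,\gamma)\in\hat{\mathcal{S}}$ away from the target, at least one of them is strictly negative: if $(\delta,\gamma)=(0,0)$, then $V_{\delta\gamma}=0$ but $\rho>0$ and $\cos\gamma=1$, so the first term dominates strictly; otherwise $V_{\delta\gamma}>0$, whence $\partial\mathcal{V}/\partial s>0$ and the second term is strictly negative.

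\emph{Upgrading to $\alpha(V)$.} This is the main obstacle, because $\hat{\mathcal{S}}$ is open in $\rho$ and level sets of $V$ are \textit{a priori} not compact. Properness rescues us: from the lower bound $V\ge\bar\alpha_1(|(\rho,\delta,\gamma)|_{\hat{\mathcal{S}}})$, together with the blow-up of $|(\delta,\gamma)|_{\hat{\mathcal{T}}}$ as $(\delta,\gamma)$ approaches $\partial\hat{\mathcal{T}}$ (through the $\tan(\delta/2)$ and $\tan(\gamma/2)$ terms for $\mathcal{T}_1,\mathcal{T}_2,\mathcal{T}_3$, and through $|\delta|+|\gamma|\to\infty$ for $\mathcal{T}$), every sublevel set $\{V\le c\}$ is confined to a region that becomes relatively compact upon adjoining the single boundary point $\rho=\delta=\gamma=0$. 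Both $V$ and $-\dot V$ extend continuously to this compactification and $-\dot V$ vanishes only at the origin, so the function $\phi(c):=\inf\{-\dot V:V=c\}$ satisfies $\phi(0)=0$ and $\phi(c)>0$ for $c>0$. Replacing $\phi$ by its non-decreasing envelope $c\mapsto\inf_{c'\ge c}\phi(c')$ and, if necessary, mollifying, yields a class~$\mathcal{K}$ function $\alpha\le\phi$, so that $\dot V\le-\alpha(V)$ globally. No stronger conclusion $\alpha\in\mathcal{K}_\infty$ is to be expected, since $-\dot V$ can saturate as $\rho\to\infty$ along $\cos\gamma=0$ with $V_{\delta\gamma}$ bounded.
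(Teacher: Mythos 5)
Your argument follows the paper's proof essentially step for step: the same chain-rule computation of $\dot V$, the same use of condition~(3) together with $\dot V_{\delta\gamma}\le-\alpha_{\delta\gamma}(V_{\delta\gamma})$ to obtain a negative-definite upper bound, and the same case split ($\rho>0$ versus $(\delta,\gamma)\ne(0,0)$) to verify strict negativity away from the target. The only real difference is that where the paper invokes Khalil's Lemma 4.3 to convert negative definiteness into $\dot V\le-\alpha(V)$, you carry out the underlying level-set compactness construction explicitly; this is fine, apart from the minor slip that the closure of a sublevel set adjoins the whole slice $\{\rho=0\}$ rather than the single point $\rho=\delta=\gamma=0$, which does not affect the argument since $-\dot V$ remains strictly positive there whenever $V_{\delta\gamma}>0$.
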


\begin{proof}
Consider first the composite Lyapunov function $V(\rho,\delta,\gamma) = \mathcal{V}(\rho^2, V_{\delta \gamma}(\delta,\gamma))$, which evidently satisfies $ \bar\alpha_1(|(\rho,\delta, \gamma)|_{\hat{\mathcal{S}}}) \le V(\rho,\delta,\gamma) \le \bar\alpha_2(|(\rho,\delta, \gamma)|_{\hat{\mathcal{S}}})$. 
Its time derivative along \eqref{eq:unicycle_polar_closed_loop-Gv-2} and \eqref{eq:unicycle_polar_closed_loop-Gv-3-pass} is given by 
\begin{equation}
\hspace*{-0.32cm}
\begin{aligned}[b]
&\dot{V}(\rho,\delta,\gamma) := 
-\dfrac{\partial \mathcal{V} }{ \partial r}( \rho^2, V_{\delta \gamma}(\delta,\gamma))
k_1\rho^2\cos^2\gamma  \\ &+ \dfrac{\partial \mathcal{V}}{\partial s} ( \rho^2, V_{\delta \gamma}(\delta,\gamma))\left[\dfrac{\partial V_{\delta \gamma}}{\partial \delta}
\frac{k_1}{2} \sin(2\gamma) -\dfrac{\partial V_{\delta \gamma}}{\partial\gamma}\tilde\omega(\delta,\gamma) \right]\,. \label{eq:V_dot_rho_delta_gamma_1}
\end{aligned}
\end{equation}
Taking into account \eqref{eq:V_dot_delta_gamma} and the positivity of the partial derivatives from property~\ref{it:positive_partials}), we have 
\begin{align}
\dot{V}(\rho,\delta,\gamma) &\le -\dfrac{\partial \mathcal{V} }{ \partial r}( \rho^2, V_{\delta \gamma}(\delta,\gamma))
k_1\rho^2\cos^2\gamma\nonumber\\
&\qquad \qquad - \dfrac{\partial \mathcal{V}}{\partial s} ( \rho^2, V_{\delta \gamma}(\delta,\gamma)) \alpha_{\delta \gamma}(V_{\delta \gamma}), \label{eq:V_dot_rho_delta_gamma_1_upper_bound}
\end{align}
which is negative whenever $\rho > 0$ or $(\delta, \gamma) \ne (0,0)$, and zero for $\rho = \delta = \gamma = 0$. By \cite[Lemma 4.3]{khalil_nonlinear_2002}, there exists a class $\mathcal{K}$-function $\alpha$ such that \eqref{eq:V_dot_rho_delta_gamma} holds. 
%
For the composite Lyapunov function $V(\rho,\delta,\gamma) = \mathcal{V}( V_{\delta \gamma}(\delta,\gamma), \rho^2)$, 
an identical argument establishes the result. 
\end{proof}

\begin{corollary} 
\label{cor:example_composite_CLFs}
    The  functions
 $\mathcal{V}(r,s)= r+s$, $\mathcal{V}(r,s)= \ln(1+r) + s$, $\mathcal{V}(r,s)= {\rm e}^{r} - 1 + s$ , $\mathcal{V}(r,s)= (1+r){\rm e}^{s} -1$, $\mathcal{V}(r,s) = r+s+rs$, $\mathcal{V}(r,s) = \cosh(r) + s - 1 $, and $\mathcal{V}(r,s)= \sqrt{1+r}+\sqrt{1+s} - 2$, satisfy properties \ref{it:pos_def_composite_Lyap_function}), \ref{it:the_other_condition}), and \ref{it:positive_partials}) of Proposition \ref{prop:composite_Lyap_function}.
\end{corollary}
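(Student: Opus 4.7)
The proof is an elementary, function-by-function verification. The plan is to observe that the seven candidates split naturally into two structural groups, and that for each group conditions~\ref{it:pos_def_composite_Lyap_function})--\ref{it:positive_partials}) reduce to simple monotonicity and limit facts about one-variable elementary functions on $[0,\infty)$.

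First I would treat the additively separable candidates $\mathcal{V}(r,s)=f(r)+g(s)$, namely $r+s$, $\ln(1+r)+s$, $e^r-1+s$, $\cosh(r)+s-1$, and $\sqrt{1+r}+\sqrt{1+s}-2$. In each case $f(0)=g(0)=0$, so $\mathcal{V}(0,0)=0$; both $f$ and $g$ are nondecreasing on $[0,\infty)$ with $f(r)>0$ for $r>0$ and $g(s)>0$ for $s>0$, which gives condition~\ref{it:pos_def_composite_Lyap_function}); at least one of $f,g$ is radially unbounded (in fact both are, except $g(s)=s$ is unbounded and trivially so), which gives condition~\ref{it:the_other_condition}) since $r+s\to\infty$ forces $r\to\infty$ or $s\to\infty$. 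For the partial derivatives, $\partial\mathcal{V}/\partial r=f'(r)$ and $\partial\mathcal{V}/\partial s=g'(s)$, each positive on the relevant half-line: $f'(r)>0$ for $r>0$ and $g'(s)>0$ for $s>0$. The only mildly delicate case is $\cosh(r)+s-1$, where $f'(r)=\sinh(r)$ vanishes at $r=0$; but this only requires interpreting condition~\ref{it:positive_partials}) in the natural separated sense ($\partial/\partial r>0$ whenever $r>0$), which is precisely what is used in the proof of Proposition~\ref{prop:composite_Lyap_function} since the term $-(\partial\mathcal V/\partial r)\,k_1\rho^2\cos^2\gamma$ vanishes anyway when $\rho=0$ (i.e.\ $r=0$).

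Next I would handle the two non-separable candidates by rewriting them in a product-plus-constant form. For $\mathcal{V}(r,s)=r+s+rs$, the identity $\mathcal{V}(r,s)=(1+r)(1+s)-1$ makes all three properties transparent: $\mathcal{V}(0,0)=0$, $\mathcal{V}(r,s)>0$ whenever $r>0$ or $s>0$ because $(1+r)(1+s)>1$, coercivity follows from $(1+r)(1+s)\ge 1+r+s$, and $\partial\mathcal{V}/\partial r=1+s\geq 1$ and $\partial\mathcal{V}/\partial s=1+r\geq 1$ on the nonnegative quadrant. Similarly for $\mathcal{V}(r,s)=(1+r)e^s-1$: evaluation at the origin gives $0$, the expression is strictly positive on the nonnegative quadrant minus the origin because $(1+r)e^s\geq 1+r+s>1$, coercivity follows from the same bound, and both partial derivatives $e^s$ and $(1+r)e^s$ are bounded below by $1$ on $\{r,s\ge 0\}$.

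The proof would therefore consist of a short table (or enumeration) verifying the three bullet points for each of the seven $\mathcal{V}$. I expect no genuine obstacle—the verification is mechanical once the product-form rewriting is noticed for the non-separable candidates, and the only item worth flagging is the $\cosh$ case, whose $r$-partial vanishes on the axis $r=0$ but does not affect the use in Proposition~\ref{prop:composite_Lyap_function}, as explained above.
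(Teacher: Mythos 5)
Your verification is correct; the paper offers no proof of this corollary at all (it is stated as an immediate consequence of elementary calculus), so there is no authorial argument to compare against, and your function-by-function check, including the product rewritings $r+s+rs=(1+r)(1+s)-1$ and $(1+r)e^s\ge 1+r+s$, is exactly the computation the authors leave implicit. The one point of substance you rightly flag is $\mathcal{V}(r,s)=\cosh(r)+s-1$: under a strictly literal reading of property~\ref{it:positive_partials}) (both partials positive whenever $r>0$ \emph{or} $s>0$) this candidate fails at $(0,s)$ with $s>0$, since $\partial\mathcal{V}/\partial r=\sinh(0)=0$; your distributed reading ($\partial\mathcal{V}/\partial r>0$ when $r>0$ and $\partial\mathcal{V}/\partial s>0$ when $s>0$) is the one actually used in the proof of Proposition~\ref{prop:composite_Lyap_function}, where the bound \eqref{eq:V_dot_rho_delta_gamma_1_upper_bound} only needs $\partial\mathcal{V}/\partial s>0$ when $V_{\delta\gamma}>0$ and $\partial\mathcal{V}/\partial r>0$ when $\rho^2>0$, so the corollary is sound under the intended interpretation. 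One cosmetic caution: your phrase ``at least one of $f,g$ is radially unbounded'' is not sufficient for coercivity of $f(r)+g(s)$ in general (both must be, given that each is only bounded below); since you immediately note that both are unbounded in every listed case, the argument stands, but the general principle should be stated with both.
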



The composite Lyapunov functions in Prop.~\ref{prop:composite_Lyap_function} are CLFs for \eqref{eq:unicycle_polar_closed_loop-Gv-1}, in accordance with the following definition. 
\begin{definition}[CLF for the unicycle \eqref{eq:unicycle_polar_closed_loop-Gv-1}]
\label{def-CLF}
A continuously differentiable function $(\rho,\delta,\gamma)\mapsto V$ 
is a \textit{control Lyapunov function} (CLF) with respect to \eqref{eq:unicycle_polar_closed_loop-Gv-1} if it has the following properties.
\begin{enumerate}
\item There exist class $\mathcal{K}$ functions  $(\bar\alpha_1,\bar\alpha_2)$ such that,  for all $(\rho,\delta,\gamma)$ in 
$\Sigma = \{\rho\geq 0\}\times \hat{\mathcal{T}}$, where $\hat{\mathcal{T}}\in\{\mathcal{T},\mathcal{T}_1, \mathcal{T}_2, \mathcal{T}_3\}$, 
$ \bar\alpha_1(|(\rho,\delta, \gamma)|_{\hat{\mathcal{S}}}) \le V(\rho,\delta,\gamma) \le \bar\alpha_2(|(\rho,\delta, \gamma)|_{\hat{\mathcal{S}}})$, where $\hat{\mathcal{S}} = \{\rho> 0\}\times \hat{\mathcal{T}}$. \label{it:clf_prop_1}
\item There exists $(v/\rho,\omega)\in \mathbb{R}^2$ such that
\begin{equation*}
\left[-\dfrac{\partial V}{\partial\rho}\rho\cos\gamma + \left(\dfrac{\partial V}{\partial\delta}+\dfrac{\partial V}{\partial\gamma}\right)\sin\gamma \right]\dfrac{v}{\rho} - \dfrac{\partial V}{\partial\gamma}\omega < 0 \,,
\end{equation*}
for all $(\rho,\delta,\gamma)\neq (0,0,0)$ in $\Sigma$. \label{it:clf_prop_2}
\end{enumerate}
\end{definition}


\section{The ``Genova Controller''}

\label{sec-Genova}
The system 
\eqref{eq:unicycle_polar_closed_loop-Gv-1} in closed-loop with  \eqref{eq-basic-v-control}, \eqref{eq-omega-general},  \eqref{eq:angular_velocity_genova} is
\begin{subequations}
\label{eq:unicycle_polar_closed_loop}
\begin{align}
\dot{\rho} &=  -k_1 \rho \cos^2(\gamma),  \\
\dot{\delta} &=  k_1 \frac{\sin(2\gamma)}{2}, \label{eq:delta_dot_genova}\\
\dot{\gamma} &= -k_2 \gamma -
k_3 \,\mbox{sinc}(2\gamma)
\delta.\label{eq:gamma_dot_genova}
\end{align}
\end{subequations}
Note that we refer to the feedback laws \eqref{eq-basic-v-control} and \eqref{eq:angular_velocity_genova} as the ``Genova controller'' \cite{aicardi1995} since three of its four coauthors were affiliated with the University of Genova either at the time the paper was published or at a later time. 

\subsection{Global asymptotic stability with a non-strict CLF}

Consider the energy function \eqref{eq:U_common_passivity} where $\Delta$ and $\Gamma$ are defined as in \eqref{eq:T_Delta_Gamma}.
Its time derivative along \eqref{eq:unicycle_polar_closed_loop} is  
\begin{equation}
\dot{U} = - 2 k_2 q^2 \gamma^2. \label{eq:dot_U_polynomial}
\end{equation}
In view of this,
the CLF proposed in \cite{aicardi1995} takes the form
\begin{equation}
V_{\rm G} = \rho^2 + k_3 U \,. \label{eq:CLF_aicardi}
\end{equation}
We refer to it as the ``Genova'' (non-strict) CLF.
Its derivative along the trajectories of \eqref{eq:unicycle_polar_closed_loop} is given by
\begin{align}
\dot{V}_{\rm G} = -2 k_1 \rho^2\cos^2(\gamma) - 2 k_1k_2\gamma^2. \label{eq:V_dot_aicardi}
\end{align}
The function $\dot{V}_{\rm G}(\rho,\gamma,\delta)$ is only negative semi-definite, since the expression contains no $\delta$-dependent term, but $\rho(t),\gamma(t) \rightarrow 0$ as $t\rightarrow \infty$ and $\rho(t)\equiv\gamma(t)\equiv 0$ implies that $\delta(t)\equiv 0$. As a result, by 
the Barbashin-Krasovskii theorem, the origin of \eqref{eq:unicycle_polar_closed_loop} is globally asymptotically stable.

\subsection{Strictifying and generalizing the Genova CLF}
\label{sec-strictify}

\begin{theorem}
\label{thm:unicycle_CLF_polynomial}
Consider the system 
\eqref{eq:unicycle_polar_closed_loop-Gv-1} in closed-loop with \eqref{eq-basic-v-control}, \eqref{eq-omega-general},  \eqref{eq:angular_velocity_genova}, 
with $k_1, k_2, k_3 > 0$ such that
$k_1 k_3\geq k_2^2$. 
The point $\rho = \delta = \gamma = 0$ is GAS on $\mathcal{S}$ in accordance with Def.~\ref{def-our-GAS}.
Furthermore, all the composite Lyapunov functions $V(\rho, \delta, \gamma) = \mathcal{V}(\rho^2, V_{\delta \gamma})$ and $V(\rho, \delta, \gamma) = \mathcal{V}(V_{\delta \gamma},\rho^2)$, for all functions $\mathcal{V}$ satisfying the conditions in Proposition \ref{prop:composite_Lyap_function}, and with $V_{\delta \gamma}$ defined as
\begin{equation}
\label{eq:CLF_polynomial_delta_gamma}
V_{\delta \gamma}(\delta,\gamma) =  k_3  \left(1+
\frac{2q^2+U}{2qk_2}\right)U
+   (\delta+q\gamma)^2 \,,
\end{equation}
where 
\begin{equation}
U=\delta^2+ q^2 \gamma^2\,, \quad q = \sqrt{k_1/k_3}, \label{eq:U_genova}  
\end{equation}
are (globally) strict CLFs for \eqref{eq:unicycle_polar_closed_loop-Gv-1} w.r.t. the input pair $(v/\rho,\omega)$ in the sense of Def. \ref{def-CLF}. 
\end{theorem}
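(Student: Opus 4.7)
The main engine is Proposition~\ref{prop:composite_Lyap_function}: once I show that the $(\delta,\gamma)$-subsystem \eqref{eq:unicycle_polar_closed_loop-Gv-3-pass} under the Genova steering \eqref{eq:angular_velocity_genova} admits $V_{\delta\gamma}$ in \eqref{eq:CLF_polynomial_delta_gamma} as a \emph{strict} Lyapunov function satisfying the sandwich and dissipation bounds required in the hypothesis, the composite statement for every admissible $\mathcal V$ follows verbatim from that proposition, and GAS on $\mathcal{S}$ follows by the standard comparison-lemma argument applied to $\dot V \le -\alpha(V)$, whence $V(t)\le \sigma(V(0),t)$ for some $\sigma\in\mathcal{KL}$, and using the $\mathcal K_\infty$ sandwich to extract $|(\rho,\delta,\gamma)|_{\mathcal S}\le \beta(|(\rho_0,\delta_0,\gamma_0)|_{\mathcal S},t)$ with $\beta\in\mathcal{KL}$ in the sense of Definition~\ref{def-our-GAS}. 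The sandwich bound $\alpha_1(|(\delta,\gamma)|_{\mathcal T})\le V_{\delta\gamma}\le \alpha_2(|(\delta,\gamma)|_{\mathcal T})$ is immediate from the polynomial form of \eqref{eq:CLF_polynomial_delta_gamma} in $(\delta,\gamma)$: the quadratic $U$, the quartic $U^2$, and the cross-quadratic $(\delta+q\gamma)^2$ are all globally proper and positive semidefinite, with $V_{\delta\gamma}=0$ only at $\delta=\gamma=0$.

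\textbf{Dissipation computation.} The backbone identity is the exact cancellation
\begin{equation*}
\dot U \;=\; 2\delta\dot\delta + 2q^2\gamma\dot\gamma \;=\; k_1\delta\sin(2\gamma) - 2q^2k_2\gamma^2 - q^2 k_3\,\delta\sin(2\gamma)\;=\;-2q^2k_2\gamma^2,
\end{equation*}
which uses $q^2 k_3=k_1$ to kill the $\delta\sin(2\gamma)$ coupling. Using $\sin(2\gamma)=2\gamma\,\mbox{sinc}(2\gamma)$, a direct calculation gives
\begin{equation*}
\dot\delta+q\dot\gamma \;=\; -qk_2\gamma + qk_3\,\mbox{sinc}(2\gamma)\,(q\gamma-\delta).
\end{equation*}
Differentiating \eqref{eq:CLF_polynomial_delta_gamma} and factoring yields
\begin{equation*}
\dot V_{\delta\gamma} \;=\; k_3\dot U\,\frac{qk_2+q^2+U}{qk_2} + 2(\delta+q\gamma)\bigl[-qk_2\gamma + qk_3\,\mbox{sinc}(2\gamma)(q\gamma-\delta)\bigr].
\end{equation*}

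\textbf{Negativity via the gain condition.} Substituting $\dot U=-2q^2k_2\gamma^2$ produces a leading strictly negative piece of order $\gamma^2$ (including a term $-\frac{2k_1}{q}U\gamma^2$ that couples to $U$), and the cross-term expansion contributes $-2qk_2\gamma\delta - 2q^2k_2\gamma^2 + 2qk_3\mbox{sinc}(2\gamma)(q^2\gamma^2-\delta^2)$. The indefinite pieces are $-2qk_2\gamma\delta$ and the $\mbox{sinc}$-term. I will dominate $2qk_2|\gamma\delta|$ by Young's inequality, absorbing $\delta^2$ into the $-\frac{2k_1 U\gamma^2}{q}\ge \frac{2k_1\delta^2\gamma^2}{q}\cdot(\text{lower order})$ structure and the $\gamma^2$ into the $-2k_1 k_2\gamma^2$ piece; the cross coupling survives precisely when $\sqrt{k_1 k_3}=qk_3\ge k_2$, i.e. exactly the stated gain condition $k_1k_3\ge k_2^2$. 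For the $\mbox{sinc}$-term I use $|\mbox{sinc}(2\gamma)|\le 1$ to bound $|2qk_3\mbox{sinc}(2\gamma)\delta^2| \le 2qk_3\delta^2$, which is then absorbed by combining with the $\frac{2k_1 U\gamma^2}{q}$ term after a second Young split on the $\gamma^2\delta^2$ product; here $q^2 k_3=k_1$ provides the missing algebraic factor. I expect the cleanest bookkeeping to yield a bound of the form $\dot V_{\delta\gamma}\le -c_1 \gamma^2 - c_2 \delta^2 - c_3 U^2$ for some positive $c_i$ depending on $k_1,k_2,k_3$, whence $\dot V_{\delta\gamma}\le -\alpha_{\delta\gamma}(V_{\delta\gamma})$ with $\alpha_{\delta\gamma}\in\mathcal K$.

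\textbf{Closing the argument.} With the strict dissipation above in hand, Proposition~\ref{prop:composite_Lyap_function} gives the two-sided bound and $\dot V\le -\alpha(V)$ for every composite $V=\mathcal V(\rho^2,V_{\delta\gamma})$ and $V=\mathcal V(V_{\delta\gamma},\rho^2)$ with $\mathcal V$ as in the statement, establishing GAS on $\mathcal S$. The CLF property in the sense of Definition~\ref{def-CLF} is verified by exhibiting the specific input pair $(v/\rho,\omega)$ coming from \eqref{eq-basic-v-control}, \eqref{eq-omega-general}, \eqref{eq:angular_velocity_genova}: the resulting $\dot V$ is exactly the strictly negative expression just established, so the inequality in item~2) of Definition~\ref{def-CLF} holds away from the origin. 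The main obstacle I anticipate is the bookkeeping in the negativity step --- tracking which Young split absorbs which indefinite monomial and ensuring that the gain condition $k_1k_3\ge k_2^2$ is used \emph{sharply} (rather than strict) --- since the $\mbox{sinc}$ nonlinearity produces a $\delta^2$ contribution that must be dominated through the quartic $U^2$ term in $V_{\delta\gamma}$, which is precisely why that term was included in the construction.
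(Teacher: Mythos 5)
Your architecture is the paper's: build a strict Lyapunov function for the $(\delta,\gamma)$-subsystem from the energy $U$, its square $U^2$, and the cross term $(\delta+q\gamma)^2$; then invoke Proposition~\ref{prop:composite_Lyap_function} for the composite CLFs and the comparison lemma for the $\mathcal{KL}$ estimate. Your exact computations ($\dot U=-2k_2q^2\gamma^2$ via $q^2k_3=k_1$, and $\dot\delta+q\dot\gamma=-qk_2\gamma+qk_3\,\mathrm{sinc}(2\gamma)(q\gamma-\delta)$) match the paper's. The gap is in the negativity step you yourself flag as the obstacle, and as written it is not just bookkeeping --- the specific bound you propose fails. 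You handle the indefinite term $-2qk_3\,\mathrm{sinc}(2\gamma)\,\delta^2$ by writing $|2qk_3\,\mathrm{sinc}(2\gamma)\delta^2|\le 2qk_3\delta^2$ and absorbing the result through the $\gamma^2\delta^2$ products. But taking the absolute value discards the sign information: near $\gamma=0$ (where $\mathrm{sinc}(2\gamma)\approx 1$ and the term is actually \emph{helpfully negative}) your bound replaces it by $+2qk_3\delta^2=+2\sqrt{k_1k_3}\,\delta^2$, while every $\gamma^2$-weighted negative term vanishes there and the only remaining $\delta^2$-negativity is of size $\tfrac{3}{2}k_2\delta^2$ from the cross term. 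Since the gain condition forces $2\sqrt{k_1k_3}\ge 2k_2>\tfrac{3}{2}k_2$, the positive residue cannot be absorbed and the sign of $\dot V_{\delta\gamma}$ is lost exactly in the regime $\gamma\to 0$, $\delta$ large.

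The missing ingredient is the paper's Lemma~\ref{lemma:sinterm_upperbound}, inequality \eqref{eq:bound1}: $1-k\,\mathrm{sinc}(2x)\le kx^2$ for all $k\ge 1$. After the Young split on $2k_2 z\delta$, the $\delta^2$ terms assemble into $2k_2\delta^2\bigl(1-\tfrac{qk_3}{k_2}\,\mathrm{sinc}(2\gamma)\bigr)$, and applying \eqref{eq:bound1} with $k=qk_3/k_2=\sqrt{k_1k_3}/k_2\ge 1$ (this is where $k_1k_3\ge k_2^2$ enters) converts the residue into $2qk_3\,\delta^2\gamma^2$ --- a term carrying the $\gamma^2$ factor, which the $\tfrac{k_3}{2qk_2}U^2$ augmentation then cancels exactly, since $\tfrac{k_3}{2qk_2}\tfrac{\mathrm{d}}{\mathrm{d}t}U^2=-2k_1q\gamma^2-2k_3q\,\delta^2\gamma^2$. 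With that substitution your plan closes and yields the paper's final bound $\dot V_{\delta\gamma}\le -2k_1k_2\gamma^2-\tfrac{3}{2}k_2(\delta+q\gamma)^2-2k_1q\gamma^4$; the rest of your argument (Proposition~\ref{prop:composite_Lyap_function}, the sandwich bounds, and the comparison lemma) is correct and identical to the paper's.
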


\begin{proof}
To construct strict CLFs, we seek to improve upon $V_{\rm G}$ in \eqref{eq:CLF_aicardi}. Notably, damping in $\gamma$ is already introduced through the $U$ term, as reflected by the $-\gamma^2$ term in its time derivative. We aim to introduce a similar damping effect in $\delta$ to achieve a strict Lyapunov function.
Define
\begin{equation}
    \Pi = z^2, \quad  z = \delta + q \gamma \,.\label{eq:Pi_polynominal}
\end{equation}
Then, from (\ref{eq:unicycle_polar_closed_loop}) we obtain
\begin{equation}
\dot{z}
= -k_2 z + k_2 \delta - k_3 q \,\mbox{sinc}(2\gamma)
\left(\delta-q \gamma \right).
\end{equation}
Taking the time derivative of \eqref{eq:Pi_polynominal} yields 
\begin{equation}
\hspace*{-0.2cm}
\dot{\Pi} = -2k_2 z^2 + 2k_2 \delta z
- 2k_3 q \,\mbox{sinc}(2\gamma)
\delta^2 +  k_1 q \sin(2\gamma) \gamma \,,\label{eq:dot_Pi_polynomial}    
\end{equation} 
where $k_3 q^3 = k_1 q$. Since 
$z \delta \le \frac{z^2}{4} + \delta^2$ and
$\frac{\sin(2\gamma) \gamma}{2}  \le  \gamma^2$, we get
\begin{equation}
\dot{\Pi} 
\le - \frac{3}{2} k_2 \Pi +  2 k_1 q  \gamma^2
+  2 k_2 \delta^2 \left( 1 -   q \frac{k_3}{k_2} \,\mbox{sinc}(2\gamma)
\right) 
\label{ddtpi}.
\end{equation}
Using \eqref{eq:bound1} in Lemma \ref{lemma:sinterm_upperbound}, with $ k_1 k_3 \ge k_2^2$, we upper bound \eqref{ddtpi} as
\begin{equation}
\label{eq-Pidot-problem}
\dot{\Pi} \leq -  \frac{3}{2} k_2 \Pi +2 k_1 q  \gamma^2 + 2 k_3 q  \delta^2 \gamma^2 \,.
\end{equation}
We observe that $\Pi$  introduces a damping effect in $\delta$. However, it also produces undesirable terms involving $\gamma^2$ and $\delta^2 \gamma^2$. To eliminate their effect, consider \eqref{eq:U_genova} and \eqref{eq:dot_U_polynomial}, and note that the time derivative of $\frac{k_1}{k_2 q}U$ gives $\frac{k_1}{k_2 q}\dot{U} = -2k_1 q \gamma^2$, whereas the time derivative of $\frac{k_3}{2k_2 q}U^2$ gives
\begin{equation}
\frac{k_3 }{2 k_2 q}\ddt{U^2}  = 
- \frac{ k_3 q }{ 2} 
U  \gamma^2
=-2 k_1 q \gamma^2 - 2 k_3 q  \delta^2 \gamma^2\,.
\end{equation}
To cancel the positive terms in \eqref{eq-Pidot-problem}, we augment  $\Pi$ as
\begin{align}
\Pi_1= \Pi+ \frac{k_1}{k_2 q} U +  \frac{k_3}{2k_2 q}U^2 \,,
\label{eq:Pi_1_genova}
\end{align}
and obtain
\begin{align}
\dot\Pi_1\leq - \frac{3}{2} k_2 \Pi -   2 k_1 q V_0^2. \label{eq:V_dot_SD}
\end{align}
In conclusion, from \eqref{eq:U_genova} and \eqref{eq:Pi_1_genova}, we construct a Lyapunov function for the $(\delta,\gamma)$-subsystem as $V_{\delta \gamma} = k_3 U + \Pi_1$ which is equivalent to \eqref{eq:CLF_polynomial_delta_gamma}, 
for which there exist class $\mathcal{K}_{\infty}$ functions $\alpha_1, \alpha_2$ such that $\alpha_1(\abs{(\delta,\gamma)}_{\mathcal{T}}) \le V_{\delta \gamma }(\delta, \gamma) \le \alpha_2(\abs{(\delta,\gamma)}_{\mathcal{T}})$. The time-derivative of \eqref{eq:CLF_polynomial_delta_gamma} along the trajectories of \eqref{eq:unicycle_polar_closed_loop-Gv-3-pass}, \eqref{eq:angular_velocity_genova} is 
\begin{equation}
\dot{V}_{\delta \gamma} = - 2 k_1k_2\gamma^2 - \frac{3}{2} k_2 (\delta + q \gamma)^2 -   2 k_1 q \gamma^4\,,
\end{equation}
which is negative for all $(\rho,\delta, \gamma) \ne (0,0,0)$ in $\{\rho \ge 0\} \times \mathcal{T}$, and 
based on \cite[Lemma 4.3]{khalil_nonlinear_2002},  
there exists 
$\alpha_3\in\mathcal{K}$, such that, $\dot{V}_{\delta \gamma} \le - \alpha_3 \circ \alpha_2^{-1}
(V_{\delta \gamma})$, where $
\alpha_3 \circ \alpha_2^{-1}\in\mathcal{K}$. 
The Lyapunov function $V_{\delta \gamma}$ satisfies the assumptions of Proposition \ref{prop:composite_Lyap_function}, which implies that the composite Lyapunov functions $V(\rho, \delta, \gamma) = \mathcal{V}(\rho^2, V_{\delta \gamma})$ and $V(\rho, \delta, \gamma) = \mathcal{V}(V_{\delta \gamma},\rho^2)$ satisfy $\dot{V}(\rho,\delta,\gamma) \le - \alpha(V)$, and hence are strict CLFs for  \eqref{eq:unicycle_polar_closed_loop-Gv-1} in the sense of Def. \ref{def-CLF}. 
Furthermore, $\dot{V}(\rho,\delta,\gamma) \le - \alpha(V)$ implies existence of  $\beta\in\mathcal{KL}$  such that $V(t) \le \beta(V_0, t)$, 
for all $t\geq 0$ according to \cite[Lemma 4.4]{khalil_nonlinear_2002}. Considering the fact, from Proposition \ref{prop:composite_Lyap_function}, that the composite Lyapunov functions are bounded as $ \bar\alpha_1(|(\rho,\delta, \gamma)|_{\mathcal{S}}) \le V(\rho,\delta,\gamma) \le \bar\alpha_2(|(\rho,\delta, \gamma)|_{\mathcal{S}})$, from  $V \le \beta(V_0) $, we have that     
\begin{equation}
\abs{(\rho,\delta,\gamma)}_{\mathcal{S}} \le \bar{\alpha}_1^{-1}(\beta( \bar{\alpha}_2(\abs{(\rho_0, \delta_0,\gamma_0)}_{\mathcal{S}}),t)) \,, \label{eq:KL-estimate}
\end{equation}
where $\bar{\alpha}_1^{-1}(\beta( \bar{\alpha}_2(r),t))$ is class $\mathcal{KL}$ in $(r,t)$.
Thus, 
$\rho =  \delta = \gamma = 0$ is GAS on $\mathcal{S}$ in accordance to Def. \ref{def-our-GAS}.
\end{proof}

Theorem \ref{thm:unicycle_CLF_polynomial} establishes global asymptotic stability in the polar variables $(\rho,\delta,\gamma)$. One should not expect that GAS also follows for the closed-loop system in the Cartesian variables $(x,y,\theta)$. However,  attractivity of the point $(x,y,\theta)=(0,0,0)$ does follow. 

\begin{corollary}
\label{cor-attractivity}
\textit Consider the system 
\eqref{eq:unicycle_cartesian} in closed-loop with \eqref{eq-basic-v-control}, \eqref{eq-omega-general},  \eqref{eq:angular_velocity_genova}.
For all initial conditions $(x_0,y_0,\theta_0)\in\mathbb{R}^3$ such that $x_0^2 + y_0^2>0$, one has 
\begin{align*}
&\abs{x(t)} + \abs{y(t)} + \abs{\theta(t)}\nonumber\\
&\le  {\beta}(\abs{x_0} + \abs{y_0} + \abs{\theta_0}  + \abs{{\rm atan2}(y_0,x_0) + \pi},t), \quad \forall t \ge 0 \label{eq:KL-estimate_xytheta}
\end{align*}
where $\beta\in\mathcal{KL}$  is defined as  $\beta(r,t) := \sqrt{2}\bar{\alpha}_1^{-1}(\beta( \bar{\alpha}_2(2 r),t))$.
\end{corollary}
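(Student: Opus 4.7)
The plan is to transport the polar-coordinate $\mathcal{KL}$-estimate \eqref{eq:KL-estimate} provided by Theorem~\ref{thm:unicycle_CLF_polynomial} to Cartesian coordinates via the elementary metric comparisons supplied by Tables~\ref{tab:polar_coordinates}--\ref{tab:inverse_polar_coordinates}. The assumption $x_0^2+y_0^2>0$ ensures $\rho_0>0$, so $(\rho_0,\delta_0,\gamma_0)$ is well defined; moreover, integration of \eqref{eq:unicycle_polar_closed_loop} gives $\rho(t)=\rho_0\exp\!\bigl(-k_1\int_0^t\cos^2\gamma(\tau)\,\mathrm{d}\tau\bigr)>0$ for all $t\ge 0$, so the closed-loop Cartesian trajectory is well defined as the image, under the continuous inverse transformation of Table~\ref{tab:inverse_polar_coordinates}, of the polar trajectory. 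In particular $\theta(t)=\delta(t)-\gamma(t)$ agrees with the Cartesian heading, since $\dot\theta=\dot\delta-\dot\gamma=\omega$ by \eqref{eq:unicycle_polar_deltadot}--\eqref{eq:unicycle_polar_gammadot}.

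The first step is to bound the Cartesian output from above by the polar metric. From $x=-\rho\cos\delta$, $y=-\rho\sin\delta$, $\theta=\delta-\gamma$, together with the elementary inequality $|\cos\delta|+|\sin\delta|\le\sqrt{2}$, one obtains
\begin{equation*}
|x(t)|+|y(t)|+|\theta(t)|\;\le\;\sqrt{2}\,\rho(t)+|\delta(t)|+|\gamma(t)|\;\le\;\sqrt{2}\,|(\rho(t),\delta(t),\gamma(t))|_{\mathcal{S}}.
\end{equation*}
The second step bounds the initial polar metric by the Cartesian initial-data expression appearing on the right-hand side of the claim. Using $\rho_0=\sqrt{x_0^2+y_0^2}\le|x_0|+|y_0|$, $|\delta_0|=|{\rm atan2}(y_0,x_0)+\pi|$, and $|\gamma_0|=|\delta_0-\theta_0|\le|\delta_0|+|\theta_0|$, summation yields
\begin{equation*}
|(\rho_0,\delta_0,\gamma_0)|_{\mathcal{S}}\;\le\;|x_0|+|y_0|+|\theta_0|+2\,|{\rm atan2}(y_0,x_0)+\pi|\;\le\;2\bigl(|x_0|+|y_0|+|\theta_0|+|{\rm atan2}(y_0,x_0)+\pi|\bigr).
\end{equation*}

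Chaining these two inequalities through \eqref{eq:KL-estimate} and invoking the monotonicity of $\bar\alpha_2\in\mathcal{K}_\infty$, of $\bar\alpha_1^{-1}\in\mathcal{K}_\infty$, and of $\beta\in\mathcal{KL}$ in its first argument yields precisely the claimed estimate, with the redefined $\beta(r,t):=\sqrt{2}\,\bar\alpha_1^{-1}\!\bigl(\beta(\bar\alpha_2(2r),t)\bigr)$ remaining $\mathcal{KL}$ as a composition of $\mathcal{K}_\infty$ functions with a $\mathcal{KL}$ function. The substantive point---and the reason the discussion preceding the corollary warns that only \emph{attractivity}, not stability, carries over to Cartesian coordinates---is that the term $|{\rm atan2}(y_0,x_0)+\pi|$ does not tend to zero as $(x_0,y_0)\to 0$, reflecting the jump of the polar transformation across $\{x<0,y=0\}$. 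Beyond careful bookkeeping of that measure-zero discontinuity set---which is the only conceptual obstacle---the result is a direct corollary of Theorem~\ref{thm:unicycle_CLF_polynomial}.
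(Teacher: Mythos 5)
Your proposal is correct and follows essentially the same route as the paper: the paper's own (one-line) justification rests on exactly the sandwich $\tfrac{1}{\sqrt{2}}\left(\abs{x}+\abs{y}+\abs{\theta}\right)\le\abs{(\rho,\delta,\gamma)}_{\mathcal{S}}\le\abs{x}+\abs{y}+\abs{\theta}+2\abs{\delta}$, chained through the $\mathcal{KL}$-estimate \eqref{eq:KL-estimate} of Theorem~\ref{thm:unicycle_CLF_polynomial}, which is precisely what you derive (your two displayed inequalities are the two halves of that sandwich, with the factor $2$ absorbed into $\bar\alpha_2(2r)$). Your additional remarks on well-posedness of the polar trajectory and on why only attractivity survives the transformation are consistent with the paper's surrounding discussion and introduce no gap.
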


This corollary follows from \eqref{eq:KL-estimate}, with the aid of the inequalities 
$\frac{1}{\sqrt{2}} (\abs{x} + \abs{y} + \abs{\theta}) \le \abs{(\rho,\delta,\gamma)}_{\mathcal{S}} \le \abs{x} + \abs{y} + \abs{\theta}  + 2\abs{\delta}$,  established with the inverse transformations from Table \ref{tab:inverse_polar_coordinates}.

The achievement of attractivity, despite not having stability in the Cartesian coordinates $(x,y,\theta)$, expressed in Corollary~\ref{cor-attractivity} is not a shortcoming of our design method. It is consistent with the result of \cite[Remark 1.6]{coron1994relation}, proven also in \cite[pg. 43]{praly2022fonctions}, that the unicycle, in the Cartesian representation, is impossible to stabilize by static feedback, even if  feedback is permitted to be discontinuous, as is the Genova feedback in the Cartesian coordinates, 
as well as all the other feedback laws in this paper. 
\begin{remark}
    In an independent discovery under review~\cite{BoWang2025further}, submitted after the development of our result,  
a global strict CLF for the closed-loop \eqref{eq:unicycle_polar_closed_loop}
shares some similarities to our CLFs  in Theorem~\ref{thm:unicycle_CLF_polynomial}, but is neither as general nor proven generally composable with $\rho^2$ (as in our Prop. \ref{prop:composite_Lyap_function}). Our further contributions are the three additional CLFs on the state-spaces \eqref{eq:ss_S1}--\eqref{eq:ss_S3}, with which we build a full passivity-based methodology comprising a quartet of designs.
\end{remark}

\section{Passivity-Based Angle-Constraining Controllers}

\subsection{Bounded-in-LoS-Angle (BoLSA) Controller}\label{sec:BoLSA}

\begin{theorem}[BoLSA CLFs] 
\label{thm:unicycle_CLF_BoLSA}
Consider the system \eqref{eq:unicycle_polar_closed_loop-Gv-1} in closed-loop with \eqref{eq-basic-v-control}, \eqref{eq-omega-general} and \eqref{eq-control-bounded-in-gamma}, with  $k_1, k_2, k_3 > 0$ such that 
$k_1 k_3\geq k_2^2$.
The point $\rho = \delta = \gamma = 0$ is GAS on $\mathcal{S}_1$ in accordance with Def. \ref{def-our-GAS}.
Furthermore, all the composite Lyapunov functions $V(\rho, \delta, \gamma) = \mathcal{V}(\rho^2, V_{\delta \gamma})$ and $V(\rho, \delta, \gamma) = \mathcal{V}(V_{\delta \gamma},\rho^2)$, for all functions $\mathcal{V}$ satisfying the conditions in Proposition \ref{prop:composite_Lyap_function}, and with $V_{\delta \gamma}$ defined as
\begin{equation}
\label{eq:CLF_Bolsa_delta_gamma}
V_{\delta \gamma}(\delta, \gamma) =  k_3  \left(1+
\frac{2q^2+U}{2qk_2}\right)U
+   \left (\delta+2q\tan \frac{\gamma}{2} \right)^2\,,
\end{equation}
where
\begin{equation}
U=\delta^2+  4q^2 \tan^2\frac{\gamma}{2} \,, \quad q = \sqrt{k_1/k_3} \,, \label{eq:U_bolsa}
\end{equation} 
are (globally) strict CLFs for \eqref{eq:unicycle_polar_closed_loop-Gv-1} w.r.t. the input pair $(v/\rho,\omega)$ in the sense of Def.~\ref{def-CLF}. 
\end{theorem}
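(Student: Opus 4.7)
My plan is to mirror the proof of Theorem~\ref{thm:unicycle_CLF_polynomial} (the Genova case), specialized to the state-space $\mathcal{T}_1$ by using $\Delta=\delta$ and $\Gamma=2\tan(\gamma/2)$ from \eqref{eq:T1_Delta_Gamma}. First I would verify that the passivity identity \eqref{eq:Gamma_squared} still holds by direct calculation: computing $\dot\Gamma=\sec^2(\gamma/2)\dot\gamma$ and plugging in the BoLSA law \eqref{eq-control-bounded-in-gamma}, the definition $q=\sqrt{k_1/k_3}$ yields
\begin{equation*}
\dot U = -2k_2 q^2 \Gamma^2 = -8k_2 q^2 \tan^2\tfrac{\gamma}{2},
\end{equation*}
exactly as in the generic passivity calculation. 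This is the module contributing strict damping of $\gamma$ through a barrier term.

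Next I would introduce $z=\delta+q\Gamma=\delta+2q\tan(\gamma/2)$ and $\Pi=z^2$, and compute $\dot\Pi$ along \eqref{eq:unicycle_polar_closed_loop-Gv-3-pass} with BoLSA substituted for $\tilde\omega$. Exactly as in \eqref{eq:dot_Pi_polynomial}, the identity $k_3 q^3=k_1 q$ and the Young-type inequalities $z\delta\le z^2/4+\delta^2$ and $\sin(2\gamma)\gamma/2\le\gamma^2$ (appropriately adapted, using the bound from Lemma~\ref{lemma:sinterm_upperbound} combined with $k_1k_3\ge k_2^2$) should produce the template inequality
\begin{equation*}
\dot\Pi \le -\tfrac{3}{2}k_2 \Pi + 2k_1 q\,\Gamma^2 + 2k_3 q\,\delta^2\,\tfrac{\Gamma^2}{4}\cdot(\text{bounded}),
\end{equation*}
analogous to \eqref{eq-Pidot-problem}. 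The augmentation $\Pi_1=\Pi+\frac{k_1}{k_2 q}U+\frac{k_3}{2k_2 q}U^2$ then cancels the two offending positive terms thanks to $\dot U=-2k_2 q^2\Gamma^2$, delivering a strictly negative $\dot V_{\delta\gamma}$ for $V_{\delta\gamma}=k_3 U+\Pi_1$, which is precisely \eqref{eq:CLF_Bolsa_delta_gamma}.

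The main obstacle, and what differs genuinely from the Genova proof, is verifying the \emph{barrier} property: I need to show that $V_{\delta\gamma}$ admits $\mathcal{K}_\infty$ bounds with respect to the $\mathcal{T}_1$ metric $|\delta|+|2\tan(\gamma/2)|$. Because $U$ and $U^2$ carry $\tan^2(\gamma/2)$ and $\tan^4(\gamma/2)$ terms, and $z^2$ carries $(\delta+2q\tan(\gamma/2))^2$, every term blows up precisely as $|\gamma|\to\pi$, giving the required properness on $\mathcal{T}_1$; the lower bound follows by a sum-of-squares inspection and the upper bound by elementary algebraic bounds in $|\Delta|+|\Gamma|$. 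The inequality $\dot V_{\delta\gamma}\le-\alpha_3\!\circ\!\alpha_2^{-1}(V_{\delta\gamma})$ follows from \cite[Lemma 4.3]{khalil_nonlinear_2002}.

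To finish, I invoke Proposition~\ref{prop:composite_Lyap_function} to lift $V_{\delta\gamma}$ to the composite CLFs $\mathcal{V}(\rho^2,V_{\delta\gamma})$ and $\mathcal{V}(V_{\delta\gamma},\rho^2)$, which inherit the class-$\mathcal{K}_\infty$ bounds on $\mathcal{S}_1$ and satisfy $\dot V\le-\alpha(V)$ for some $\alpha\in\mathcal{K}$. Then \cite[Lemma 4.4]{khalil_nonlinear_2002} produces $\beta\in\mathcal{KL}$ with $V(t)\le\beta(V_0,t)$, from which the usual sandwich argument gives $|(\rho,\delta,\gamma)|_{\mathcal{S}_1}\le\bar\alpha_1^{-1}(\beta(\bar\alpha_2(|(\rho_0,\delta_0,\gamma_0)|_{\mathcal{S}_1}),t))$, establishing GAS on $\mathcal{S}_1$ in the sense of Def.~\ref{def-our-GAS}. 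The CLF property in the sense of Def.~\ref{def-CLF} is then immediate since the constructed $\tilde\omega$ realizes the required negativity of the bracket in item~\ref{it:clf_prop_2}).
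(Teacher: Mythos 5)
Your proposal is correct and follows the paper's proof essentially verbatim: the same passivity module yielding $\dot U = -8k_2q^2\tan^2\frac{\gamma}{2}$, the same cross term $\Pi=\bigl(\delta+2q\tan\frac{\gamma}{2}\bigr)^2$, the same augmentation $\Pi_1=\Pi+\frac{k_3 q}{k_2}U+\frac{k_3}{2qk_2}U^2$ (your coefficient $\frac{k_1}{k_2 q}$ equals $\frac{k_3 q}{k_2}$ since $q=\sqrt{k_1/k_3}$), and the same lift through Proposition~\ref{prop:composite_Lyap_function} and the Khalil lemmas. The only detail worth pinning down is that the inequality needed from Lemma~\ref{lemma:sinterm_upperbound} here is \eqref{eq:bound2}, controlling $1-k\cos\gamma(1+\cos\gamma)$ by a multiple of $\tan^2\frac{\gamma}{2}$, rather than the sinc bound \eqref{eq:bound1} used in the Genova case.
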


\begin{proof}
We start by studying the feedback law \eqref{eq-control-bounded-in-gamma} and the CLF \eqref{eq:CLF_Bolsa_delta_gamma} for the $(\delta, \gamma)$-subsystem   \eqref{eq:unicycle_polar_closed_loop-Gv-3-pass}. For this, consider the trigonometric indentities:
\begin{eqnarray}
\tan\frac{\gamma}{2} &=& \frac{\sin\gamma}{1+\cos\gamma} \label{eq:trig_identity_1}
\\
\left(\tan\frac{\gamma}{2}\right)^\prime &=& \frac{1}{1+\cos\gamma}  = \frac{1}{2}\left(1+\tan^2\frac{\gamma}{2}\right) \label{eq:trig_identity_2}
\\
\left(\tan^2\frac{\gamma}{2}\right)' &=& \frac{2 }{\sin \gamma} \tan^2 \frac{\gamma}{2} =
2 \frac{\sin\gamma}{(1+\cos\gamma)^2}. \label{eq:trig_identity_3}
\end{eqnarray}
Define
\begin{equation}
\Pi = z^2, \quad  z = \delta + 2 q \tan\frac{\gamma}{2}. 
\end{equation}
Note that the term $4\tan^2 (\gamma/2)$ in \eqref{eq:U_bolsa} is radially unbounded on the interval $(-\pi,\pi)$, namely, it disallows ``winding'' the LoS angle to arbitrary values. 
The time derivative of the Lyapunov expression \eqref{eq:U_bolsa} along the solutions of \eqref{eq:unicycle_polar_closed_loop-Gv-3-pass} is
\begin{equation}
\dot U  =  \frac{8 q^2 \sin\gamma}{(1+\cos\gamma)^2} \left[ k_3 \delta \frac{(1+\cos\gamma)^2}{4} \cos\gamma 
- \tilde{\omega}\right]\,.
\end{equation}
We pick the control as in \eqref{eq-control-bounded-in-gamma}, where we use the fact that $(1+\cos\gamma)^2/4 = 1/(1+\tan^2 \gamma/2)^2$, which gives
\begin{equation}
\dot U  = -2 \frac{k_2  q^2 4 \sin^2\gamma}{(1+\cos\gamma)^2} = - 2 k_2 4 q^2  \tan^2 \frac{\gamma}{2}\,,
\end{equation}
and the dynamics for the LoS angle 
\begin{equation}
\dot\gamma = - k_2 \sin\gamma -  k_3 \delta\frac{(1+\cos\gamma)^2}{4} \cos\gamma  \,.
\end{equation}
Next, we consider the dynamics of the error variable $z$, which are given by
\begin{equation}
\hspace*{-0.2cm}
\dot z = -k_2z + k_2\delta + k_3 q \frac{\cos\gamma(1+\cos\gamma)}{2}\left( 2 q \tan\frac{\gamma}{2}-\delta\right).    
\end{equation}
Then,
\begin{equation}
\begin{aligned}[b]
\dot \Pi = - 2k_2z^2 + 2k_2 z \delta &+ 
4 k_1 q \cos\gamma(1+\cos\gamma) \tan^2\frac{\gamma}{2} \\
&\qquad  -k_3 q \frac{\cos\gamma(1+\cos\gamma)} {2}\delta^2\,.
\end{aligned}
\end{equation}
Since $z\delta \leq  \frac{z^2}{4} + \delta^2$ and $\cos\gamma(1+\cos\gamma) \tan^2\frac{\gamma}{2} \leq  2 \tan^2 \frac{\gamma}{2}
$, we get
\begin{equation}
\hspace*{-0.2cm}
\begin{aligned}[b]
    \dot \Pi \le - \frac{3}{2}k_2 \Pi &+ 2 k_ 1 q 4 \tan^2 \frac{\gamma}{2} \\
&+ 2 k_2 \delta^2\left(1- q \frac{k_3 }{2 k_2} 
\cos\gamma(1+\cos\gamma)\right)\,.
\end{aligned}
\end{equation}
It then follows from \eqref{eq:bound2} in Lemma \ref{lemma:sinterm_upperbound} with $k_1 k_3 \ge  k_2^2$ that 
\begin{equation}
\dot\Pi
\leq - \frac{3}{2} k_2 \Pi + 2 k_1 q  4 \tan^2 \frac{\gamma}{2} +   2 k_3 q \delta^2 4 \tan^2 \frac{\gamma}{2}.
\end{equation}
Taking into account that $\frac{k_3 }{k_2} q \dot U = - 2k_2 q^2 V_0$, and $\frac{k_3}{2q k_2 } \dot{U}^2 = -2k_3 q \delta^2 4 \tan^2 \frac{\gamma}{2} - 2k_1 q \left(4 \tan^2 \frac{\gamma}{2}\right)^2$ and denoting
\begin{equation}
\Pi_1 = \Pi + \frac{k_3 }{k_2} q  U  + \frac{k_3 }{2 q k_2} U^2\,, \label{eq:Pi_1_bolsa}
\end{equation}
we get
\begin{equation}
\dot\Pi_1 \leq - \frac{3}{2}k_2\Pi - 2 k_1 q \left(4 \tan^2 \frac{\gamma}{2}\right)^2\,.
\end{equation}
From \eqref{eq:U_bolsa} and \eqref{eq:Pi_1_bolsa} we construct the Lyapunov function for the $(\delta, \gamma)$-subsystem as $V_{\delta \gamma} = k_3 U + \Pi_1$ which is equivalent to \eqref{eq:CLF_Bolsa_delta_gamma} 
and has the time derivative along the solutions of \eqref{eq:unicycle_polar_closed_loop-Gv-3-pass}, \eqref{eq-control-bounded-in-gamma} such that
\begin{equation}
    \dot V_{\delta \gamma} \le  - 2k_1 k_2 V_0  - \frac{3}{2}k_2 \left(\delta + q \tan \frac{\gamma}{2}\right)^2 - 2k_1 q V_0^2\,,
\label{eq:V_dot_Bolsa}
\end{equation} 
with $V_0 = 4 \tan^2\frac{\gamma}{2}$, which is negative for all $(\rho,\delta, \gamma) \ne (0,0,0)$ in $\{\rho \ge 0\} \times \mathcal{T}_1$. Analogous to the proof of Thm.~\ref{thm:unicycle_CLF_polynomial}, we conclude that $V(\rho, \delta, \gamma) = \mathcal{V}(\rho^2, V_{\delta \gamma})$ and $V(\rho, \delta, \gamma) = \mathcal{V}(V_{\delta \gamma},\rho^2)$ are such that $\dot{V}(\rho,\delta,\gamma) \le - \alpha(V)$, and hence are strict CLFs for \eqref{eq:unicycle_polar_closed_loop-Gv-1}, and  
$\rho =  \delta = \gamma = 0$ is GAS on $\mathcal{S}_1$.
\end{proof}

\subsection{Bounded-in-Polar-Angle (BoPA) Controller}

\begin{theorem}[BoPA CLFs] 
\label{thm:unicycle_CLF_BoPA}
Consider the system \eqref{eq:unicycle_polar_closed_loop-Gv-1} in closed-loop with \eqref{eq-basic-v-control}, \eqref{eq-omega-general} and \eqref{eq:Bopa_controller}, with  $k_1, k_2, k_3 > 0$ such that  
$k_1 k_3\geq k_2^2$.
The point $\rho = \delta = \gamma = 0$ is GAS on $\mathcal{S}_2$ in accordance with Def. \ref{def-our-GAS}.
Furthermore, all the composite Lyapunov functions $V(\rho, \delta, \gamma) = \mathcal{V}(\rho^2, V_{\delta \gamma})$ and $V(\rho, \delta, \gamma) = \mathcal{V}(V_{\delta \gamma},\rho^2)$, for all functions $\mathcal{V}$ satisfying the conditions in Proposition \ref{prop:composite_Lyap_function}, and with $V_{\delta \gamma}$ defined as
\begin{equation}
\label{eq:CLF_BoPa_delta_gamma}
\hspace*{-0.35cm}
V_{\delta \gamma}(\delta,\gamma) =  \tilde{a} \left[ (1+U)^3 -1\right]
+   \left (2\tan\frac{\delta}{2} + q \gamma \right)^2\,,
\end{equation}
where
\begin{equation}
\quad U= 4\tan^2 \frac{\delta}{2}+ q^2 \gamma^2\,,  \quad q = \sqrt{k_1 / k_3} \,, \label{eq:U_Bopa}
\end{equation}
with  $ \tilde{a} = \max\{k_1 q, \sqrt{k_1 k_3}\} / 3k_2 q^2$ 
are (globally) strict CLFs for \eqref{eq:unicycle_polar_closed_loop-Gv-1} w.r.t. the input pair $(v/\rho,\omega)$ in the sense of Def.~\ref{def-CLF}. 
\end{theorem}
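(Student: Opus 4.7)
The plan is to mirror the strategy used in Theorems~\ref{thm:unicycle_CLF_polynomial} and~\ref{thm:unicycle_CLF_BoLSA}, adapting the $(\Delta,\Gamma)$ pair to \eqref{eq:T2_Delta_Gamma}, so that $\Delta = 2\tan(\delta/2)$ diffeomorphically maps $|\delta|<\pi$ onto $\mathbb{R}$. The essential novelty is that the BoPA control \eqref{eq:Bopa_controller} contains the \emph{superlinear} factor $(1+\tan^2(\delta/2))\tan(\delta/2)$, in place of the bounded $\sin\gamma$-type factor that appears in BoLSA. Consequently the cross terms arising in $\dot\Pi$ are cubic in $\tan(\delta/2)$ and cannot be dominated by a quadratic $U^2$ augmentation, which is why \eqref{eq:CLF_BoPa_delta_gamma} employs the \emph{cubic}-in-$U$ barrier $\tilde a[(1+U)^3-1]$.

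The first step is to compute $\dot U$ along \eqref{eq:unicycle_polar_closed_loop-Gv-3-pass}, \eqref{eq:Bopa_controller}. Using $(2\tan(\delta/2))' = 1+\tan^2(\delta/2)$ and the passivity construction \eqref{eq:passivity_general_control}, the cross terms cancel and yield the clean identity $\dot U = -2k_2 q^2\gamma^2$, matching \eqref{eq:Gamma_squared}. The second step is to introduce $z = 2\tan(\delta/2)+q\gamma$, $\Pi = z^2$, substitute $q\gamma = z-2\tan(\delta/2)$ in $\dot z$, and arrive at an expression of the shape
\[
\dot\Pi = -2k_2 z^2 + 4k_2 z\tan(\delta/2) + 2qk_3(1+\tan^2(\delta/2))\,\mbox{sinc}(2\gamma)\bigl(z^2-4z\tan(\delta/2)\bigr).
\]
The third step is to absorb the first cross term by Young's inequality, $4k_2 z\tan(\delta/2)\leq \tfrac{k_2}{2}z^2 + 2k_2\cdot 4\tan^2(\delta/2)$, and dominate the superlinear term via Lemma~\ref{lemma:sinterm_upperbound} under the gain condition $k_1k_3\geq k_2^2$, producing a bound of the form $\dot\Pi \leq -\tfrac{3}{2}k_2\Pi + C_1 U + C_2\, U\cdot 4\tan^2(\delta/2) + C_3 U^2$ with explicit constants $C_i(k_1,k_2,k_3)$.

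The fourth step is where the cubic augmentation pays off: differentiating gives
\[
\frac{d}{dt}\tilde a[(1+U)^3 - 1] = 3\tilde a(1+U)^2\dot U = -6\tilde a k_2 q^2\bigl(1+2U+U^2\bigr)\gamma^2,
\]
and since $4\tan^2(\delta/2)\leq U$ and $q^2\gamma^2 \leq U$, the three pieces $-\gamma^2$, $-U\gamma^2$, $-U^2\gamma^2$ match term-by-term the three positive residuals produced in Step~3. The calibration $\tilde a = \max\{k_1 q,\sqrt{k_1 k_3}\}/(3k_2 q^2)$ is precisely the threshold at which the two competing worst cases balance. Forming $V_{\delta\gamma} = \tilde a[(1+U)^3-1]+\Pi$ as in \eqref{eq:CLF_BoPa_delta_gamma} then yields $\dot V_{\delta\gamma}$ negative definite on $\mathcal{T}_2\setminus\{0\}$. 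The fifth step is now a direct transcription of the BoLSA conclusion: $V_{\delta\gamma}$ satisfies class-$\mathcal{K}_\infty$ sandwich bounds on $\mathcal{T}_2$ (noting that it is a \emph{barrier} function, blowing up as $|\delta|\to\pi$), \cite[Lemma~4.3]{khalil_nonlinear_2002} extracts $\alpha_{\delta\gamma}\in\mathcal{K}$ with $\dot V_{\delta\gamma}\le -\alpha_{\delta\gamma}(V_{\delta\gamma})$, Proposition~\ref{prop:composite_Lyap_function} lifts the estimate to both composite CLFs $\mathcal{V}(\rho^2,V_{\delta\gamma})$ and $\mathcal{V}(V_{\delta\gamma},\rho^2)$, and \cite[Lemma~4.4]{khalil_nonlinear_2002} delivers the $\mathcal{KL}$ bound certifying GAS on $\mathcal{S}_2$.

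The main obstacle will be Step~4, namely the careful bookkeeping that the cubic expansion $(1+U)^2\gamma^2$ really does dominate the mixed residual $C_1 U + C_2 U\cdot 4\tan^2(\delta/2) + C_3 U^2$ with the \emph{sharp} constant $\tilde a = \max\{k_1 q,\sqrt{k_1 k_3}\}/(3k_2 q^2)$. Three homogeneous degrees in $\tan^2(\delta/2)$ and $\gamma^2$ must be tracked simultaneously, and the $\max$ in $\tilde a$ reflects the two competing worst cases: the $k_1 q$ branch, originating from the $\sin(2\gamma)\tan^3(\delta/2)$ cross term, and the $\sqrt{k_1 k_3}$ branch, coming from the $k_2\tan^2(\delta/2)$ residual left by Young's inequality. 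The remainder of the argument is mechanical.
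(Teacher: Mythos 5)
Your overall architecture is the paper's: the passivity cancellation giving $\dot U=-2k_2q^2\gamma^2$, the error variable $z=2\tan\frac{\delta}{2}+q\gamma$ with $\Pi=z^2$, Young's inequality plus Lemma~\ref{lemma:sinterm_upperbound} under $k_1k_3\ge k_2^2$, the cubic barrier augmentation $\tilde a[(1+U)^3-1]$, and the lift through Proposition~\ref{prop:composite_Lyap_function}. That is exactly the route the paper takes.

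There is, however, a genuine gap in the pivot from Step~3 to Step~4. You claim Step~3 yields a residual of the form $C_1U+C_2\,U\cdot 4\tan^2\frac{\delta}{2}+C_3U^2$, and then assert in Step~4 that the pieces $-\gamma^2$, $-U\gamma^2$, $-U^2\gamma^2$ coming from $\tfrac{d}{dt}\,\tilde a[(1+U)^3-1]=-2a(1+U)^2\gamma^2$ cancel these "term-by-term." They cannot: the cubic term's derivative is proportional to $\dot U\propto-\gamma^2$ and therefore vanishes identically on the slice $\gamma=0$, $\delta\neq 0$, where your claimed residual $C_1U$ is strictly positive (and, since $\Pi=U$ there, $-\tfrac32 k_2\Pi+C_1U$ need not be negative). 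The whole construction works only because \emph{every} positive residual in $\dot\Pi$ carries a factor of $\gamma^2$. One source of $\gamma^2$ is obvious (the $k_1q(1+\tan^2\frac{\delta}{2})\,\mathrm{sinc}(2\gamma)\,\gamma^2$ term), but the non-obvious one is the $\delta$-damping deficit $2k_2\tan^2\frac{\delta}{2}(1+\tan^2\frac{\delta}{2})\bigl(1-\tfrac{k_3q}{k_2}\mathrm{sinc}(2\gamma)\bigr)$ left after Young's inequality: inequality \eqref{eq:bound1} with $k=\tfrac{k_3q}{k_2}=\tfrac{\sqrt{k_1k_3}}{k_2}\ge 1$ (this is precisely where $k_1k_3\ge k_2^2$ enters) bounds $1-\tfrac{k_3q}{k_2}\mathrm{sinc}(2\gamma)$ by $\tfrac{k_3q}{k_2}\gamma^2$, converting that deficit into $k_3q\,U_1(1+\tan^2\frac{\delta}{2})\gamma^2$. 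The correct intermediate bound is therefore $\tfrac12\dot\Pi\le-\tfrac34k_2\Pi+q(k_1+k_3U_1)(1+\tan^2\frac{\delta}{2})\gamma^2\le-\tfrac34k_2\Pi+a(1+U_1)^2\gamma^2$ with $a=\max\{k_1q,k_3q\}$, every term $\gamma^2$-weighted, after which the cancellation $-2a(1+U)^2\gamma^2$ (plus the leftover $-4aq^2\gamma^4$) does go through exactly as you describe. Your writeup names Lemma~\ref{lemma:sinterm_upperbound} but does not exhibit this $\gamma^2$-conversion, and the residual you state is the one for which the argument fails; restoring the $\gamma^2$ factors is the actual content of the proof rather than "mechanical bookkeeping."
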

\raggedbottom
\begin{proof}
We rewritte \eqref{eq:U_Bopa} as
\begin{equation}
    U(\delta,\gamma) =  U_1 + q^2 U_2 \,, \quad U_1 = 4\tan^2\frac{\delta}{2}\,, \quad U_2 = \gamma^2\,, \label{eq:U_BoPa_proof}
\end{equation}
and define
\begin{equation}
     \Pi = z^2\,,  \quad z = 2\tan\frac{\delta}{2} +  q \gamma.
\end{equation}
Note that $U(\delta,\gamma)$ is radially unbounded on $\mathcal{T}_2$, namely, it disallows ``winding'' of the polar angle to arbitrary values. 
Then, the time derivative of \eqref{eq:U_Bopa} along the solutions of \eqref{eq:unicycle_polar_closed_loop-Gv-3-pass} is
\begin{equation}
\dot U  =  2 q^2 \gamma \left[ 2k_3 {\rm sinc}(2\gamma)\left(1 + \tan^2 \frac{\delta}{2}\right)\tan^2\frac{\delta}{2}-\tilde{\omega}
\right]\,. 
\end{equation}
We pick the control as in \eqref{eq:Bopa_controller} and obtain
\begin{equation}
\dot U  =    - 2 k_2 q^2 \gamma^2 = -2k_2 U_2 \,. \label{eq:U_dot_bopa}
\end{equation}
With the feedback~\eqref{eq:Bopa_controller}, the LoS dynamics are 
\begin{equation}
\dot\gamma = - k_2 \gamma -  2k_3 {\rm sinc}(2\gamma)\left(1 + \tan^2 \frac{\delta}{2}\right)\tan^2\frac{\delta}{2} \,.
\end{equation}
Next, we consider the dynamics of the error $z$, which are 
\begin{equation}
\begin{aligned}[b]
\dot z &= -k_2 z + 2 k_2 \tan \frac{\delta}{2}\\
&- k_3 q {\rm sinc}(2\gamma)\left(1+\tan^2\frac{\delta}{2 }\right)  \left( 2 \tan \frac{\delta}{2 } - q \gamma \right).
\end{aligned}
\end{equation}
Then,
\begin{equation}
\begin{aligned}[b]
\frac{1}{2}\dot \Pi &= - k_2z^2 +  2 k_2 z \tan \frac{\delta}{2} \\
&- 
{\rm sinc}(2\gamma)\left(1+\tan^2\frac{\delta}{2 }\right)  \left( U_1 - q^2 U_2 \right). \label{eq:Pi_dot_BoPa_temp}
\end{aligned}
\end{equation}
Using $2 k_2 z \tan \frac{\delta}{2} \leq  \frac{1}{4}k_2 z^2 + 2 k_2 \tan^2 \frac{\delta}{2}$, $ 1  \le 1+\tan^2\frac{\delta}{2} $ and $2k_2 z \tan \frac{\delta}{2} \leq  \frac{1}{4}k_2 z^2 + 2 k_2 \tan^2 \frac{\delta}{2} \left(1+\tan^2\frac{\delta}{2}\right)$,
we get
\begin{equation}
\hspace*{-0.2cm}
\begin{aligned}[b]
&\frac{1}{2} \dot \Pi \le -  \frac{3}{4} k_2 \Pi +  k_2 U_1 \left( 1 + \tan^2\frac{\delta}{2} \right) \times  \\
& \left(1 - \frac{k_3}{k_2}q {\rm sinc}(2\gamma) \right) +k_1 q \left( 1 + \tan^2\frac{\delta}{2} \right) {\rm sinc}(2\gamma) U_2 \,. \label{eq:Pi_dot_BoPa_temp}
\end{aligned}
\end{equation}
It then follows from \eqref{eq:bound1} in Lemma \ref{lemma:sinterm_upperbound} with $k_1 k_3 \ge  k_2^2$ along with ${\rm sinc}(2\gamma) \gamma^2 \le \gamma^2$ that 
\begin{equation}
\begin{aligned}[b]
   \frac{1}{2}\dot \Pi \le -\frac{3}{4} k_2\Pi + q \left(k_1 + k_3U_1 \right)\left(1 +  \tan^2 \frac{\delta}{2}\right) U_2.
\end{aligned}
  \label{eq:temp_P0_bopa}
\end{equation}
With $a = \max\{k_1q,k_3q\}$ and $1+\tan^2\frac{\delta}{2} \le 1+ 4\tan^2\frac{\delta}{2}$ from
\eqref{eq:temp_P0_bopa}, we obtain 
\begin{equation}
\frac{1}{2} \dot{\Pi} \le -\frac{3}{4}k_2 \Pi + a \left(1 +  U_1\right)^2 U_2. \label{eq:Pi0_temp2_bopa}
\end{equation}
Taking \eqref{eq:U_BoPa_proof} into account, we get 
\begin{align}
\dot{\Pi} &\le -\frac{3}{2}k_2 \Pi + 2a   \left( 1 + 2U_1 + U_1^2\right) U_2 \nonumber \\
&\le -\frac{3}{2}k_2 \Pi + 2a   \left( 1 + 2U - 2q^2 U_2 + U^2 \right) U_2 \nonumber \\
&\le -\frac{3}{2}k_2 \Pi + 2a   \left( 1 + U \right)^2 U_2 - 4a q^2 U_2^2. \label{eq:Pi0_temp3_bopa}
\end{align}
Now, consider the Lyapunov function 
\begin{equation}
(1+U)^3 - 1 = U^3 + 3U^2 + 3U 
\end{equation}
which is positive definite since $U$ is positive definite. Recall, from \eqref{eq:U_dot_bopa}, that $\dot U = -2k_2 q^2 U_2$, then
\begin{align}
\frac{a}{3k_2 q^2}\frac{{\rm d} }{{\rm d} t} \left[ (1+U)^3 - 1\right] = - 2a(1+U)^2 U_2 \,,
\end{align}
which cancels out the positive term in \eqref{eq:Pi0_temp3_bopa}.
Thus, the Lyapunov function for the $(\delta,\gamma)$-subsystem is  \eqref{eq:CLF_BoPa_delta_gamma}
with the time-derivative along the solutions of \eqref{eq:unicycle_polar_closed_loop-Gv-3-pass}, \eqref{eq:Bopa_controller} is such that
\begin{equation}
\dot V_{\delta \gamma} \le  - \frac{3}{2} k_2 \left(\tan \frac{\delta}{2} + q \gamma \right)^2 - 4 a q^2 \gamma^4 \,,
\label{eq:V_dot_bopa}
\end{equation}
which is negative for all $(\rho,\delta, \gamma) \ne (0,0,0)$ in $\{\rho \ge 0\} \times \mathcal{T}_2$. Analogous to the proof of Thm \ref{thm:unicycle_CLF_polynomial},  the Lyap. functions $V(\rho, \delta, \gamma) = \mathcal{V}(\rho^2, V_{\delta \gamma})$ and $V(\rho, \delta, \gamma) = \mathcal{V}(V_{\delta \gamma},\rho^2)$  are strict CLFs for \eqref{eq:unicycle_polar_closed_loop-Gv-1} and 
$\rho =  \delta = \gamma = 0$ is GAS on $\mathcal{S}_2$.
\end{proof}

\subsection{Bounding-Angles Algorithm (BAgAl) Controller}

\begin{theorem}[BAgAl CLFs] 
\label{thm:unicycle_CLF_Bagal}
Consider the system \eqref{eq:unicycle_polar_closed_loop-Gv-1} in closed-loop with \eqref{eq-basic-v-control}, \eqref{eq-omega-general} and \eqref{eq-control-bounded-in-gamma-delta}, with  $k_1, k_2, k_3 > 0$ such that  
$k_1 k_3\geq k_2^2$.
The point $\rho = \delta = \gamma = 0$ is GAS on $\mathcal{S}_3$ in accordance with Def. \ref{def-our-GAS}.
Furthermore, all the composite Lyapunov functions $V(\rho, \delta, \gamma) = \mathcal{V}(\rho^2, V_{\delta \gamma})$ and $V(\rho, \delta, \gamma) = \mathcal{V}(V_{\delta \gamma},\rho^2)$, for all functions $\mathcal{V}$ satisfying the conditions in Proposition \ref{prop:composite_Lyap_function}, and with $V_{\delta \gamma}$ defined as
\begin{equation}
\label{eq:CLF_ABounD_delta_gamma}
\hspace*{-0.25cm}
V_{\delta \gamma}(\delta,\gamma) =  \tilde{a} \left[ (1+U)^3 -1\right]
+   \left (2\tan\frac{\delta}{2} + 2q\tan\frac{\gamma}{2} \right)^2 \,,
\end{equation}
where
\begin{equation}
\quad U= 4\tan^2 \frac{\delta}{2}+ 4q^2 \tan^2 \frac{\gamma}{2}\,, \quad q = \sqrt{k_1 / k_3} \,,
\end{equation}
with $\tilde{a} = \max\{k_1 q, \sqrt{k_1 k_2} \} / 3k_2q^2$, 
are (globally) strict CLFs for \eqref{eq:unicycle_polar_closed_loop-Gv-1} w.r.t. the input pair $(v/\rho,\omega)$ in the sense of Def.~\ref{def-CLF}. 
\end{theorem}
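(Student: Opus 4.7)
The proof will follow the same pattern as the proofs of Theorems~\ref{thm:unicycle_CLF_BoLSA} and~\ref{thm:unicycle_CLF_BoPA}, combining the barrier treatment of $\gamma$ from BoLSA with the barrier treatment of $\delta$ from BoPA. By the modular structure afforded by Proposition~\ref{prop:composite_Lyap_function}, it suffices to exhibit a strict Lyapunov function $V_{\delta\gamma}$ for the $(\delta,\gamma)$-subsystem \eqref{eq:unicycle_polar_closed_loop-Gv-3-pass} under \eqref{eq-control-bounded-in-gamma-delta}; strictness for the full system and the $\mathcal{KL}$-estimate on $\mathcal{S}_3$ then follow by the same composition-plus-\cite[Lemma 4.3–4.4]{khalil_nonlinear_2002} argument used in Theorem~\ref{thm:unicycle_CLF_polynomial}.

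First I would rewrite \eqref{eq-control-bounded-in-gamma-delta} using the identity $\cos\gamma/(1+\tan^2(\gamma/2))^2 = \cos\gamma(1+\cos\gamma)^2/4$, so that the feedback takes the form $\tilde\omega = k_2 \sin\gamma + \tfrac{k_3}{2}\cos\gamma(1+\cos\gamma)^2 (1+\tan^2(\delta/2))\tan(\delta/2)$. Writing $U_1 = 4\tan^2(\delta/2)$, $U_2 = 4\tan^2(\gamma/2)$, and using \eqref{eq:trig_identity_2}, \eqref{eq:trig_identity_3}, the derivative of $U = U_1 + q^2 U_2$ along the closed loop simplifies, after cancellation between the $\delta$-driven growth of $U_1$ and the matching cross-term in $\tilde\omega$, to $\dot U = -2k_2 q^2 U_2$, which already supplies damping in $\gamma$.

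Next, to introduce damping in $\delta$, I would introduce $z = 2\tan(\delta/2) + 2q\tan(\gamma/2)$ and $\Pi = z^2$, paralleling the BoLSA/BoPA constructions. Differentiating, using $\dot\delta = k_1\sin\gamma\cos\gamma$ and the form of $\tilde\omega$, yields an expression of the shape
\begin{equation*}
\tfrac{1}{2}\dot\Pi = -k_2 z^2 + 2k_2 z \tan\tfrac{\delta}{2} - \tfrac{k_3}{2}\cos\gamma(1+\cos\gamma)(1+\tan^2\tfrac{\delta}{2})\bigl(U_1 - q^2 U_2\bigr).
\end{equation*}
Using $2 k_2 z \tan(\delta/2) \le \tfrac{k_2}{4} z^2 + 2 k_2 \tan^2(\delta/2)(1+\tan^2(\delta/2))$ together with the bound \eqref{eq:bound2} from Lemma~\ref{lemma:sinterm_upperbound} (applicable because $k_1 k_3 \ge k_2^2$) absorbs the indefinite $U_1$-term, leaving an upper bound of the form $\dot\Pi \le -\tfrac{3}{2}k_2 \Pi + 2a (1+U_1)^2 U_2 - 4aq^2 U_2^2$ for $a = \max\{k_1 q, \sqrt{k_1 k_2}\}$, mirroring the steps leading to \eqref{eq:Pi0_temp3_bopa} in the BoPA proof but with $\gamma^2$ replaced throughout by $U_2 = 4\tan^2(\gamma/2)$.

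The final step is to cancel the sign-indefinite $(1+U_1)^2 U_2$ remainder by exploiting the identity $\tfrac{a}{3k_2 q^2}\tfrac{{\rm d}}{{\rm d}t}[(1+U)^3 - 1] = -2a(1+U)^2 U_2$ (which uses only $\dot U = -2k_2 q^2 U_2$). Because $(1+U)^2 = (1+U_1 + q^2 U_2)^2 \ge (1+U_1)^2$ whenever $U_2\ge 0$, augmenting $\Pi$ by $\tilde{a}[(1+U)^3 - 1]$ with $\tilde{a} = a/(3k_2 q^2)$ dominates the residual and produces $V_{\delta\gamma}$ as in \eqref{eq:CLF_ABounD_delta_gamma}, with
\begin{equation*}
\dot V_{\delta\gamma} \le -\tfrac{3}{2}k_2 z^2 - c_1 U_2 - c_2 U_2^2
\end{equation*}
for some $c_1,c_2>0$, which is negative definite on $\mathcal{T}_3$. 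Radial unboundedness of $V_{\delta\gamma}$ on $\mathcal{T}_3$ is immediate from the $(1+U)^3$ barrier and the $z^2$ term. Proposition~\ref{prop:composite_Lyap_function} then yields the composite CLFs and, via \cite[Lemma 4.4]{khalil_nonlinear_2002}, a $\mathcal{KL}$-estimate on $\mathcal{S}_3$, establishing GAS in the sense of Definition~\ref{def-our-GAS}.

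The main obstacle I anticipate is the bookkeeping in the $(1+U)^2 = (1+U_1+q^2U_2)^2$ expansion: unlike the BoPA case where $U_2 = \gamma^2$ appears linearly through ${\rm sinc}(2\gamma)\gamma^2 \le \gamma^2$, here $U_2$ is already a barrier term, so the cross-products $U_1 U_2$ and $U_2^2$ must be matched carefully against what $\dot U^2$ and $\dot U^3$ produce. Correctly setting $\tilde{a} = \max\{k_1 q, \sqrt{k_1 k_2}\}/(3k_2 q^2)$ — rather than the BoPA constant $\max\{k_1 q, k_3 q\}/(3k_2 q^2)$ — is precisely what ensures the negative $U_2^2$-term survives and dominates the $q^2 U_2$ contribution from expanding $(1+U)^2$.
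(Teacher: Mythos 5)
Your proposal is correct and follows essentially the same route as the paper: the same $U=U_1+q^2U_2$ with $\dot U=-2k_2q^2U_2$, the same $z=2\tan\frac{\delta}{2}+2q\tan\frac{\gamma}{2}$ and $\Pi=z^2$ bounded via Young's inequality and \eqref{eq:bound2}, the same constant $a=\max\{k_1q,\sqrt{k_1k_2}\}$, and the same $(1+U)^3$ augmentation to cancel the $(1+U_1)^2U_2$ residual, followed by composition via Proposition~\ref{prop:composite_Lyap_function}. Your observation that $(1+U)^2\ge(1+U_1)^2$ with the gap supplying the surviving $-U_2^2$ term is exactly the mechanism the paper uses (by reference to the BoPA proof), and your final bound is negative definite on $\mathcal{T}_3$ as required.
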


\begin{proof}
We emphasize the differences from the proof of Thm.~\ref{thm:unicycle_CLF_BoPA}. To this end, let
\begin{align}
    U &=  U_1 + q^2 U_2, \;\; 
    U_1 = 2\tan^2 \frac{\delta}{2}, \;\; U_2 = 2\tan^2 \frac{\gamma}{2} \label{eq:U_AbounD}\\
    \Pi &= z^2\,, \quad z = 2\tan\frac{\delta}{2} +  2q \tan\frac{\gamma}{2}.
\end{align}
Note that $U(\gamma,\delta)$ is radially unbounded on $\mathcal{T}_3$. 
Then, the time derivative of \eqref{eq:U_AbounD} along the solutions of \eqref{eq:unicycle_polar_closed_loop-Gv-3-pass} is
\begin{equation}
\dot U  =  \frac{8 q^2 \sin \gamma}{(1+\cos\gamma)^2} \left[   k_3   \frac{(1+\cos\gamma)^2 \cos\gamma \sin \delta}{ (1+\cos \delta)^2 } - \tilde{\omega}
\right]\,. 
\end{equation}
We pick the control as in \eqref{eq-control-bounded-in-gamma-delta}, where we use the fact that $(1+\cos x )^2/4 = 1/(1+\tan^2 x/2)^2$ and obtain
\begin{equation}
\dot U  = -\frac{8 k_2  q^2 \sin^2\gamma}{(1+\cos\gamma)^2}   = - 2 k_2 q^2 U_2\,. \label{eq:U_dot_abound}
\end{equation}
With the feedback \eqref{eq-control-bounded-in-gamma-delta}, the LoS angle dynamics is 
\begin{equation}
\dot\gamma = - k_2 \sin\gamma -  k_3  \cos\gamma (1+\cos\gamma)^2  \frac{\tan \frac{\delta}{2}}{1 + \cos \delta} \,.
\end{equation}
Next, we consider the dynamics of the error  $z$, which are
\begin{equation}
\begin{aligned}[b]
\dot z = -&k_2 z + 2 k_2 \tan \frac{\delta}{2}\\
&- k_3 q \cos \gamma \frac{1 + \cos \gamma}{1 + \cos \delta} \left( 2\tan \frac{\delta}{2 } -  2q \tan \frac{\gamma}{2} \right).
\end{aligned}
\end{equation}
Then,
\begin{equation}
\begin{aligned}[b]
\frac{1}{2} \dot \Pi = - k_2z^2 &+  2k_2 z \tan \frac{\delta}{2} \\
&- 
k_3 q \cos\gamma \frac{1+\cos\gamma}{1 + \cos \delta} \left(U_1 - q^2U_2 \right) 
\end{aligned}
\end{equation}
Considering the fact that $2/(1 + \cos \delta) = 1 + \tan^2(\delta/2)$, we can use the same inequalities used to get \eqref{eq:Pi_dot_BoPa_temp} and
obtain
\begin{equation}
\hspace*{-0.3cm}
\begin{aligned}[b]
\frac{1}{2}\dot \Pi \le -  \frac{3}{4} k_2 \Pi +  &\frac{k_2  U_1}{1  + \cos \delta } \left( 1- \frac{k_3 }{k_2}  
q(1+\cos\gamma)\cos\gamma    \right)  \\ & + \frac{k_1 q U_2}{1 + \cos \delta}  (1+\cos \gamma) \cos \gamma 
 \,.
\end{aligned}
\end{equation}
 Taking into account \eqref{eq:bound2} in Lemma \ref{lemma:sinterm_upperbound} with $k_1 k_3 \ge  k_2^2$ along with $(1+\cos \gamma) \cos \gamma \tan^2 \frac{\gamma}{2}\le 8 \tan^2 \frac{\gamma}{2}$, 
yields
\begin{equation}
\dot{\Pi}_0 \le - \frac{3}{2}k_2 \Pi_0 + a   \left( 1 + U_1\right)^2 U_2 \,,
\end{equation}
with $a = \max \{k_1q, \sqrt{k_1 k_2}\}$. 
In the same manner as in the proof of Thm.~\ref{thm:unicycle_CLF_BoPA}, we can show that the time-derivative of \eqref{eq:CLF_ABounD_delta_gamma} along the solutions of \eqref{eq:unicycle_polar_closed_loop-Gv-3-pass}, \eqref{eq-control-bounded-in-gamma-delta} is such that
\begin{equation}
\dot V_{\delta \gamma} \le  - 16 a q^2 \tan^4\frac{\gamma}{2} -  \frac{3}{2}k_2 \left(\tan \frac{\delta}{2} + q \tan \frac{\gamma}{2}\right)^2
\label{eq:V_dot_AbounD} \,.
\end{equation}
Hence $V(\rho, \delta, \gamma) = \mathcal{V}(\rho^2, V_{\delta \gamma})$, $V(\rho, \delta, \gamma) = \mathcal{V}(V_{\delta \gamma},\rho^2)$ are strict CLFs for \eqref{eq:unicycle_polar_closed_loop-Gv-1} w.r.t. the input pair $(v/\rho, \omega)$  and
$\rho =  \delta = \gamma = 0$ is GAS on $\mathcal{S}_3$.
\end{proof}

\section{Integrator Forwarding Controllers}

\subsection{Global Forwarding (GloFo) Controller}


\begin{theorem}[GloFo CLFs] 
\label{thm:CLF_GloFo}
Consider the system \eqref{eq:unicycle_polar_closed_loop-Gv-1} in closed-loop with \eqref{eq-basic-v-control}, \eqref{eq-omega-general}, and \eqref{eq:GloFo} with arbitrary $k_1, k_2, k_3 > 0$.
The point $\rho = \delta = \gamma = 0$ is GAS on $\mathcal{S}$ in accordance with Def.~\ref{def-our-GAS}.
Furthermore, all the composite Lyapunov functions $V(\rho, \delta, \gamma) = \mathcal{V}(\rho^2, V_{\delta \gamma})$ and $V(\rho, \delta, \gamma) = \mathcal{V}(V_{\delta \gamma},\rho^2)$, for all functions $\mathcal{V}$ satisfying the conditions in Proposition \ref{prop:composite_Lyap_function}, and with $V_{\delta \gamma}$ defined as
\begin{equation}
V_{\delta \gamma}(\delta, \gamma)  = 
\left(\delta + \frac{k_1}{2k_2} {\rm Si}(2\gamma) \right)^2 + q^2 \gamma^2 
\,,  \quad q = \sqrt{\frac{k_1}{k_3}} \,, \label{eq:CLF_GloFo_gamma_delta}
\end{equation}
 are (globally) strict CLFs for \eqref{eq:unicycle_polar_closed_loop-Gv-1} w.r.t. the input pair $(v/\rho,\omega)$ in the sense of Def \ref{def-CLF}. 
\end{theorem}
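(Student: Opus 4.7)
The natural approach is integrator forwarding, exploiting the cascade structure of \eqref{eq:unicycle_polar_closed_loop-Gv-3-pass}. My plan is to introduce the forwarding variable $\zeta = \delta + \tfrac{k_1}{2k_2}\mathrm{Si}(2\gamma)$, which is the unique choice that cancels the cross-term from $\dot\delta$ against the linear damping $-k_2\gamma$ in $\tilde\omega$, by virtue of the identity $\mathrm{sinc}(2\gamma)\cdot\gamma = \tfrac{1}{2}\sin(2\gamma)$. Differentiating $\zeta$ along the closed loop induced by \eqref{eq:GloFo} and using this identity, I expect to obtain the clean decoupled dynamics $\dot\zeta = -\tfrac{k_1k_3}{k_2}\,\mathrm{sinc}^2(2\gamma)\,\zeta$, so that the $\delta$-coordinate is entirely absorbed into $\zeta$.

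Taking $V_{\delta\gamma} = \zeta^2 + q^2\gamma^2$ with $q^2 = k_1/k_3$ and using $\dot\gamma = -k_2\gamma - k_3\,\mathrm{sinc}(2\gamma)\,\zeta$, differentiation along trajectories should give $\dot V_{\delta\gamma} = -\tfrac{2k_1k_3}{k_2}\mathrm{sinc}^2(2\gamma)\zeta^2 - \tfrac{2k_1k_2}{k_3}\gamma^2 - 2k_1\,\mathrm{sinc}(2\gamma)\,\gamma\zeta$. The main obstacle is the sign-indefinite cross-term together with the fact that $\mathrm{sinc}(2\gamma)$ vanishes at $\gamma = \pm k\pi/2$ for $k\in\mathbb{N}$, where the quadratic damping on $\zeta$ is killed. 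My plan to overcome this is a pointwise case split rather than a uniform Young's bound. Where $\mathrm{sinc}(2\gamma)\neq 0$, I would note that the $2\times 2$ symmetric matrix of this quadratic form in $(\zeta,\gamma)$ has both diagonal entries strictly negative and determinant $\tfrac{4k_1^2}{1}\mathrm{sinc}^2(2\gamma) - k_1^2\mathrm{sinc}^2(2\gamma) = 3k_1^2\mathrm{sinc}^2(2\gamma) > 0$, hence is negative definite. Where $\mathrm{sinc}(2\gamma)=0$ (so $\gamma\neq 0$), both the $\zeta^2$ and cross terms drop out and only $-\tfrac{2k_1k_2}{k_3}\gamma^2 < 0$ survives. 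Combining these, $\dot V_{\delta\gamma} < 0$ except at $\zeta=\gamma=0$; and since $\gamma=0$ makes $\mathrm{Si}(2\gamma)=0$, this forces $\delta=\zeta=0$, so strict negative definiteness on $\mathcal{T}$ is established.

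It is worth noting that, unlike Thms.~\ref{thm:unicycle_CLF_polynomial}--\ref{thm:unicycle_CLF_Bagal}, no coupling condition such as $k_1k_3\geq k_2^2$ is required here, which reflects the structural advantage of forwarding over the passivity designs: the cancellation is exact rather than dominated. Radial unboundedness and positive definiteness of $V_{\delta\gamma}$ on $\mathbb{R}^2$ follow from the boundedness of $\mathrm{Si}$ (so $|\zeta|\to\infty$ iff $|\delta|\to\infty$ for bounded $\gamma$), giving class-$\mathcal{K}_\infty$ bounds $\alpha_1(|(\delta,\gamma)|_{\mathcal{T}}) \le V_{\delta\gamma} \le \alpha_2(|(\delta,\gamma)|_{\mathcal{T}})$.

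To conclude, I would apply \cite[Lemma 4.3]{khalil_nonlinear_2002} to upgrade strict negative definiteness of $\dot V_{\delta\gamma}$ to a bound $\dot V_{\delta\gamma} \leq -\alpha_3\circ\alpha_2^{-1}(V_{\delta\gamma})$ with $\alpha_3\circ\alpha_2^{-1}\in\mathcal{K}$. Proposition~\ref{prop:composite_Lyap_function} then lifts this bound to every composite $V(\rho,\delta,\gamma) = \mathcal{V}(\rho^2, V_{\delta\gamma})$ and $\mathcal{V}(V_{\delta\gamma},\rho^2)$, verifying the strict CLF property on $\mathcal{S}$ in the sense of Def.~\ref{def-CLF}. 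Finally, \cite[Lemma 4.4]{khalil_nonlinear_2002} yields a class-$\mathcal{KL}$ estimate on $V$, and passing through the $\bar\alpha_1,\bar\alpha_2$ bounds on the composite function gives the required estimate $|(\rho,\delta,\gamma)|_{\mathcal{S}} \leq \beta(|(\rho_0,\delta_0,\gamma_0)|_{\mathcal{S}},t)$, establishing GAS on $\mathcal{S}$ — mirroring the closing argument of Thm.~\ref{thm:unicycle_CLF_polynomial}.
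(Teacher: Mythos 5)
Your proposal is correct and follows essentially the same route as the paper: the same forwarding transformation $\zeta=\delta+\tfrac{k_1}{2k_2}\mathrm{Si}(2\gamma)$, the same closed-loop dynamics \eqref{eq:closed_loop_glofo}, the same Lyapunov function $V_{\delta\gamma}=\zeta^2+q^2\gamma^2$, and the same lifting via Proposition~\ref{prop:composite_Lyap_function} and the Khalil lemmas. The only (cosmetic) difference is in how negativity of $\dot V_{\delta\gamma}$ is certified: you use a case split on $\mathrm{sinc}(2\gamma)$ together with the Sylvester criterion, whereas the paper writes $\dot V_{\delta\gamma}$ directly as $-\tfrac{k_1k_2}{k_3}\bigl[a^2+\gamma^2+(a+\gamma)^2\bigr]$ with $a=\tfrac{k_3}{k_2}\mathrm{sinc}(2\gamma)\zeta$, an explicit sum of squares that makes the sign and the vanishing locus immediate without any case analysis.
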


\begin{proof}
Consider \eqref{eq:unicycle_polar_closed_loop-Gv-3-pass} and the 
forwarding transformation
\begin{equation}
\zeta =\delta+ \dfrac{k_1}{2k_2}\mbox{Si}(2\gamma), 
\end{equation} where ${\rm Si}(\cdot)$ is as in \eqref{eq:Si_function}.
Then, the open-loop system is
\begin{eqnarray}
\dot\zeta &=& \frac{k_1}{k_2}\mbox{sinc}(2\gamma)(k_2\gamma -\tilde\omega) \,,
\\
\dot\gamma &=& - \tilde\omega\,.
\end{eqnarray}
Choosing the control law \eqref{eq:GloFo}, we obtain
\begin{subequations}
\label{eq:closed_loop_glofo}
\begin{eqnarray}
\dot\zeta &=& -\frac{k_1k_3}{k_2}\mbox{sinc}^2(2
\gamma) \zeta \,,
\\
\dot\gamma &=& - k_2\gamma - k_3\mbox{sinc}(2\gamma)\zeta\,.
\end{eqnarray}
\end{subequations}
The time derivative of \eqref{eq:CLF_GloFo_gamma_delta}
along the solutions of \eqref{eq:closed_loop_glofo} is
\begin{equation}
\begin{aligned}[b]
\dot V_{\delta \gamma}  =- \frac{k_1k_2}{k_3}& \left[
\left(\frac{k_3}{k_2}\mbox{sinc}(2\gamma) \zeta\right)^2 +\gamma^2 \right.  \\ &\left. + \left(\frac{k_3}{k_2}\mbox{sinc}(2\gamma) \zeta +\gamma \right)^2 \label{eq:V_dot_forward1}
\right]\,,
\end{aligned}
\end{equation}
which is negative for all $(\rho,\delta,\gamma) \ne (0,0,0)$ in $\{\rho \ge 0\} \times \mathcal{T}$.
Analogous to the proof of Thm.~\ref{thm:unicycle_CLF_polynomial}, we conclude that $V(\rho, \delta, \gamma) = \mathcal{V}(\rho^2, V_{\delta \gamma})$ and $V(\rho, \delta, \gamma) = \mathcal{V}(V_{\delta \gamma},\rho^2)$ are strict CLFs for \eqref{eq:unicycle_polar_closed_loop-Gv-1} and  
$\rho =  \delta = \gamma = 0$ is GAS on $\mathcal{S}$.
\end{proof}

\subsection{Bounded-in-LoS by Forwarding (BoFo) Controller}

\begin{theorem}[BoFo CLFs] 
\label{thm:CLF_BoFo}
Consider the system \eqref{eq:unicycle_polar_closed_loop-Gv-1} in closed-loop with \eqref{eq-basic-v-control}, \eqref{eq-omega-general}, and \eqref{eq:BoFo} with arbitrary $k_1, k_2, k_3 > 0$.
The point $\rho = \delta = \gamma = 0$ is GAS on $\mathcal{S}$ in accordance with Def. \ref{def-our-GAS}.
Furthermore, all the composite Lyapunov functions $V(\rho, \delta, \gamma) = \mathcal{V}(\rho^2, V_{\delta \gamma})$ and $V(\rho, \delta, \gamma) = \mathcal{V}(V_{\delta \gamma},\rho^2)$, for all functions $\mathcal{V}$ satisfying the conditions in Proposition \ref{prop:composite_Lyap_function}, and with $V_{\delta \gamma}$ defined as
\begin{equation}
V_{\delta \gamma}(\delta, \gamma)  = 
\left(\delta + \frac{k_1}{k_2} \sin \gamma \right)^2+ 4q^2 \tan ^2\frac{\gamma}{2} 
\,,\label{eq:CLF_BoFo_gamma_delta}
\end{equation}
with $q = \sqrt{k_1/k_3}$, are (globally) strict CLFs for \eqref{eq:unicycle_polar_closed_loop-Gv-1} w.r.t. the input pair $(v/\rho,\omega)$ in the sense of Def. \ref{def-CLF}. 
\end{theorem}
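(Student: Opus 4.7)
The plan is to mirror the proof of Theorem~\ref{thm:CLF_GloFo}: first establish that $V_{\delta\gamma}$ in \eqref{eq:CLF_BoFo_gamma_delta} is a strict Lyapunov function for the $(\delta,\gamma)$-subsystem \eqref{eq:unicycle_polar_closed_loop-Gv-3-pass} under the BoFo feedback \eqref{eq:BoFo}, then invoke Proposition~\ref{prop:composite_Lyap_function} to lift this to the composites $V=\mathcal{V}(\rho^2,V_{\delta\gamma})$ and $V=\mathcal{V}(V_{\delta\gamma},\rho^2)$, and finally read off GAS of $\rho=\delta=\gamma=0$ on $\mathcal{S}_1$ via the resulting $\mathcal{KL}$-estimate. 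The BoFo forwarding variable $\zeta=\delta+(k_1/k_2)\sin\gamma$ is the natural counterpart of GloFo's $\delta+(k_1/(2k_2))\mathrm{Si}(2\gamma)$, tailored to the $\tan^2(\gamma/2)$-barrier setting.

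The first concrete step is to substitute \eqref{eq:BoFo} into $\dot\zeta=(k_1/k_2)\cos\gamma(k_2\sin\gamma-\tilde\omega)$ and $\dot\gamma=-\tilde\omega$. Using $1/(1+\tan^2(\gamma/2))^2=(1+\cos\gamma)^2/4$, the closed loop reduces to the clean linear decay $\dot\zeta=-\tfrac{k_1k_3}{4k_2}\cos^2\gamma(1+\cos\gamma)^2\zeta$, together with $\dot\gamma=-k_2\sin\gamma-\tfrac{k_3}{4}\cos\gamma(1+\cos\gamma)^2\zeta$. Differentiating $V_{\delta\gamma}=\zeta^2+4q^2\tan^2(\gamma/2)$ along these trajectories, and invoking the trigonometric identities \eqref{eq:trig_identity_1}--\eqref{eq:trig_identity_3} from the BoLSA proof along with $q^2k_3=k_1$, yields
\begin{equation*}
\dot V_{\delta\gamma} = -\frac{k_1k_3\cos^2\gamma(1+\cos\gamma)^2}{2k_2}\zeta^2 - 8k_2q^2\tan^2\tfrac{\gamma}{2} - k_1\sin(2\gamma)\zeta.
\end{equation*}

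The main obstacle is the sign-indefinite cross term $-k_1\sin(2\gamma)\zeta$. The key observation I would exploit is $\sin(2\gamma)=2\cos\gamma(1+\cos\gamma)\tan(\gamma/2)$, which exposes the same factor $\cos\gamma(1+\cos\gamma)$ already present in the $\zeta^2$ coefficient; combined with $q\sqrt{k_1k_3}=k_1$, this permits a perfect-square completion:
\begin{equation*}
\dot V_{\delta\gamma} = -6k_2q^2\tan^2\tfrac{\gamma}{2} - \frac{1}{2k_2}\Bigl[\sqrt{k_1k_3}\,\cos\gamma(1+\cos\gamma)\,\zeta + 2k_2q\tan\tfrac{\gamma}{2}\Bigr]^2.
\end{equation*}
Both summands are non-positive. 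The first vanishes only at $\gamma=0$; at $\gamma=0$ the bracket reduces to $2\sqrt{k_1k_3}\,\zeta$, so the second term vanishes only at $\zeta=0$, i.e.\ $\delta=0$. Hence $\dot V_{\delta\gamma}<0$ on $\mathcal{T}_1\setminus\{0\}$.

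From here the argument closes exactly as in Theorem~\ref{thm:unicycle_CLF_polynomial}: \cite[Lemma~4.3]{khalil_nonlinear_2002} furnishes an $\alpha_{\delta\gamma}\in\mathcal{K}$ with $\dot V_{\delta\gamma}\le-\alpha_{\delta\gamma}(V_{\delta\gamma})$; Proposition~\ref{prop:composite_Lyap_function} promotes the two composites to strict CLFs in the sense of Def.~\ref{def-CLF}; and \cite[Lemma~4.4]{khalil_nonlinear_2002} delivers a $\mathcal{KL}$-bound on $|(\rho,\delta,\gamma)|_{\mathcal{S}_1}$, establishing GAS per Def.~\ref{def-our-GAS}. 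A pleasant side feature of this design path is that, in contrast with the passivity-based Theorems~\ref{thm:unicycle_CLF_polynomial}--\ref{thm:unicycle_CLF_Bagal}, no gain inequality such as $k_1k_3\ge k_2^2$ is required: the forwarding construction already produces a pure exponential-like decay of $\zeta$, and the perfect-square completion above goes through for arbitrary positive $k_1,k_2,k_3$.
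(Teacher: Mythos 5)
Your proposal is correct and follows essentially the same route as the paper: the same forwarding variable $\zeta=\delta+\tfrac{k_1}{k_2}\sin\gamma$, the same closed-loop equations, and the same Lyapunov derivative, with the only cosmetic difference being that you complete a single square (absorbing the cross term $-k_1\sin(2\gamma)\zeta$) while the paper displays $\dot V_{\delta\gamma}$ as a sum of three squares --- both are equivalent positive-definiteness arguments for the same quadratic form. Your conclusion of GAS on $\mathcal{S}_1$ (rather than the $\mathcal{S}$ appearing in the theorem statement, evidently a typo) matches the paper's own proof, as does your observation that no gain condition $k_1k_3\ge k_2^2$ is needed.
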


\begin{proof}
Consider \eqref{eq:unicycle_polar_closed_loop-Gv-3-pass}
with the forwarding transformation
\begin{equation}
\zeta = \delta + \frac{k_1}{k_2}\int_0^{\tan\frac{\gamma}{2}} \frac{\sin(4\arctan\sigma)}{2\sigma}{\rm d}\sigma = \delta + \dfrac{k_1}{k_2} \sin \gamma \,,
\end{equation}
then, the transformed open-loop system reads as
\begin{eqnarray}
\dot\zeta &=& \frac{k_1}{k_2}\cos\gamma\left(k_2\sin\gamma -
\tilde\omega \right) \,,
\\
\dot\gamma &=& -  \tilde\omega\,.
\end{eqnarray}
Considering the control law \eqref{eq:BoFo}, 
the closed-loop system is
\begin{subequations}
    \label{eq:bofo_closed_loop}
\begin{eqnarray}
\dot\zeta &=& -\frac{k_1k_3}{k_2}\dfrac{\cos^2(\gamma)}{\left(1+\tan^2\frac{\gamma}{2}\right)^2} \zeta \,,
\\
\dot\gamma &=& - k_2\sin\gamma - k_3\dfrac{\cos(\gamma)}{\left(1+\tan^2\frac{\gamma}{2}\right)^2}\zeta\,.
\end{eqnarray}
\end{subequations}
The time derivative of \eqref{eq:CLF_BoFo_gamma_delta} along the solutions of \eqref{eq:bofo_closed_loop} is
\begin{align}
\dot V_{\delta \gamma}  = &- \frac{k_1k_2}{k_3}\Biggl[
\left(\frac{k_3}{k_2}\frac{\cos\gamma}{{1+\tan^2\frac{\gamma}{2}}} \zeta\right)^2+4\tan^2\frac{\gamma}{2}\nonumber\\
& + \left(\frac{k_3}{k_2}\frac{\cos\gamma}{{1+\tan^2\frac{\gamma}{2}}}\zeta +2\tan\frac{\gamma}{2} \right)^2
\Biggr]\,, \label{eq:V_dot_forward2}
\end{align}
which is negative for all $(\rho,\delta,\gamma) \ne (0,0,0)$ in $\{\rho \ge 0\} \times \mathcal{T}_1$.
Analogous to the proof of Thm.~\ref{thm:unicycle_CLF_polynomial}, we conclude that  $V(\rho, \delta, \gamma) = \mathcal{V}(\rho^2, V_{\delta \gamma})$, $V(\rho, \delta, \gamma) = \mathcal{V}(V_{\delta \gamma},\rho^2)$ are strict CLFs w.r.t. \eqref{eq:unicycle_polar_closed_loop-Gv-1} and  
$\rho =  \delta = \gamma = 0$ is GAS on $\mathcal{S}_1$.
\end{proof}

\section{Backstepping Controllers}

The idea of backstepping for the ``unconventional bounded integrator chain'' \eqref{eq:unicycle_polar_closed_loop-Gv-3-pass} is given by
\begin{equation}
\label{eq-bkst-step1}
\dot\delta = k_1 \frac{\sin(2\gamma)}{2} =k_1 \left[\frac{\sin(2(\gamma-z))}{2} +\psi(z,\gamma)z\right]\,,
\end{equation}
with the function \eqref{eq:psi_definition} where $z=\gamma - \alpha_0(\delta)$ is a backstepping transformation and $\alpha_0(\delta)$ is a ``stabilizing function,'' which renders the origin of the system $\dot\delta = k_1 \frac{\sin(2\alpha_0(\delta))}{2}$, namely, of  \eqref{eq:unicycle_polar_closed_loop-Gv-3n-pass} with $\gamma=\alpha_0(\delta)$, or \eqref{eq-bkst-step1} with $z=0$, GAS on $\mathbb{R}$ or on $(-\pi,\pi)$.

\subsection{A Global Backstepping (GloBa) Controller}

\begin{theorem}[GloBa CLFs] 
\label{thm:CLF_Globa}
Consider the system \eqref{eq:unicycle_polar_closed_loop-Gv-1} in closed-loop with \eqref{eq-basic-v-control}, \eqref{eq-omega-general}, and \eqref{eq-bkst-1} with arbitrary $k_1, k_2, k_3, k_4 > 0$.
The point $\rho = \delta = \gamma = 0$ is GAS on $\mathcal{S}$ in accordance with Def. \ref{def-our-GAS}.
Furthermore, all the composite Lyapunov functions $V(\rho, \delta, \gamma) = \mathcal{V}(\rho^2, V_{\delta \gamma})$ and $V(\rho, \delta, \gamma) = \mathcal{V}(V_{\delta \gamma},\rho^2)$, for all functions $\mathcal{V}$ satisfying the conditions in Proposition \ref{prop:composite_Lyap_function}, and with $V_{\delta \gamma}$ defined as
\begin{equation}
V_{\delta \gamma}(\delta, \gamma)  = 
\delta^2 + q^2 \left(\gamma+ \frac{1}{2}\arctan(2k_2 \delta)  \right)^2 
\,,  \label{eq:CLF_GloBa_gamma_delta}
\end{equation}
with $q = \sqrt{k_1/k_3}$ are (globally) strict CLFs for \eqref{eq:unicycle_polar_closed_loop-Gv-1} w.r.t. the input pair $(v/\rho,\omega)$ in the sense of Def.~\ref{def-CLF}. 
\end{theorem}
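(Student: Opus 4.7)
My plan is to execute the backstepping program sketched in \eqref{eq-bkst-step1} with the virtual stabilizing function $\alpha_0(\delta) = -\tfrac{1}{2}\arctan(2k_2\delta)$. If $\gamma$ were equal to $\alpha_0(\delta)$, the $\delta$-subsystem \eqref{eq:unicycle_polar_closed_loop-Gv-3n-pass} would reduce to $\dot\delta = -k_1 k_2 \delta/\sqrt{1+4k_2^2\delta^2}$, which is GAS on $\mathbb{R}$ via $V_0 = \delta^2$. The backstepping transformation is then $z = \gamma - \alpha_0(\delta) = \gamma + \tfrac{1}{2}\arctan(2k_2\delta)$, exactly the expression inside \eqref{eq:CLF_GloBa_gamma_delta}, so that $V_{\delta\gamma} = \delta^2 + q^2 z^2$.

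The algebraic engine is the identity $\sin(2\gamma) = \sin(2(\gamma - z)) + 2z\,\psi(z,\gamma)$, which is immediate from the definition \eqref{eq:psi_definition}. Using $\sin(2(\gamma - z)) = \sin(2\alpha_0(\delta)) = -2k_2\delta/\sqrt{1+4k_2^2\delta^2}$, I would rewrite $\dot\delta = -k_1 k_2\delta/\sqrt{1+4k_2^2\delta^2} + k_1 z\,\psi(z,\gamma)$ and $\dot z = -\tilde\omega + \tfrac{k_1 k_2 \sin(2\gamma)}{2(1+4k_2^2\delta^2)}$. Differentiating $V_{\delta\gamma}$ along \eqref{eq:unicycle_polar_closed_loop-Gv-3-pass} then produces the desirable term $-2k_1 k_2\delta^2/\sqrt{1+4k_2^2\delta^2}$ together with a feed-forward $q^2 k_1 k_2 z\sin(2\gamma)/(1+4k_2^2\delta^2)$ from $\dot z$ and a cross-coupling $2k_1 z\,\psi(z,\gamma)\delta$ from $\dot\delta$.

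The three summands of $\tilde\omega$ in \eqref{eq-bkst-1} are engineered to eliminate these in one stroke: $\tfrac{k_1 k_2 \sin(2\gamma)}{2(1+4k_2^2\delta^2)}$ cancels the feed-forward; $k_3\psi(z,\gamma)\delta$ cancels the cross term via $k_3 q^2 = k_1$; and $k_4 z$ supplies the damping $-2k_4 q^2 z^2$. The remainder is $\dot V_{\delta\gamma} = -2k_1 k_2\delta^2/\sqrt{1+4k_2^2\delta^2} - 2k_4 q^2 z^2$, negative definite on $\mathcal{T}$ because $\delta = 0$ forces $z = \gamma$. A $\mathcal{K}_\infty$ sandwich on $V_{\delta\gamma}$ (using $|\arctan(2k_2\delta)| \le 2k_2|\delta|$ for the upper bound and $|z| \ge |\gamma| - \tfrac{\pi}{4}$ for the lower) combined with \cite[Lemma 4.3]{khalil_nonlinear_2002} gives $\dot V_{\delta\gamma} \le -\alpha_{\delta\gamma}(V_{\delta\gamma})$ for some $\alpha_{\delta\gamma} \in \mathcal{K}$; Proposition \ref{prop:composite_Lyap_function} then upgrades both composite forms $\mathcal{V}(\rho^2, V_{\delta\gamma})$ and $\mathcal{V}(V_{\delta\gamma}, \rho^2)$ to strict CLFs and, via the $\mathcal{KL}$-estimate argument at the end of the proof of Theorem \ref{thm:unicycle_CLF_polynomial}, yields GAS on $\mathcal{S}$. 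The only subtlety I expect is purely algebraic: confirming that the three cancellations align exactly, which hinges on the choice $q = \sqrt{k_1/k_3}$ and the defining identity $2z\psi(z,\gamma) = \sin(2\gamma) - \sin(2(\gamma-z))$; notably, no gain constraint such as $k_1 k_3 \ge k_2^2$ is required here because the backstepping cancellation is exact rather than dominated.
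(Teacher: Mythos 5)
Your proposal is correct and follows essentially the same route as the paper's proof: the same stabilizing function $-\tfrac{1}{2}\arctan(2k_2\delta)$, the same backstepping variable $z$, the same use of the identity $\sin(2\gamma)=\sin(2(\gamma-z))+2z\,\psi(z,\gamma)$ to obtain the $\dot\delta,\dot z$ dynamics, the same exact three-way cancellation in $\dot V_{\delta\gamma}$ via $k_3 q^2=k_1$, and the same handoff to Proposition~\ref{prop:composite_Lyap_function} and the $\mathcal{KL}$-argument of Theorem~\ref{thm:unicycle_CLF_polynomial}. Your explicit remarks on the $\mathcal{K}_\infty$ sandwich and on why no gain constraint is needed are accurate and slightly more detailed than the paper's exposition.
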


\begin{proof}
As before, our starting point for the development of \eqref{eq-bkst-1} and \eqref{eq:CLF_GloBa_gamma_delta} is \eqref{eq:unicycle_polar_closed_loop-Gv-3-pass}, and 
 the identity
\begin{equation}
\sin(\arctan(x)-y) = \frac{x\cos y - \sin y}{\sqrt{1+x^2}}\,.
\end{equation}
Considering \eqref{eq:unicycle_polar_closed_loop-Gv-3n-pass}, 
we note that the would-be feedback for the LoS angle (the `stabilizing function')
\begin{equation}
\gamma(\delta) = -\frac{1}{2} \arctan(2k_2\delta) 
\end{equation}
results in the globally asymptotically stable $\delta$-dynamics
\begin{equation}
\dot{\delta} = k_1\frac{\sin(2\gamma(\delta))}{2} 
= - \frac{k_1 k_2 \delta}{\sqrt{1+4 k_2 ^2\delta^2}}\,.
\end{equation}
Now, we introduce a backstepping change of the $\gamma$-variable
\begin{equation}\label{eq:backstepping_z1}
 z = \gamma + \frac{1}{2}\arctan(2 k_2 \delta).
\end{equation}
Taking into account \eqref{eq:psi_definition} and \eqref{eq:backstepping_z1}, 
we obtain
\begin{subequations}
\label{eq:backstep_temp}
\begin{eqnarray}
\dot{\delta} &=& - k_1 \left( \frac{k_2 \delta}{\sqrt{1+4k_2^2\delta^2}}
- z \psi(z,\gamma) \right) \,, \label{eq:delta_globa} \\
\dot{z} &=&  k_1 \frac{\sin(2\gamma)}{2} \frac{k_2}{1+4k_2^2\delta^2} - \tilde{\omega}\,.  \label{eq:z_globa}
\end{eqnarray}
\end{subequations}
The time derivative of \eqref{eq:CLF_GloBa_gamma_delta}
along the solutions of \eqref{eq:backstep_temp} is
\begin{equation}
\begin{aligned}[b]
\dot V_{\delta \gamma} = -&\frac{2 k_1 k_2  \delta^2}{\sqrt{1+4 k_2^2 \delta^2}} \\ &+   2 q^2 z \left[  k_3 \psi(z,\gamma) \delta   +  \frac{k_1 k_2 \sin (2\gamma)}{2 (1+4k_2^2\delta^2)} - \tilde{\omega}\right]\,.\label{eq-V1dot-bkst1}
\end{aligned}
\end{equation}
 Choosing $\tilde{\omega}$ as in \eqref{eq-bkst-1}, 
we get
\begin{equation}
\dot V_{\delta \gamma} = -\frac{2 k_1 k_2  \delta^2}{\sqrt{1+4k_2^2\delta^2}} -  2q^2 k_4 z^2\,, \label{eq:Backstepping_Lyapunov_function_dot}
\end{equation}
which is negative for all $(\rho,\delta,\gamma) \ne (0,0,0)$ in $\{\rho \ge 0\} \times \mathcal{T}$.
Analogous to the proof of Thm.~\ref{thm:unicycle_CLF_polynomial}, we conclude that $V(\rho, \delta, \gamma) = \mathcal{V}(\rho^2, V_{\delta \gamma})$ and $V(\rho, \delta, \gamma) = \mathcal{V}(V_{\delta \gamma},\rho^2)$ are strict CLFs for \eqref{eq:unicycle_polar_closed_loop-Gv-1} and  
$\rho =  \delta = \gamma = 0$ is GAS on $\mathcal{S}$.
\end{proof}  

\subsubsection*{Easier-to-interpret, more conservative backstepping controllers}

Next, we present an alternative backstepping design using the same CLF \eqref{eq:CLF_GloBa_gamma_delta}, offering clearer intuition than the less interpretable feedback \eqref{eq-bkst-1}.
\begin{proposition} \em
    Consider \eqref{eq:unicycle_polar_closed_loop-Gv-1} in closed-loop with \eqref{eq-basic-v-control} and 
    \begin{equation}
    \label{eq-omega-bkst-b}
 \omega = \left( k_4 + \frac{k_3}{2k_2}\frac{C^2}{N} + k_1|\psi|B\right) z \,,
\end{equation}
where 
\begin{align}
B(\delta) &= 1+\frac{k_2}{N(\delta)^2}, \quad N(\delta) = \sqrt{1+4k_2^2\delta^2} \,, \label{eq:N_delta_def} \\
C(\delta, \gamma) &=  \psi(\gamma, z) N(\delta) - \frac{k_1 k_2}{k_3} B(\delta)\,.
\end{align}
with $k_1, k_2, k_3, k_4 > 0$, and $\psi(z,\gamma)$ and $z$ defined in \eqref{eq:psi_definition} and \eqref{eq:backstepping_z1}, respectively.  Then, the point $\rho = \delta = \gamma = 0$ is GAS on $\mathcal{S}$ in accordance with Def.~\ref{def-our-GAS}.
\end{proposition}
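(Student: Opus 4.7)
The plan is to re-use the strict Lyapunov function $V_{\delta\gamma}(\delta,\gamma) = \delta^2 + q^2 z^2$ from Theorem~\ref{thm:CLF_Globa}, with $z=\gamma+\tfrac12\arctan(2k_2\delta)$ and $q^2=k_1/k_3$, and to conclude via Proposition~\ref{prop:composite_Lyap_function} as in the other theorems of the paper. The twist is that the proposition applies $\omega$ directly to \eqref{eq:unicycle_polar_gammadot} \emph{without} invoking the cancellation \eqref{eq-omega-general}, and $\omega$ is a pure multiple of $z$; the $\delta$- and $\sin(2\gamma)$-dependent cancellations that made \eqref{eq-bkst-1} work must now be recovered by Young-type dominations. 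The enabling algebraic fact, which accounts for both the $C$ and the $|\psi|B$ blocks in the feedback, is the identity
\begin{equation*}
2z\,\psi(z,\gamma) \;=\; \sin(2z-2\gamma)+\sin(2\gamma) \;=\; \tfrac{2k_2\delta}{N(\delta)} + \sin(2\gamma),
\end{equation*}
a direct consequence of $z-\gamma=\tfrac12\arctan(2k_2\delta)$. It lets one substitute $\sin(2\gamma)=2\psi z - 2k_2\delta/N$ wherever convenient.

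The first step is to compute $\dot V_{\delta\gamma}$ from $\dot\delta=\tfrac{k_1}{2}\sin(2\gamma)$ and $\dot z = \tfrac{k_1}{2}\sin(2\gamma)\,B(\delta)-\omega$. Substituting the identity above and using the algebraic relation $k_1\psi - q^2 k_1 k_2 B/N = k_1 C/N$ gives
\begin{equation*}
\dot V_{\delta\gamma} \;=\; -\frac{2k_1k_2\delta^2}{N} + \frac{2k_1\,C\,\delta\,z}{N} + 2q^2 k_1 B\psi\,z^2 - 2q^2 z\,\omega.
\end{equation*}
Thus $C$ is not arbitrary: it is precisely the coefficient of the $\delta z$ cross term that must be dominated.

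Next, I would apply the weighted Young inequality
\begin{equation*}
\frac{2k_1 C\delta z}{N} \;\le\; \frac{k_1k_2\delta^2}{N} + \frac{k_1\,C^2\,z^2}{k_2\,N}
\end{equation*}
to consume half of the dissipation in $\delta$, and bound $B\psi\le B|\psi|$ on the remaining $z^2$-term. Substituting the proposed $\omega=\bigl(k_4 + \tfrac{k_3}{2k_2}\tfrac{C^2}{N} + k_1|\psi|B\bigr)z$ and using $q^2 k_3=k_1$, the $C^2 z^2/(k_2 N)$ contributions cancel exactly, and the $2q^2 k_1|\psi|B\,z^2$ term in $2q^2 z\omega$ absorbs the residual $2q^2 k_1 B\psi\,z^2$, leaving
\begin{equation*}
\dot V_{\delta\gamma} \;\le\; -\frac{k_1k_2\delta^2}{N(\delta)} - 2q^2 k_4\,z^2,
\end{equation*}
which is negative definite on $\{\rho\ge 0\}\times\mathcal{T}$. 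The proof then closes verbatim as in Theorem~\ref{thm:CLF_Globa}: Proposition~\ref{prop:composite_Lyap_function} lifts $V_{\delta\gamma}$ to a composite strict CLF $V(\rho,\delta,\gamma)=\mathcal{V}(\rho^2,V_{\delta\gamma})$ with $\dot V\le -\alpha(V)$, a class-$\mathcal{KL}$ estimate follows in the metric of $\mathcal{S}$, and $\rho=\delta=\gamma=0$ is GAS on $\mathcal{S}$ per Definition~\ref{def-our-GAS}.

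The main obstacle is conceptual rather than computational. One must recognize that the $\psi z$ identity exposes $C=N\psi-\tfrac{k_1k_2}{k_3}B$ as the natural coefficient of the $\delta z$ cross term, that a weighted Young's inequality on this cross term forces the $\tfrac{k_3}{2k_2}\tfrac{C^2}{N}$ entry in the gain, and that the uncancelled $B\psi$ contribution forces the $|\psi|B$ entry with an absolute value (so domination holds for $\psi$ of either sign). Once the controller is guessed with this structure, the cancellations and the invocation of Proposition~\ref{prop:composite_Lyap_function} are routine.
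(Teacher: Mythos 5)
Your proposal is correct and follows essentially the same route as the paper's own proof: the same Lyapunov function $V_{\delta\gamma}=\delta^2+q^2z^2$, the same identity $\sin(2\gamma)=2\psi(z,\gamma)z-2k_2\delta/N(\delta)$ exposing $C$ as the coefficient of the $\delta z$ cross term, the same weighted Young's inequality producing the $\tfrac{k_3}{2k_2}\tfrac{C^2}{N}$ gain, and the same conclusion via Proposition~\ref{prop:composite_Lyap_function}. Your final bound $-k_1k_2\delta^2/N-2q^2k_4z^2$ is in fact the one consistent with the intermediate Young step (the paper's displayed $-2k_1k_2\delta^2/N$ appears to carry over an extra factor of two), and either form suffices for negative definiteness.
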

\raggedbottom
\begin{proof}
    The outline of the proof is as follows. Observe that 
    \begin{equation}
        \sin(2\gamma) = 2 \left(\psi(z,\gamma) z- \frac{k_2\delta}{N(\delta)}\right) \,.
    \end{equation} Then, the time derivative of \eqref{eq:CLF_GloBa_gamma_delta} along the solutions of \eqref{eq:unicycle_polar_closed_loop-Gv-3} is such that
\begin{align}
&\dot V_{\delta \gamma} =   -\frac{2 k_1 k_2  \delta^2}{\sqrt{1+4 k_2^2 \delta^2}} \nonumber\\
&+   2 q^2 z\left[  k_3 \psi(z,\gamma) \delta + k_1  \frac{\sin (2\gamma)}{2} \left(1 + \frac{k_2}{1+4k_2^2\delta^2} \right)- \omega\right] \nonumber \\
&= -2 k_1 k_2 \frac{\delta^2}{N} +2 q ^2 k_3 \frac{\delta}{N}Cz + 2 q^2 k_1 \psi B z^2 - 2 q^2 z\omega \nonumber \\
&\leq  -  k_1 k_2 \frac{\delta^2}{N}+ 2 q ^2 z \left[\left( \frac{k_3}{2k_2}\frac{C^2}{N} + k_1|\psi|B\right) z - \omega
\right] \,, \label{eq-V1dot-bkst1a}
\end{align}
where the bound follows from applying Young's inequality to $\frac{\delta}{N}Cz$. 
Taking the feedback \eqref{eq-omega-bkst-b}
we arrive at
\begin{eqnarray}
\label{eq-V1dot-bkst1b}
\dot V_1 &\leq & - 2 k_1 k_2 \frac{\delta^2}{N} - 2 q^2 k_4 z^2
\,. 
\end{eqnarray}
The rest of the proof is similar to the proof of Thm.~\ref{thm:CLF_Globa}.
\end{proof}

The feedback \eqref{eq-omega-bkst-b}, featuring the backstepping-transformed LoS angle $z$ scaled by a nonlinear positive gain dependent on $\delta$ and $\gamma$, aggressively drives $\gamma$ toward $-\tfrac{1}{2}\arctan(2k_2\delta)$.
 Such an action by $\gamma$ is intuitive: 
 the vehicle is oriented toward the negative $x$-axis, which is advantageous since the target faces the positive direction; thus, motion with $\gamma = -\tfrac{1}{2}\arctan(2k_2\delta)$ guides the vehicle to the target “from behind,” eliminating the need to reverse. Looking quantitatively into the dependence of $\gamma$ on $\delta$ through $-\frac{1}{2}\arctan(2 k_2 \delta)$, we see that the closer the vehicle to the negative half of the $x$-axis the more directly the vehicle is made to point towards the target position, as it aims to arrive at it from behind. 
The feedback \eqref{eq-omega-bkst-b} is interpretable but complex due to its nonlinear dependence on $(C,N,\psi,B)$. Hence, we derive a simpler backstepping feedback. 
\begin{corollary}
\textit Consider
\eqref{eq:unicycle_polar_closed_loop-Gv-1} in closed-loop with \eqref{eq-basic-v-control} and 
\begin{equation}
\label{eq-omega-bkst-c}
\displaystyle \omega = \left[ k_4 + k_5 +\frac{k_3}{k_2}\left(1+4k_2^2\delta^2\right) \right] z \,,
\end{equation}
with $k_1, k_2, k_3, k_4 > 0$ and $k_5 = k_1 (1+k_2) \left[1+ \frac{k_1k_2(1+k_2)}{k_3}\right]$, and $\psi(z,\gamma)$, $z$ and $N(\delta)$ defined in \eqref{eq:psi_definition}, \eqref{eq:backstepping_z1} and \eqref{eq:N_delta_def}, respectively.  Then, the point $\rho = \delta = \gamma = 0$ is GAS on $\mathcal{S}$ in accordance with Def.~\ref{def-our-GAS}.
\end{corollary}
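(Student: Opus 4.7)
The plan is to show that the bracketed coefficient of $z$ in the simpler feedback \eqref{eq-omega-bkst-c} dominates the nonlinear gain appearing in \eqref{eq-omega-bkst-b} for all $(\delta,\gamma)$, so that the bound \eqref{eq-V1dot-bkst1a} carries over essentially unchanged. Concretely, I would begin by reusing the derivative estimate \eqref{eq-V1dot-bkst1a} from the previous proposition, namely
\begin{equation*}
\dot V_{\delta \gamma} \leq -  k_1 k_2 \frac{\delta^2}{N}+ 2 q ^2 z^2 \left( \frac{k_3}{2k_2}\frac{C^2}{N} + k_1|\psi|B\right) - 2 q^2 z\omega,
\end{equation*}
and substitute the feedback \eqref{eq-omega-bkst-c}. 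What remains is to prove the pointwise inequality
\begin{equation*}
\frac{k_3}{2k_2}\frac{C^2}{N} + k_1|\psi|B \;\leq\; k_5 + \frac{k_3}{k_2}\bigl(1+4k_2^2\delta^2\bigr),
\end{equation*}
after which the $z^2$ terms combine to yield $\dot V_{\delta \gamma} \leq -k_1 k_2 \delta^2/N - 2q^2 k_4 z^2$, which is negative definite on $\{\rho\geq 0\}\times\mathcal{T}$.

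To establish the key inequality, I would assemble three elementary bounds. First, from \eqref{eq:psi_definition} one has $|\psi(z,\gamma)|\leq 1$, since $\psi$ attains its maximum value $1$ at $z=0,\gamma=n\pi$. Second, because $N(\delta)^2 = 1+4k_2^2\delta^2\geq 1$, the definition $B=1+k_2/N^2$ gives $1\leq B \leq 1+k_2$; hence $k_1|\psi|B\leq k_1(1+k_2)$. Third, applying $(a+b)^2\leq 2a^2+2b^2$ to $C=\psi N - (k_1k_2/k_3)B$ produces $C^2 \leq 2\psi^2 N^2 + 2(k_1k_2/k_3)^2 B^2$, and then $N\geq 1$ gives
\begin{equation*}
\frac{k_3}{2k_2}\frac{C^2}{N} \leq \frac{k_3}{k_2}N + \frac{k_3}{k_2}\Bigl(\frac{k_1k_2(1+k_2)}{k_3}\Bigr)^2 \leq \frac{k_3}{k_2}N^2 + \frac{k_1^2 k_2(1+k_2)^2}{k_3}.
\end{equation*}
Adding the two bounds yields exactly $k_5 + (k_3/k_2)N^2$ with $k_5 = k_1(1+k_2)\bigl[1+ k_1k_2(1+k_2)/k_3\bigr]$, matching the constant in \eqref{eq-omega-bkst-c}.

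With the negative definite estimate of $\dot V_{\delta \gamma}$ in hand, the remainder of the argument mirrors the proof of Theorem~\ref{thm:CLF_Globa}: the Lyapunov function \eqref{eq:CLF_GloBa_gamma_delta} admits $\mathcal{K}_\infty$ bounds in terms of $|(\delta,\gamma)|_{\mathcal{T}}$, and Lemma~4.3 of \cite{khalil_nonlinear_2002} yields $\dot V_{\delta\gamma}\leq -\alpha_3(V_{\delta\gamma})$ for some $\alpha_3\in\mathcal{K}$. Proposition~\ref{prop:composite_Lyap_function} then lifts this to a composite CLF $V(\rho,\delta,\gamma)$, producing $\dot V \leq -\alpha(V)$ on $\mathcal{S}$, and Lemma~4.4 of \cite{khalil_nonlinear_2002} yields the $\mathcal{KL}$ estimate required by Definition~\ref{def-our-GAS}.

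The main obstacle is the chain of inequalities controlling $C^2/N$; one must take care that all constants match exactly so that the conservative design constant $k_5$ truly absorbs the $(C^2/N)+|\psi|B$ overhead for every $(\delta,\gamma)\in\mathcal{T}$. Verifying the bound $|\psi|\leq 1$ globally (not only near $z=0$) from \eqref{eq:psi_definition} requires a brief trigonometric check using $|\sin(2z-2\gamma)+\sin(2\gamma)|\leq 2|\sin z|\,|\cos(z-2\gamma)|\leq 2|z|$, which I would include as a short lemma-free remark inside the proof rather than as a separate claim.
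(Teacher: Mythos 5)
Your proposal is correct and follows essentially the same route as the paper: starting from the bound \eqref{eq-V1dot-bkst1a}, applying the conservative estimates $|\psi|\le 1$, $B\le 1+k_2$, and $C^2/N \le 2N^2 + 2k_1^2k_2^2(1+k_2)^2/k_3^2$, and checking that the resulting constant equals $k_5$ so that substituting \eqref{eq-omega-bkst-c} yields \eqref{eq-V1dot-bkst1b}. Your version merely spells out the arithmetic and the global verification of $|\psi|\le 1$ (via the sum-to-product identity) that the paper leaves implicit, which is a welcome but not substantively different elaboration.
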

This corollary follows from \eqref{eq-V1dot-bkst1a}, by considering the conservative bounds
\begin{equation*}
|\psi| \leq  1,
\quad B \leq  1+ k_2,
\quad \frac{C^2}{N} \leq  2N^2 + 2 \frac{k_1^2 k_2^2 (1+k_2)^2}{k_3^2}\,,
\end{equation*}
which leads to
\begin{align}
\label{eq-V1dot-bkst1c}
\dot V_{\delta \gamma}
&\leq & -  k_1 k_2 \frac{\delta^2}{N}+ 2 q^2 z \left[\left( k_5 +\frac{k_3}{k_2}N^2\right) z - \omega
\right]
\,. 
\end{align}
Substituting \eqref{eq-omega-bkst-c} in \eqref{eq-V1dot-bkst1c},
we arrive at \eqref{eq-V1dot-bkst1b}. 
The feedback \eqref{eq-omega-bkst-c} is simpler and more intuitive—though more conservative and aggressive in steering—than \eqref{eq-omega-bkst-b}. Its nonlinear gain grows with $\delta^2$, meaning the feedback aligns the LoS angle more closely to $\gamma = -\tfrac{1}{2}\arctan(2 k_2 \delta)$ when the vehicle is farther from the negative $x$-axis, i.e., less directly behind the target.

\subsection{Backstepping to Avoid Running across Front Line (BAR-FLi) Controller}
\label{sec:LiBaC_BARFLi} 


\begin{theorem}[BAR-FLi CLFs] 
\label{thm:CLF_BARFLi}
Consider the system \eqref{eq:unicycle_polar_closed_loop-Gv-1} in closed-loop with \eqref{eq-basic-v-control}, \eqref{eq-omega-general}, and \eqref{eq-bkst-3}  with arbitrary $k_1, k_2, k_3, k_4 > 0$.
The point $\rho = \delta = \gamma = 0$ is GAS on $\mathcal{S}_2$ in accordance with Def.~\ref{def-our-GAS}.
Furthermore, all the composite Lyapunov functions $V(\rho, \delta, \gamma) = \mathcal{V}(\rho^2, V_{\delta \gamma})$ and $V(\rho, \delta, \gamma) = \mathcal{V}(V_{\delta \gamma},\rho^2)$, for all functions $\mathcal{V}$ satisfying the conditions in Proposition \ref{prop:composite_Lyap_function}, and with $V_{\delta \gamma}$ defined as
\begin{equation}
\hspace*{-0.34cm}
V_{\delta \gamma} (\delta, \gamma) = 4\tan^2\frac{\delta}{2} + q^2\left[  \gamma + \frac{1}{2}\arctan\left(4k_2\tan \frac{\delta}{2} \right)  \right]^2 \,, \label{eq:CLF_BARFLi_gamma_delta}
\end{equation}
with $q = \sqrt{k_1/k_3}$, are (globally) strict CLFs for \eqref{eq:unicycle_polar_closed_loop-Gv-1} w.r.t. the input pair $(v/\rho,\omega)$ in the sense of Def.~\ref{def-CLF}. 
\end{theorem}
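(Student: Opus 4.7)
The plan is to follow the same backstepping strategy as in Theorem \ref{thm:CLF_Globa}, but adapted to the barrier state space $\mathcal{T}_2$ by substituting $\Delta = 2\tan(\delta/2)$ in place of $\delta$ everywhere. First I would rewrite the $\delta$-subsystem \eqref{eq:unicycle_polar_closed_loop-Gv-3n-pass} in terms of $\Delta$, which yields $\dot\Delta = (1+\tan^2(\delta/2))(k_1/2)\sin(2\gamma)$; the ``unconventional bounded integrator chain'' structure is preserved because the $\sin(2\gamma)$ nonlinearity is still sandwiched between the two integrators, only scaled by the Jacobian $1+\tan^2(\delta/2)$.

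Next, I would take the stabilizing function $\alpha_0(\delta) = -\tfrac{1}{2}\arctan(2k_2\Delta) = -\tfrac{1}{2}\arctan(4k_2\tan(\delta/2))$ and the backstepping error $z = \gamma - \alpha_0(\delta)$, matching the offset that appears inside the squared term of \eqref{eq:CLF_BARFLi_gamma_delta}. Using the decomposition \eqref{eq-bkst-step1} together with the identity $\sin(\arctan(x)-y) = (x\cos y - \sin y)/\sqrt{1+x^2}$ employed in the GloBa proof, I would express $\tfrac{1}{2}\sin(2\gamma) = \psi(z,\gamma)z - k_2\Delta/N$ with $N(\delta) = \sqrt{1+4k_2^2\Delta^2} = \sqrt{1+16k_2^2\tan^2(\delta/2)}$, and then compute $\dot z$ from the chain rule via $\dot\gamma = -\tilde\omega$ and the expression for $\dot\Delta$.

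Differentiating $V_{\delta\gamma} = \Delta^2 + q^2 z^2$ along the transformed $(\Delta,z)$-dynamics, I would obtain a negative contribution $-2k_1 k_2 \Delta^2 / N$ from the stabilizing function together with a bracketed cross term multiplying $2q^2 z$. Substituting the BAR-FLi feedback \eqref{eq-bkst-3}, and recognizing that $1+16k_2^2\tan^2(\delta/2) = N^2$ and that $2(1+\tan^2(\delta/2))\tan(\delta/2) = (1+\tan^2(\delta/2))\Delta$, makes the feedback cancel the cross term exactly, leaving
\begin{equation*}
\dot V_{\delta\gamma} \;=\; -\frac{2k_1 k_2 \Delta^2}{N(\delta)} \;-\; 2q^2 k_4 z^2,
\end{equation*}
which is negative for all $(\rho,\delta,\gamma) \ne (0,0,0)$ in $\{\rho\geq 0\}\times \mathcal{T}_2$. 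With $V_{\delta\gamma}$ radially unbounded on $\mathcal{T}_2$ through the barrier term $4\tan^2(\delta/2)$, an invocation of Proposition \ref{prop:composite_Lyap_function} produces $\mathcal{V}(\rho^2,V_{\delta\gamma})$ and $\mathcal{V}(V_{\delta\gamma},\rho^2)$ as strict CLFs for \eqref{eq:unicycle_polar_closed_loop-Gv-1}, and then \cite[Lemmas 4.3--4.4]{khalil_nonlinear_2002} deliver the $\mathcal{KL}$-estimate certifying GAS on $\mathcal{S}_2$ as in Def.~\ref{def-our-GAS}.

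The main obstacle is bookkeeping the factor $1+\tan^2(\delta/2)$ that the barrier substitution injects into $\dot\Delta$ (and hence into the cross term in $\dot V_{\delta\gamma}$): one must verify that the BAR-FLi feedback \eqref{eq-bkst-3} carries exactly the matching $1+\tan^2(\delta/2)$ in both its $\sin(2\gamma)$-cancellation term and its $\psi\Delta$-damping term, and that the quadratic-in-$\Delta$ coefficient $1+16k_2^2\tan^2(\delta/2) = N(\delta)^2$ matches the denominator produced by the $\arctan$ derivative. Once this algebraic matchup is confirmed, the remainder of the argument is mechanical and parallels the GloBa proof verbatim.
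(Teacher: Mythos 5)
Your proposal is correct and follows essentially the same route as the paper: the backstepping transformation $z=\gamma+\tfrac{1}{2}\arctan(4k_2\tan\tfrac{\delta}{2})$, the substitution $\Delta=2\tan\tfrac{\delta}{2}$ with Jacobian $1+\tan^2\tfrac{\delta}{2}$, exact cancellation by \eqref{eq-bkst-3}, and the composite-CLF conclusion. The only slip is in your final expression: the $\Delta$-damping term is actually $-2k_1k_2(1+\tan^2\tfrac{\delta}{2})\Delta^2/N(\delta)$ (the Jacobian factor survives into $\dot V_{\delta\gamma}$), but since $1+\tan^2\tfrac{\delta}{2}\ge 1$ your weaker bound still certifies negativity and the conclusion is unaffected.
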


\begin{proof}
Consider the backstepping transformation
\begin{align}\label{eq:z_BARFLi}
z =  \gamma + \frac{1}{2}\arctan\left(4k_2\tan\frac{\delta}{2}\right) 
\,.
\end{align} 
Then, by \eqref{eq:psi_definition} and \eqref{eq:z_BARFLi}, we get
\begin{subequations}
\label{eq:BarFli_open_loop}
\begin{eqnarray}
\dot{\delta} &=&  -k_1 \left( \frac{2k_2 \tan(\delta/2)}{N(\delta)} - z \psi(z,\gamma)\right)\,, \\
\dot{z} &=& \frac{k_1 \sin(2\gamma)}{2}\frac{k_2/\cos^2(\delta/2)}{N^2(\delta)} -\tilde{\omega} \,, \label{eq:zdot_BARFLi}
\end{eqnarray}
\end{subequations}
with $N(\delta) \coloneqq \sqrt{1 + 16 k_2^2 \tan^2(\delta/2)}$.
The time derivative of \eqref{eq:CLF_BARFLi_gamma_delta} along \eqref{eq:BarFli_open_loop}, with $\tilde{\omega}$ as in \eqref{eq-bkst-3}, is
\begin{align}\label{eq:V1_dot_cl_BARFLi}
\dot{V}_{\delta \gamma } &= \frac{-8k_1k_2(1 + \tan^2\frac{\delta}{2})\tan^2\frac{\delta}{2}}{N(\delta)} - 2k_4q^2z^2 
\,,
\end{align}
which is negative for all $(\rho,\delta,\gamma) \ne (0,0,0)$ in $\{\rho \ge 0\} \times \mathcal{T}_2$.
Analogous to Thm.~\ref{thm:CLF_Globa}, the result of Thm.~\ref{thm:CLF_BARFLi} follows.
\end{proof}

\subsubsection*{Linear-in-Angles Backstepping Controller (LiBaC)}

Observe that the backstepping transformation \eqref{eq:z_BARFLi} uses an arctangent of a tangent which, for a proper choice of $k_2$, is a nearly linear operation. Considering this, one can instead take a linear backstepping transformation of the form
\begin{align}\label{eq:z_LiBaC}
z = \gamma + \frac{1}{2}\delta\,.
\end{align}
We design a controller based on \eqref{eq:z_LiBaC}, which we refer to as Linear-in-Angles Backstepping Controller (LiBaC, pronounced `lie back').
\begin{proposition}[LiBaC] \em
Consider
\eqref{eq:unicycle_polar_closed_loop-Gv-1} in closed-loop with \eqref{eq-basic-v-control} and 
\begin{equation}
\label{eq:omega_LiBaC}
\begin{aligned}
\omega = k_3z + \dfrac{3k_1}{4}\sin(2\gamma) +  k_2 \frac{\tan (\delta/2)}{1 + \cos \delta} \psi(z,\gamma) \,, \end{aligned} 
\end{equation}
with $k_1, k_2, k_3> 0$, and  $\psi(\gamma, z)$ and $z$ defined in \eqref{eq:psi_definition} and \eqref{eq:z_LiBaC}, respectively. Then, the point  $\rho = \delta = \gamma = 0$ is GAS on $\mathcal{S}_3$ in accordance with Def.~\ref{def-our-GAS}.
\end{proposition}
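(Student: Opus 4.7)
The plan is to follow the backstepping template of the proofs of Theorems~\ref{thm:CLF_Globa} and~\ref{thm:CLF_BARFLi}, but with the \emph{linear} transformation $z=\gamma+\tfrac{1}{2}\delta$ replacing the (arc)tangent-based ones. The first step is to note that, along \eqref{eq:unicycle_polar_closed_loop-Gv-1} with $v$ as in \eqref{eq-basic-v-control}, one has $\dot z=\dot\gamma+\tfrac12\dot\delta=\tfrac{3k_1}{4}\sin(2\gamma)-\omega$, so that the $\tfrac{3k_1}{4}\sin(2\gamma)$ summand in \eqref{eq:omega_LiBaC} is precisely the feedforward cancellation that reduces the $z$-dynamics to
\begin{equation*}
\dot z=-k_3\,z-k_2\,\frac{\tan(\delta/2)}{1+\cos\delta}\,\psi(z,\gamma).
\end{equation*}
The second step is to rewrite $\dot\delta$ via \eqref{eq-bkst-step1}: since $\gamma-z=-\delta/2$, the identity yields $\dot\delta=k_1\bigl[-\tfrac12\sin\delta+\psi(z,\gamma)\,z\bigr]$.

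Next I would propose the candidate Lyapunov function
\begin{equation*}
V_{\delta\gamma}(\delta,\gamma)=4\tan^2\tfrac{\delta}{2}+\frac{4k_1}{k_2}\,z^2,
\end{equation*}
namely, a barrier in $\delta$ plus a weighted square of the backstepping coordinate. Using $1+\tan^2(\delta/2)=2/(1+\cos\delta)$ and $\sin\delta=2\tan(\delta/2)\cos^2(\delta/2)$, the derivative of the first summand reduces to $-4k_1\tan^2(\delta/2)+8k_1\tfrac{\tan(\delta/2)}{1+\cos\delta}\psi\,z$. The weight $4k_1/k_2$ on $z^2$ is chosen so that the cross-term $8k_1\tfrac{\tan(\delta/2)}{1+\cos\delta}\psi\,z$ is exactly cancelled by the matching term in $\tfrac{8k_1}{k_2}z\dot z$, leaving
\begin{equation*}
\dot V_{\delta\gamma}=-4k_1\tan^2\tfrac{\delta}{2}-\frac{8k_1k_3}{k_2}\,z^2,
\end{equation*}
which is strictly negative away from $(\delta,z)=(0,0)$ and hence, by \cite[Lemma~4.3]{khalil_nonlinear_2002}, bounded above by $-\alpha(V_{\delta\gamma})$ for some $\alpha\in\mathcal{K}$.

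To upgrade this $(\delta,\gamma)$-subsystem CLF to one for the full system I would invoke Proposition~\ref{prop:composite_Lyap_function}: any admissible $\mathcal{V}$ from Corollary~\ref{cor:example_composite_CLFs} composed with $(\rho^2,V_{\delta\gamma})$ yields a strict CLF, and consequently a $\mathcal{KL}$-estimate on $|(\rho,\delta,\gamma)|$ via \cite[Lemma~4.4]{khalil_nonlinear_2002}, exactly as in the concluding arguments of Theorems~\ref{thm:CLF_Globa} and~\ref{thm:CLF_BARFLi}. This yields GAS of $\rho=\delta=\gamma=0$ in the sense of Definition~\ref{def-our-GAS}.

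The principal obstacle I anticipate is reconciling the domain with the claimed state space $\mathcal{S}_3$: the $z^2$ term in $V_{\delta\gamma}$ is only radially unbounded in $\gamma$ once $\delta$ is confined by the $\tan^2(\delta/2)$ barrier, and since \eqref{eq:omega_LiBaC} stays bounded as $|\gamma|\to\pi$, no barrier action is available to confine $\gamma$ to $(-\pi,\pi)$. The natural conclusion of the argument above is therefore GAS on $\mathcal{S}_2$; promoting it to $\mathcal{S}_3$ would appear to require replacing $\tfrac{4k_1}{k_2}z^2$ with a $\gamma$-barrier term such as $4q^2\tan^2(\gamma/2)$ and showing that the resulting cross-terms remain dominated, mirroring the BAgAl analysis in the proof of Theorem~\ref{thm:unicycle_CLF_Bagal}.
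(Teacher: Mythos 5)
Your proof is correct and follows essentially the same route as the paper's (much terser) argument: cancel the $\tfrac{3k_1}{4}\sin(2\gamma)$ feedforward term in $\dot z$, use \eqref{eq-bkst-step1} with $\gamma-z=-\delta/2$ to get $\dot\delta=-\tfrac{k_1}{2}\sin\delta+k_1\psi z$, and take a $\delta$-barrier plus a weighted $z^2$ whose weight is tuned so the cross terms cancel exactly; your constants are in fact self-consistent where the paper's sketch (which cites the BAR-FLi CLF \eqref{eq:CLF_BARFLi_gamma_delta} verbatim and mixes $q^2=k_1/k_3$ with a cross-term cancellation that requires $q^2=k_1/k_2$) is not. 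Your closing concern is well founded and not an obstacle you need to resolve: since the Lyapunov function has no barrier in $\gamma$ and \eqref{eq:omega_LiBaC} is smooth at $\gamma=\pm\pi$, the argument delivers GAS on $\mathcal{S}_2$, which is what the paper's own proof ("analogous to Thm.~\ref{thm:CLF_BARFLi}") and its subsequent remark that LiBaC "achieves the same basin of attraction as BAR-FLi" actually support; the "$\mathcal{S}_3$" in the proposition statement appears to be a slip, so you should simply state the conclusion on $\mathcal{S}_2$ rather than attempt the $\gamma$-barrier modification you sketch at the end.
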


\begin{proof}
The transformation \eqref{eq:z_LiBaC} yields $\dot\delta = -k_1/2 \sin\delta + k_1\psi(z,\gamma)z$, with $\delta \in (-\pi,\pi)$, and  $\dot{z} = \frac{3}{4}k_1\sin(2\gamma)- \omega$.
 Considering the Lyapunov function \eqref{eq:CLF_BARFLi_gamma_delta}, we obtain 
\begin{equation}\label{eq:Vdot_LiBaC}
\begin{aligned}[b]
\dot V_{\delta \gamma} = -k_1\tan^2\frac{\delta}{2} +2 q^2 z &\left[-\omega +\frac{3}{4}k_1\sin(2\gamma) \right.  \\ 
&\left. +k_2 \frac{\tan(\delta/2)}{1+\cos\delta} \psi(z,\gamma)
\right]\,.
\end{aligned}
\end{equation}
Then, choosing \eqref{eq:omega_LiBaC}
yields $\dot{V}_{\delta\gamma} = -k_1 \tan^2\frac{\delta}{2} -2k_1z^2$. The rest of the proof is analogous to that of Thm.~\ref{thm:CLF_BARFLi}.
\end{proof}

LiBaC achieves the same basin of attraction as BAR-FLi~\eqref{eq-bkst-3}, keeping $\abs{\delta} < \pi$, and avoids crossing the front-line of the target.


\section{ Local Eigenvalue Assignment}
\label{sec-linearization}



\begin{proposition}\em 
\label{prop:linearization_CLF_passivity_based}
The linearization for $\rho>0$ around the point $\rho =\delta = \gamma = 0$ of the closed-loop system~\eqref{eq:unicycle_polar_closed_loop-Gv-2},~\eqref{eq:unicycle_polar_closed_loop-Gv-3-pass} with the angular velocity \(\tilde{\omega}\) chosen from: Genova~\eqref{eq:angular_velocity_genova}, BolSa~\eqref{eq-control-bounded-in-gamma}, BoPa~\eqref{eq:Bopa_controller} or BAgAl~\eqref{eq-control-bounded-in-gamma-delta},
with gains $k_1, k_2, k_3 > 0$, is 
\begin{align}
\label{eq:unicycle_polar_closed_loop_l_passive}
\begin{bmatrix}
\dot{\rho}  \\
\dot{\delta} \\ 
\dot{\gamma} 
\end{bmatrix} = 
\begin{bmatrix}
-k_1 &0 &0 \\
0 &0 &k_1 \\
0 &-k_3 &-k_2
\end{bmatrix} \begin{bmatrix}
\rho \\ \delta \\ \gamma
\end{bmatrix}  \,.
\end{align}
Its eigenvalues 
$-p_1, -p_2,-p_3$ are such that
$k_1 = p_1 , 
k_2 = p_2 + p_3 ,
k_3  = {p_2 p_3}/{p_1}$.   
Under the condition for global stabilization $k_1 k_3 \ge k_2^2$ in Thms. \ref{thm:unicycle_CLF_polynomial},  \ref{thm:unicycle_CLF_BoLSA}, \ref{thm:unicycle_CLF_BoPA}, \ref{thm:unicycle_CLF_Bagal}, the linearization \eqref{eq:unicycle_polar_closed_loop_l_passive}
has one real pole $-p_1$, assigned with $k_1 = p_1$, 
and two conjugate-complex poles $-p_2, -\bar{p}_2$, assigned with $     k_2 = 2 \mathfrak{Re}\{p_2\}, 
k_3 = {\abs{p_2}^2}/{p_1}$.
\end{proposition}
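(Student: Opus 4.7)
The plan is to (i) compute the Jacobian of the closed-loop right-hand side at the origin for each of the four feedback laws, (ii) verify that all four produce the same linearization \eqref{eq:unicycle_polar_closed_loop_l_passive}, and (iii) read off the eigenvalue-to-gain relations from the resulting block-triangular matrix.

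First I would linearize the $\rho$- and $\delta$-equations in \eqref{eq:unicycle_polar_closed_loop-Gv-2} and \eqref{eq:unicycle_polar_closed_loop-Gv-3n-pass} once, independently of the choice of $\tilde\omega$: since $\cos^2\gamma\to 1$ and $\sin(2\gamma)/2 \approx \gamma$ near $\gamma=0$, one gets $\dot\rho \approx -k_1 \rho$ and $\dot\delta \approx k_1 \gamma$. The one step requiring actual work is showing that each of the four feedback laws \eqref{eq:angular_velocity_genova}, \eqref{eq-control-bounded-in-gamma}, \eqref{eq:Bopa_controller}, \eqref{eq-control-bounded-in-gamma-delta} linearizes to the same expression $\tilde\omega \approx k_3 \delta + k_2 \gamma$ at $(\delta,\gamma)=(0,0)$. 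I would verify this case by case using the elementary approximations $\sin\gamma \approx \gamma$, $\tan(\gamma/2) \approx \gamma/2$, $\tan(\delta/2)\approx \delta/2$, $\cos\gamma \to 1$, and $\mbox{sinc}(2\gamma)\to 1$: for Genova the approximation is immediate; for BoLSA and BAgAl, the barrier prefactors $\cos\gamma/(1+\tan^2(\gamma/2))^2$ tend to $1$; for BoPA and BAgAl, the superlinear factor $(1+\tan^2(\delta/2))\tan(\delta/2)$ linearizes to $\delta/2$, which is why a factor of $2$ appears explicitly in those feedbacks so that the first-order coefficient reduces to $k_3$. Assembling everything yields the claimed matrix in \eqref{eq:unicycle_polar_closed_loop_l_passive}.

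Next I would exploit the block-triangular structure of \eqref{eq:unicycle_polar_closed_loop_l_passive}: the characteristic polynomial factors as $(\lambda + k_1)(\lambda^2 + k_2 \lambda + k_1 k_3)$. The scalar factor pins down $-p_1 = -k_1$, and Vieta's formulas applied to the quadratic factor give $p_2 + p_3 = k_2$ and $p_2 p_3 = k_1 k_3 = p_1 k_3$, whence $k_3 = p_2 p_3 / p_1$. This establishes the first set of formulas for arbitrary (real or complex) $p_2,p_3$.

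Finally, for the complex-pole statement under the global stabilization condition $k_1 k_3 \geq k_2^2$, I would examine the discriminant of the quadratic factor, $k_2^2 - 4 k_1 k_3 \leq k_2^2 - 4 k_2^2 = -3 k_2^2 < 0$. Hence its two roots are a complex-conjugate pair $-p_2, -\bar p_2$, and $k_2 = p_2 + \bar p_2 = 2\mathfrak{Re}\{p_2\}$ and $k_3 = p_2\bar p_2 / p_1 = |p_2|^2/p_1$ follow by inspection. I do not anticipate any genuine obstacle; the only nontrivial bookkeeping is in the four small-angle expansions, which all collapse to the same linear form precisely because the barrier and $\mbox{sinc}$ factors become the identity at the target equilibrium.
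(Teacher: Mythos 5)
Your proof is correct and follows essentially the same route as the paper: the paper's own proof simply invokes Vi\`ete's formulae on the quadratic factor of the block-triangular characteristic polynomial and notes that $k_1k_3\ge k_2^2$ forces a negative discriminant. You additionally spell out the case-by-case small-angle linearization of the four feedback laws (which the paper leaves implicit), and those computations check out.
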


\begin{proof} The relationship between the eigenvalues $-p_i$ and the gains $k_i$ is obtained using Vi\`{e}te's formulae for the second-order portion of the characteristic polynomial of \eqref{eq:unicycle_polar_closed_loop_l_passive}. Real eigenvalues require $k_2^2 = 4k_1k_3$ (damping ratio $\ge 1$), which is violated when $k_1 k_3 \ge k_2^2$ (damping ratio  $\le 1/2$). 
\end{proof}

While the eigenvalues can be assigned to arbitrary values {\em locally}, as with the linear-in-$(\rho,\delta,\gamma)$ design by Astolfi \cite[Prop. 1]{astolfi1999exponential}, when a strict GAS result is desired, the condition $k_1  k_2 \ge k_2^2$ results in two eigenvalues being complex. 


\begin{proposition}\em 
\label{prop:linearization_CLF_forwarding_based}
The linearization for $\rho>0$ around the point $\rho =\delta = \gamma = 0$ of the closed-loop system~\eqref{eq:unicycle_polar_closed_loop-Gv-2},~\eqref{eq:unicycle_polar_closed_loop-Gv-3-pass} with the angular velocity \(\tilde{\omega}\)  chosen from: GloFo~\eqref{eq:GloFo} or BoFo~\eqref{eq:BoFo},
with gains $k_1, k_2, k_3 > 0$, is 
\begin{align}
\label{eq:unicycle_polar_closed_loop_l_forwarding}
\begin{bmatrix}
\dot{\rho}  \\
\dot{\delta} \\ 
\dot{\gamma} 
\end{bmatrix} = 
\begin{bmatrix}
-k_1 &0 &0 \\
0 &0 &k_1 \\
0 &-k_3 &-k_2 - \dfrac{k_1k_3}{k_2}
\end{bmatrix} \begin{bmatrix}
\rho \\ \delta \\ \gamma
\end{bmatrix}  \,.
\end{align}
Its eigenvalues 
$-p_1, -p_2,-p_3$ are such that
$k_1 = p_1 , 
2k_2 = p_2 + p_3 \pm \abs{p_2 - p_3} ,
k_3  = {p_2 p_3}/{p_1}$.  
\end{proposition}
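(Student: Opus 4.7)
The plan is to verify \eqref{eq:unicycle_polar_closed_loop_l_forwarding} by linearizing the closed-loop system \eqref{eq:unicycle_polar_closed_loop-Gv-2}, \eqref{eq:unicycle_polar_closed_loop-Gv-3-pass} with each of the two forwarding feedbacks \eqref{eq:GloFo} and \eqref{eq:BoFo} around the origin, and then to read off the gain-to-eigenvalue correspondence from the characteristic polynomial of the resulting matrix. The $\rho$-equation linearizes trivially to $\dot\rho = -k_1\rho$ since $\cos^2\gamma = 1 + O(\gamma^2)$, and the $\delta$-equation linearizes to $\dot\delta = k_1\gamma$ from $\sin(2\gamma) = 2\gamma + O(\gamma^3)$. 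Hence the only nontrivial calculation is the linearization of $\tilde\omega$, which in turn reduces to the linearization of the forwarding coordinate $\zeta$.

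For GloFo, since $\mbox{sinc}(0)=1$ and, from \eqref{eq:Si_function}, $\tfrac{{\rm d}}{{\rm d}\gamma}\mbox{Si}(2\gamma)\big|_{\gamma=0}=2$, a first-order expansion gives $\zeta = \delta + (k_1/k_2)\gamma + O(\gamma^3)$ and $\tilde\omega = k_2\gamma + k_3\zeta + O(\|(\delta,\gamma)\|^2)$. Substituting yields $\dot\gamma = -k_3\delta -\bigl(k_2 + k_1k_3/k_2\bigr)\gamma$ to first order. For BoFo, $\sin\gamma=\gamma + O(\gamma^3)$, $\cos\gamma = 1 + O(\gamma^2)$, and $(1+\tan^2(\gamma/2))^{-2} = 1 + O(\gamma^2)$, so the forwarding coordinate and the feedback both have the \emph{same} leading-order expansion as in the GloFo case, producing the identical linearization \eqref{eq:unicycle_polar_closed_loop_l_forwarding}. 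This common-linearization observation is the main conceptual point; the underlying algebra is routine Taylor expansion.

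The eigenvalue assignment follows from block-triangular structure: the $\rho$-block contributes the eigenvalue $-k_1$ immediately, giving $k_1=p_1$. The lower-right $2\times 2$ block has characteristic polynomial $\lambda^2 + (k_2 + k_1k_3/k_2)\lambda + k_1 k_3$, so by Vi\`{e}te's formulas the remaining eigenvalues $-p_2,-p_3$ satisfy $p_2 p_3 = k_1 k_3$ and $p_2+p_3 = k_2 + k_1k_3/k_2$. The first relation gives $k_3 = p_2 p_3/p_1$. Substituting into the second yields the quadratic $k_2^2 - (p_2+p_3)k_2 + p_2 p_3 = 0$, whose roots are $k_2 = \bigl[(p_2+p_3)\pm\sqrt{(p_2+p_3)^2 - 4p_2 p_3}\bigr]/2 = \bigl[(p_2+p_3)\pm|p_2-p_3|\bigr]/2$, i.e.\ $2k_2 = p_2+p_3\pm|p_2-p_3|$, matching the claimed formula.

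The only step that warrants care is the linearization of the forwarding variable in the BoFo case, because it is not immediately obvious that the two different nonlinear forwarding transformations produce the same Jacobian; but once one observes that both $\mbox{Si}(2\gamma)/2$ and $\sin\gamma$ have derivative $1$ at $\gamma=0$, equality of the linearizations is immediate. No stability side-condition analogous to $k_1k_3\ge k_2^2$ is required here, since the forwarding designs achieve GAS for arbitrary positive gains (Thms.~\ref{thm:CLF_GloFo}, \ref{thm:CLF_BoFo}); accordingly, the proposition does not restrict the location of $(-p_2,-p_3)$ in the left half-plane, and the $\pm$ simply reflects that either real pole may be identified with the ``$k_2$-branch'' of the assignment.
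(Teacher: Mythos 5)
Your proof is correct and follows essentially the same route as the paper, which simply invokes Vi\`{e}te's formulae for the second-order factor of the characteristic polynomial of \eqref{eq:unicycle_polar_closed_loop_l_forwarding}; your explicit verification that GloFo and BoFo share the same Jacobian is a detail the paper leaves implicit. One minor sharpening of your closing remark: the discriminant of the $2\times 2$ block is $(k_2-k_1k_3/k_2)^2\ge 0$, so the forwarding linearization always has \emph{real} poles $-k_2$ and $-k_1k_3/k_2$, which is why the $\pm$ appears and why complex $p_2,\bar p_2$ cannot be assigned by this design.
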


\begin{proof} The relationship between the eigenvalues $-p_i$ and the gains $k_i$ is obtained by using Vi\`{e}te's formulae for the second-order portion of the characteristic polynomial of \eqref{eq:unicycle_polar_closed_loop_l_forwarding}.
\end{proof}


\begin{proposition} \em
\label{prop:linearization_CLF_backstepping_based}
The linearization for $\rho > 0$ around the point $\rho =\delta = \gamma = 0$  of the closed-loop system~\eqref{eq:unicycle_polar_closed_loop-Gv-2} and~\eqref{eq:unicycle_polar_closed_loop-Gv-3-pass}, where the angular velocity  \(\tilde{\omega}\) is given by either  GloBa~\eqref{eq-bkst-1} or BAR-FLi~\eqref{eq-bkst-3},
with gains $k_1, k_2, k_3,k_4 > 0$, is 
\begin{equation}
\label{eq:unicycle_polar_closed_loop-l-back}
\hspace*{-0.25cm}
\begin{bmatrix}
\dot{\rho}  \\
\dot{\delta} \\ 
\dot{\gamma} 
\end{bmatrix} = 
\begin{bmatrix}
-k_1 &0 &0 \\
0 &0 &k_1 \\
0 &-(k_3 + k_2 k_4) &-(k_1 k_2 + k_4) 
\end{bmatrix} \begin{bmatrix}
\rho \\ \delta \\ \gamma
\end{bmatrix}\,.
\end{equation}
Its eigenvalues $-p_1$, $-p_2$, and $-p_3$ have negative real parts, with (w.l.o.g.) $p_3 \ge p_2$ if real. The four gains complete the eigenvalue assignment iff they 
satisfy 
$k_1 = p_1 ,\, 
k_1k_2 + k_4 = p_2 + p_3,\,
k_3 + k_2 k_4  = {p_2 p_3}/{p_1}$.
Infinitely many gains satisfy these conditions; for $\epsilon\in(0,\mathfrak{Re}\{p_2\})$ and $k_1=p_1$, one choice is
$k_2=(\mathfrak{Re}\{p_2\}-\epsilon)/p_1$, $k_4=\mathfrak{Re}\{p_3\}+\epsilon$, and
$k_3=[\epsilon^2+(p_3-p_2)\epsilon]/p_1$ when all eigenvalues are real, or $k_3=[\epsilon^2+(\mathfrak{Im}\{p_2\})^2]/p_1$ when $p_2=\bar p_3$ are complex conjugate.
\end{proposition}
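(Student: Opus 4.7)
The plan has three steps: linearize the closed-loop dynamics at the origin, read off the characteristic polynomial, and verify that the proposed $\epsilon$-parameterization satisfies the under-determined assignment equations.

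First, I would Taylor-expand the closed-loop right-hand sides around $(\rho,\delta,\gamma)=(0,0,0)$. The $\rho$-equation \eqref{eq:unicycle_polar_closed_loop-Gv-2} contributes $\dot\rho=-k_1\rho$ to first order, while \eqref{eq:unicycle_polar_closed_loop-Gv-3n-pass} gives $\dot\delta=(k_1/2)\sin(2\gamma)=k_1\gamma$ to first order. The crucial observation is that, for \emph{both} GloBa and BAR-FLi, the backstepping variable satisfies $z=\gamma+k_2\delta+\text{h.o.t.}$: for GloBa this follows from $\tfrac12\arctan(2k_2\delta)=k_2\delta+O(\delta^3)$, and for BAR-FLi from the composition $\tan(\delta/2)=\delta/2+O(\delta^3)$ followed by $\tfrac12\arctan(4k_2\cdot\delta/2)=k_2\delta+O(\delta^3)$. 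Moreover $\psi(0,0)=1$ by the property noted after \eqref{eq:psi_definition}, and every rational/trigonometric factor appearing in \eqref{eq-bkst-1} and \eqref{eq-bkst-3} (namely $(1+4k_2^2\delta^2)^{-1}$, $1+\tan^2(\delta/2)$, and $(1+16k_2^2\tan^2(\delta/2))^{-1}$) evaluates to $1$ at the origin. Hence, for either feedback,
\begin{equation*}
\tilde\omega \;\approx\; k_4(\gamma+k_2\delta)+k_1k_2\gamma+k_3\delta \;=\; (k_3+k_2k_4)\delta+(k_1k_2+k_4)\gamma,
\end{equation*}
so $\dot\gamma=-\tilde\omega$ matches the third row of \eqref{eq:unicycle_polar_closed_loop-l-back}.

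Second, since the matrix is block-triangular ($\rho$ decouples), its characteristic polynomial factors as
\begin{equation*}
(\lambda+k_1)\bigl[\lambda^2+(k_1k_2+k_4)\lambda+k_1(k_3+k_2k_4)\bigr].
\end{equation*}
Identifying the root $-p_1$ of the linear factor gives $k_1=p_1$, and Vi\`ete's formulas on the quadratic yield the stated relations $k_1k_2+k_4=p_2+p_3$ and $k_3+k_2k_4=p_2p_3/p_1$. Since $\{p_2,p_3\}$ is either a pair of negative reals or a complex-conjugate pair, both sums and products are real, so $\mathfrak{Re}\{p_2\}+\mathfrak{Re}\{p_3\}=p_2+p_3$.

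Third, with $k_1$ fixed, the two remaining relations constrain three unknowns $(k_2,k_3,k_4)$, so one degree of freedom remains, which is what allows a one-parameter family indexed by $\epsilon$. Substituting $k_2=(\mathfrak{Re}\{p_2\}-\epsilon)/p_1$ and $k_4=\mathfrak{Re}\{p_3\}+\epsilon$ into $k_1k_2+k_4$ reproduces $\mathfrak{Re}\{p_2\}+\mathfrak{Re}\{p_3\}$, so the sum equation is automatic. The product equation then \emph{defines} $k_3$ as
\begin{equation*}
k_3=\tfrac{1}{p_1}\bigl[p_2p_3-(\mathfrak{Re}\{p_2\}-\epsilon)(\mathfrak{Re}\{p_3\}+\epsilon)\bigr],
\end{equation*}
and a routine expansion specializes this to $(\epsilon^2+(p_3-p_2)\epsilon)/p_1$ in the real case (using $p_3\ge p_2$) and to $(\epsilon^2+(\mathfrak{Im}\{p_2\})^2)/p_1$ in the complex-conjugate case. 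The constraint $\epsilon\in(0,\mathfrak{Re}\{p_2\})$ secures $k_2>0$; $k_4>0$ is automatic from $\mathfrak{Re}\{p_3\}>0$; and both closed forms for $k_3$ are manifestly positive.

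The only nontrivial step is the first: one must verify that the two structurally different feedbacks \eqref{eq-bkst-1} and \eqref{eq-bkst-3} produce \emph{identical} linearizations, which reduces to checking that each of their nonlinear multipliers is normalized to unity at the origin together with $\psi(0,0)=1$. The rest is Vi\`ete plus a one-parameter substitution.
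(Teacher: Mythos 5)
Your proposal is correct and follows the same route the paper intends: linearize both feedbacks at the origin (noting $z\approx\gamma+k_2\delta$ and that all multipliers, including $\psi$, equal $1$ there), apply Vi\`ete's formulas to the decoupled $2\times 2$ block, and verify the $\epsilon$-parameterization by direct substitution. The paper's own proof is just the one-line remark ``by substitution of the gain values,'' so your writeup simply supplies the details that the authors omitted, including the useful check that GloBa and BAR-FLi share an identical linearization.
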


\begin{proof}
By substitution of the gain values. 
\end{proof}


\begin{figure}[t]
\centering
\includegraphics[width=.8\linewidth]{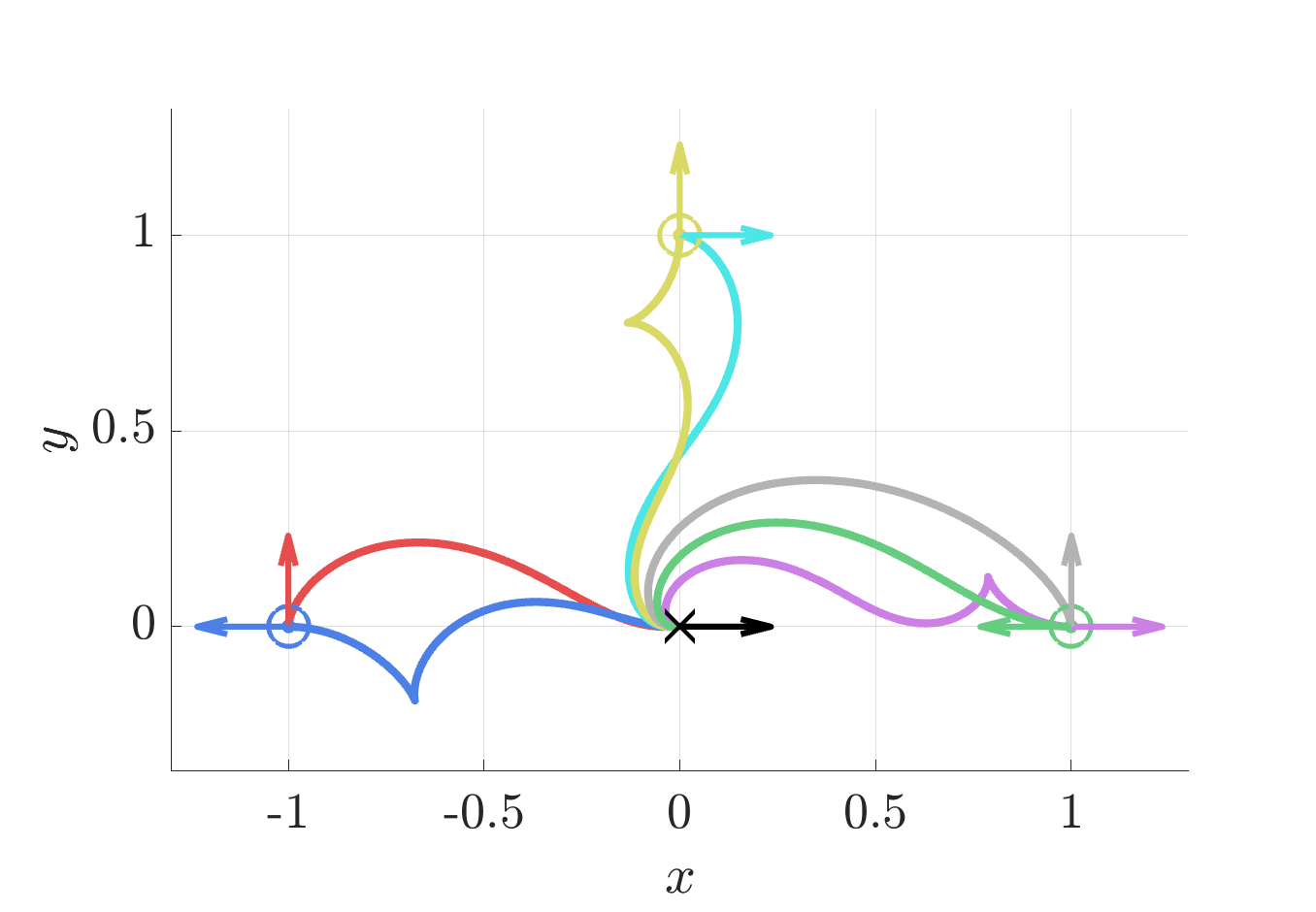}
\caption{Cartesian trajectories of GloBa~\eqref{eq-bkst-1} with unity gains.}
\label{fig:globa}
\end{figure}

\section{Comparing trajectories}
\label{sec:sims}
To validate the proposed control laws, we conducted numerical simulations comparing closed-loop trajectories under representative initial conditions, with the target position and heading angle shown in black. We first present the trajectories of GloBa~\eqref{eq-bkst-1} in Fig.~\ref{fig:globa}.
In Fig.~\ref{fig:barfli_abound}, we compare BAR-FLi~\eqref{eq-bkst-3} (blue) and BAgAl (cyan) against GloBa (red), where both BAR-FLi and BAgAl avoid crossing in front of the target, consistent with the intended effect of the barrier CLFs on the polar angle $\delta$.


\section{Conclusion}
We establish a modular design framework for  unicycle parking in polar coordinates, decoupling feedback design for distance and steering dynamics. By allowing bidirectional motion, for the steering subsystem we introduce a framework that incorporates all three of the most widespread nonlinear feedback methods---passivity, backstepping, and integrator forwarding---on both unconstrained and constrained state spaces. Our families of global strict barrier CLFs ensure global asymptotic stability with quantitative $\mathcal{KL}$-estimates of convergence rates and eigenvalues assignment at the target. Our modular method's composite Lyapunov function families pave the way for optimal and adaptive controllers in the paper's  Part II.

\begin{figure}[t]
\centering
\includegraphics[width=0.75\linewidth]{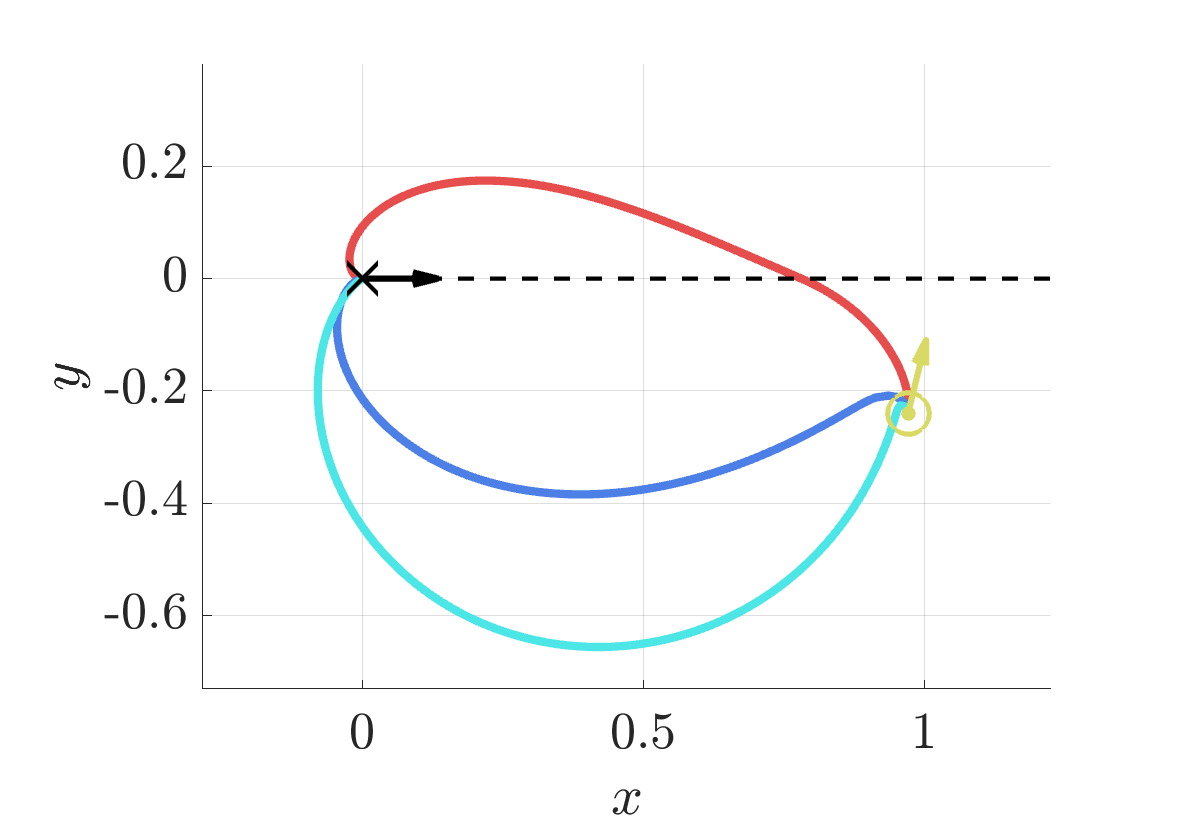}
\caption{Cartesian trajectories with BAR-FLi~\eqref{eq-bkst-3} (blue) and BAgAl~\eqref{eq-control-bounded-in-gamma-delta} (cyan) with gains $[k_1,k_2,k_3,k_4] = [1,1,0.1,1]$ compared to the GloBa~\eqref{eq-bkst-1} (red).}
\label{fig:barfli_abound}
\end{figure}

\vspace*{-0.5cm}

\section*{Appendix}

\renewcommand{\thedefinition}{A\arabic{definition}}
\renewcommand{\thelemma}{A\arabic{lemma}}

\begin{lemma}\label{lemma:sinterm_upperbound}
The following hold for all $k\geq 1$ and $x\in\mathbb{R}$:
\begin{eqnarray}
1-k \frac{\sin(2x)}{2x} &\le& k x^2 \label{eq:bound1}
\\ 
1-  
k \cos\gamma(1+\cos\gamma) &\leq& 2 (1 + k)  \tan^2\frac{\gamma}{2}\,. \label{eq:bound2}
\end{eqnarray} 
\end{lemma}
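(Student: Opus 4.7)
The plan is to handle the two inequalities separately via elementary real-analysis arguments, neither requiring the unicycle dynamics nor any of the preceding Lyapunov machinery.

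For \eqref{eq:bound1}, I first rewrite the target as $k\bigl(\mbox{sinc}(2x)+x^2\bigr)\geq 1$. The heart of the matter is the classical estimate $\sin y \geq y - y^3/6$ for $y\geq 0$, which is proven by setting $h(y)=\sin y - y + y^3/6$ and observing that $h'''(y) = 1-\cos y \geq 0$; integrating thrice from zero with $h(0)=h'(0)=h''(0)=0$ yields $h(y)\geq 0$. Dividing by $y>0$ and extending by evenness of $\mbox{sinc}$ gives $\mbox{sinc}(2x) \geq 1 - 2x^2/3$ for all $x\in\mathbb{R}$. Adding $x^2$ to both sides produces $\mbox{sinc}(2x) + x^2 \geq 1 + x^2/3 \geq 1$, so multiplying by $k\geq 1$ yields $k(\mbox{sinc}(2x)+x^2) \geq k \geq 1$, which rearranges to the claim.

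For \eqref{eq:bound2}, the natural move is the Weierstrass substitution $t=\tan(\gamma/2)$, which converts the transcendental inequality into an algebraic one via $\cos\gamma = (1-t^2)/(1+t^2)$ and $1+\cos\gamma = 2/(1+t^2)$. This yields $\cos\gamma(1+\cos\gamma) = 2(1-t^2)/(1+t^2)^2$. Multiplying through by $(1+t^2)^2>0$, the inequality becomes
\begin{equation*}
(1+t^2)^2 - 2k(1-t^2) \leq 2(1+k)\,t^2\,(1+t^2)^2.
\end{equation*}
Expanding both sides and subtracting, the $t^0$ coefficient is $2k-1$, the $t^2$ coefficients cancel exactly, the $t^4$ coefficient collapses to $4k+3$, and the $t^6$ coefficient is $2(1+k)$. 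Thus the deficit RHS$-$LHS equals $(2k-1)+(4k+3)t^4+2(1+k)t^6$, which is manifestly nonnegative for $k\geq 1$.

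Neither step presents a real obstacle; the only subtlety is recognizing that the $t^2$-coefficients cancel identically in \eqref{eq:bound2} (so one cannot drop the $t^4$ and $t^6$ terms), and that the global validity of $\mbox{sinc}(y)\geq 1-y^2/6$ in \eqref{eq:bound1} --- beyond the Taylor neighborhood of $0$ --- must be justified rather than invoked heuristically, which the $h'''\geq 0$ argument supplies cleanly.
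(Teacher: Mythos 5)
Your proposal is correct and follows essentially the paper's own (very terse) route: the paper omits \eqref{eq:bound1} as elementary and proves \eqref{eq:bound2} by exactly the Weierstrass substitution $\cos\gamma=\frac{1-\tan^2(\gamma/2)}{1+\tan^2(\gamma/2)}$ that you carry out in detail. Your explicit justification of the global bound $\mathrm{sinc}(y)\ge 1-y^2/6$ via the thrice-integrated $h'''\ge 0$ argument, and the verification that the $t^2$ coefficients cancel while the remaining coefficients are nonnegative for $k\ge 1$, simply supply the details the paper leaves out.
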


\begin{proof}
The derivation of \eqref{eq:bound1} is omitted as elementary and \eqref{eq:bound2} follows directly by substituting $\cos\gamma=\frac{1-\tan^2(\gamma/2)}{1+\tan^2(\gamma/2)}$.
\end{proof}

\section*{References}
\vspace{-0.6cm}
\bibliographystyle{IEEEtranS}
\bibliography{root}

\end{document}